\theoremstyle{plain}
\newtheorem{thm}{Theorem}[section]
\newtheorem{lem}[thm]{Lemma}
\newtheorem{prop}[thm]{Proposition}
\newtheorem{claim}[thm]{Claim}
\newtheorem{cor}[thm]{Corollary}
\newtheorem{conject}[thm]{Conjecture}
\theoremstyle{definition}
\newtheorem{ex}{Example}[section]
\theoremstyle{remark}
\newtheorem{rem}{Remark}[section]
\theoremstyle{observation}
\DeclareMathOperator{\tr}{tr}
\DeclareMathOperator{\tridiag}{tridiag}
\newcommand{\Tridiag}[4]{\tridiag\mat{#1 & #2 \\ #3}_{#4}}
\newcommand{\HRabi}[1]{H_{#1}} 
\newcommand{\cp}[2]{P^{(#1)}_{#2}} 
\newcommand{\N}{\mathbb{N}} 
\newcommand{\Z}{\mathbb{Z}} 
\newcommand{\Q}{\mathbb{Q}} 
\newcommand{\R}{\mathbb{R}} 
\newcommand{\C}{\mathbb{C}} 
\newcommand{\ibQRM}[1]{ibQRM$_{#1}$}
\newcommand{\Jker}[1]{\mathcal{K}_{#1}}
\DeclareMathOperator{\Spec}{Spec}
\newcommand{\mat}[1]{\begin{bmatrix}#1\end{bmatrix}}
\newcommand{\e}{\varepsilon}
\newcommand{\sign}{\rm{sign}}
\title{Degeneracy and hidden symmetry \\
-- an asymmetric quantum Rabi model with an integer bias}
\author{Cid Reyes-Bustos and Masato Wakayama}
\begin{document}

\maketitle

\begin{abstract}  

  The hidden symmetry of the asymmetric quantum Rabi model (AQRM) with a half-integral bias (\ibQRM{\ell}) was
  uncovered in recent studies by the explicit construction of operators $J_\ell$ commuting with the Hamiltonian.
  The existence of such symmetry has been widely believed to cause the degeneration of the spectrum, that is, the crossings
  on the energy curves. In this paper we propose a conjectural relation between the
  symmetry and degeneracy for the \ibQRM{\ell} given explicitly in terms of two polynomials appearing
  independently in the respective investigations. Concretely, one of the polynomials appears as the quotient of the constraint
  polynomials that assure
  the existence of degenerate solutions while the other determines a quadratic relation (in general, it defines a curve of
  hyperelliptic type) between the \ibQRM{\ell} Hamiltonian and its basic commuting operator $J_\ell$. 
  Following this conjecture, we derive several interesting structural insights of the whole spectrum.  
  For instance, the energy curves are naturally shown to lie on a surface determined by the family of hyperelliptic
  curves by considering the coupling constant as a variable. This geometric picture contains the generalization of the parity
  decomposition of the symmetric quantum Rabi model. Moreover, it allows us to describe a remarkable approximation of
    the first $\ell$ energy curves by the zero-section of the corresponding hyperelliptic curve.
  These investigations naturally lead to a geometric picture of the (hyper-)elliptic surfaces given by the
  Kodaira-N\'eron type model for a family of curves over the projective line in connection with the energy curves, which may be
  expected to provide a complex analytic proof of the conjecture. 

\textbf{Keywords:} 
Weyl algebra, hidden symmetry, degeneracy, constraint polynomials, Heun ODE, representation of $\mathfrak{sl}_2$, hyperelliptic curves, elliptic surfaces.

\,
\textbf{2020 Mathematics Subject Classification:} 
{\it Primary} 81Q10, {\it Secondary} 34L40, 81S05, 11G05.

\end{abstract}

\setcounter{tocdepth}{2}
\tableofcontents

\section{Introduction}
\label{sec:introduction}
  
Symmetry is a fundamental concept in mathematics and physics. In quantum physics, the presence of non-trivial operators commuting with the Hamiltonian of a system indicates the existence of quantities that are conserved under the time evolution of the system and is usually important for the solution of the Schr\"odinger equation associated to the Hamiltonian. Symmetries in the Hamiltonian system are also intimately related to the practical concept of integrability in quantum systems (see e.g. \cite{B2011, C2011, R1972}). 
The quantum Rabi model (QRM) \cite{R1936, JC1963, Ni2010} is one of the simplest models in quantum physics. It describes the interaction between a quantum harmonic oscillator and a two-level system. Despite its simplicity, the QRM exhibits enormous applications in areas such as quantum optics, solid state physics and quantum information theory (see e.g.  \cite{bcbs2016, Y2017, YS2018}). The focus of the present paper is the asymmetric quantum Rabi model (AQRM) obtained by adding the bias term (measured by a real number) to the QRM Hamiltonian.
In contrast with the QRM, which possesses an obvious $\Z_2=\{\pm 1\}$ symmetry, the AQRM Hamiltonian does not appear to have such a symmetry due to the presence of the bias term. Even without an obvious symmetry, it has been shown \cite{KRW2017} that the energy levels of the AQRM exhibit crossings when the bias term is given by an integer.

In the present paper we thus focus on the asymmetric quantum Rabi model with an integral bias parameter $\ell\; (\ell\in \Z)$, referred henceforth as {\it the integral biased quantum Rabi model} (\ibQRM{\ell}). Mainly, we study the so-called hidden symmetry of the \ibQRM{\ell} Hamiltonian (see e.g. \cite{A2020}), recently computed in \cite{MBB2020} for small values of the bias parameter, by relating it with the degeneracy of its spectrum. As one of the main contributions, we give a concrete realization of the commonly held expectation that the level crossings are in fact related to the existence of symmetry.

Let us start by briefly introducing the AQRM Hamiltonian and the role of the bias term. Recall that the Hamiltonian $\HRabi{\e}$ of the
AQRM ($\hbar=1$) is given by
\begin{equation}\label{eq:aH}
  \HRabi{\e} = \omega a^\dag a+\Delta \sigma_z +g\sigma_x(a^\dag+a) + \frac{\e}{2} \sigma_x,
\end{equation}
where $a^\dag$ and $a$ are the creation and annihilation operators of the bosonic mode,
i.e. $[a,\,a^\dag]=1$ and
\[
\sigma_x = \begin{bmatrix}
 0 & 1  \\
 1 & 0
\end{bmatrix}, \qquad \qquad
\sigma_z= \begin{bmatrix}
 1 & 0  \\
 0 & -1
\end{bmatrix}
\]
are the Pauli matrices, $2\Delta$ is the energy difference between the two levels,
$g$ denotes the coupling strength between the two-level system and the bosonic mode with frequency $\omega$
(subsequently, we set $\omega=1$ without loss of generality), and \(\e\) is a real number. 
In general, the AQRM actually provides a more realistic description of the circuit QED experiments employing flux qubits than the QRM itself \cite{Ni2010,Y2017,YS2018}. The Hamiltonian of the \ibQRM{\ell} corresponds to the case $\varepsilon = \ell \in \Z$ and, without loss of generality
we may assume $\ell \ge 0$.

The bias term $\frac\varepsilon2 \sigma_x$ was originally introduced to break the symmetry of the model and in particular, to eliminate the double degeneracy in the spectral curves. However, as mentioned above, when the bias-parameter $\varepsilon$ takes an integer value,
crossings in the spectral curves appear again as illustrated in Figure \ref{fig:EigencurvesAQRM} (see also \cite{HH2012, KRW2017} for more examples). This was observed first experimentally in \cite{LB2015JPA}, proved for $\ell = 1$ in \cite{W2016JPA} and clarified in full generality in \cite{KRW2017} by the study of the constraint polynomials arising from the Juddian, or equivalently quasi-exact solutions. Nevertheless, the proof of the existence of degeneracies in the AQRM for integer bias in \cite{KRW2017} did illuminate neither the nature nor the relation between degeneracies and an underlying possible symmetry on the Hamiltonian. We also remark that the spectral crossings have been shown to induce conical intersections in the energy landscape of the AQRM \cite{BLZ2015,LB2021b}, that is, in the surface determined by the spectrum of $\HRabi{\e}$ in the space $(g,\e,E)\in \R^3$.

\begin{figure}[h!]
  \centering
  \subfloat[$\varepsilon = 0.6$]{
    \includegraphics[height=6cm]{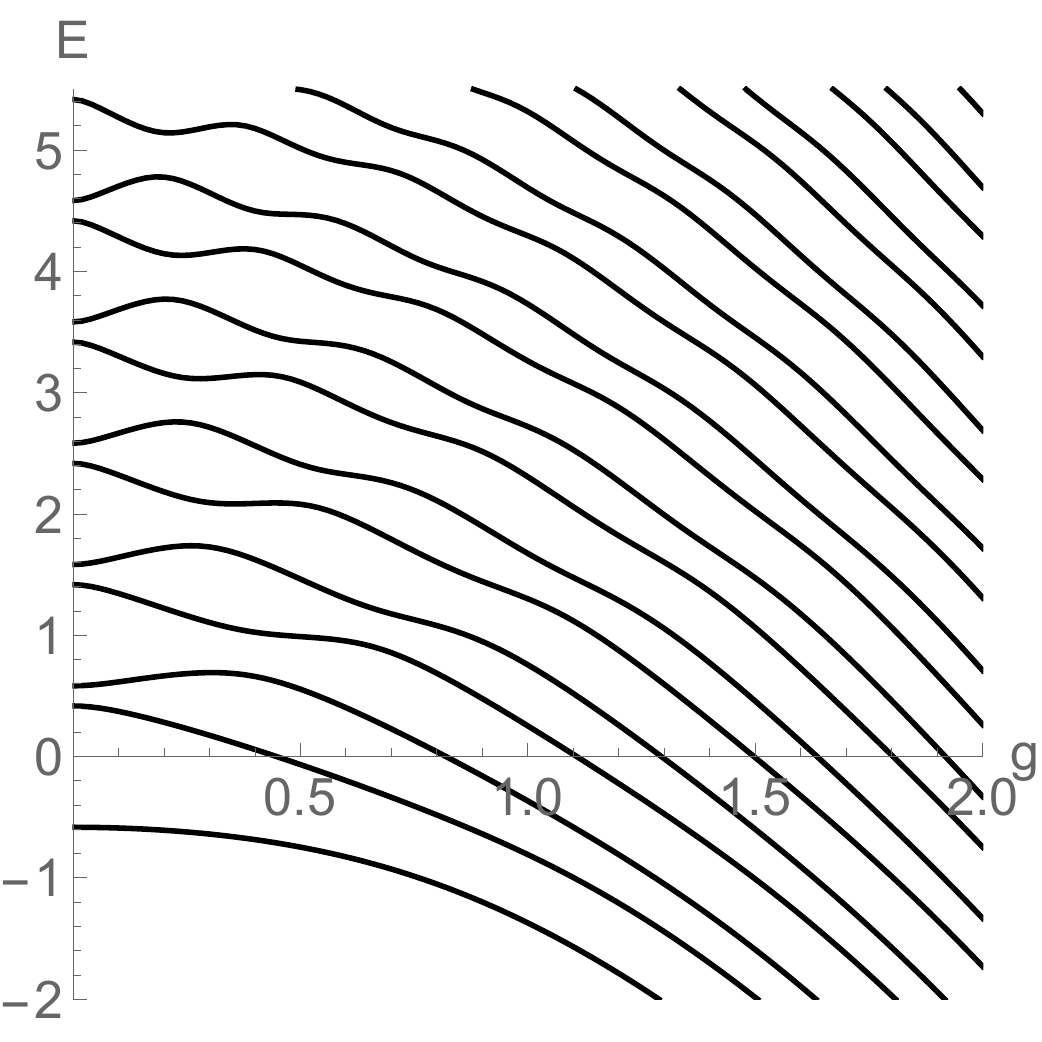}}
  ~ \qquad \qquad
  \subfloat[$\varepsilon = 1$]{
    \includegraphics[height=6cm]{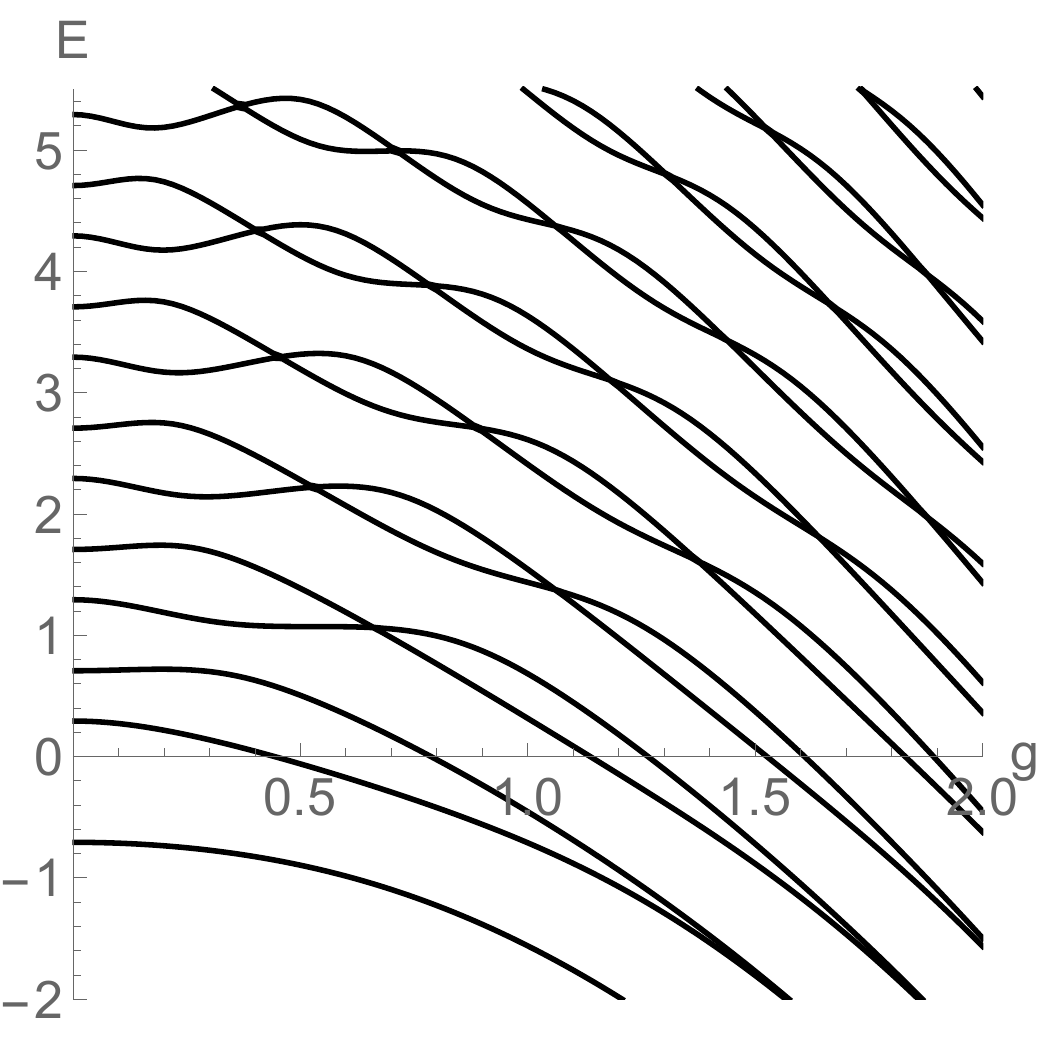}}
  \caption{Spectral curves of AQRM for $\Delta = \frac12$. Notice the crossings in the case $\varepsilon = 1$.}
  \label{fig:EigencurvesAQRM}
\end{figure}

In the symmetric QRM ($\ell =0$) case, the existence of a commuting involution operator leads directly to a decomposition
of the ambient Hilbert space $\mathcal{H}$ into two parity subspaces $\mathcal{H}_{\pm}$. Whenever there is a degeneracy on the spectrum
it may be verified that it occurs between one eigenvalue of each parity. This was one of the motivations for the search of a similar decomposition for the AQRM. For instance, it was shown in \cite{A2020} that any invariant decomposition necessarily depends on the system parameters and it follows that there is no uniform decomposition for the case $\ell \neq 0$. Recently, the hidden symmetry of the AQRM and generalizations has been widely discussed from different points of view \cite{B2019, GD2013,LB2020}.

A significant milestone in the search for the hidden symmetry of the \ibQRM{\ell} was the paper \cite{MBB2020} where the authors gave an algorithm to obtain an operator $J_{\ell}$ commuting with the Hamiltonian.
The algorithm, unfortunately, does not result in a general expression for the operator $J_\ell$, however, it provides the basic guidance
to further unveil the hidden symmetry. For the reference of the reader, in Appendix \ref{sec:expl-expr-j_ell} we give the explicit
expression of $J_{\ell}$ for small values of $\ell$.

Shortly after, in \cite{RBW2021} the basic properties of the operator $J_{\ell}$ were proved, including the fundamental property that
$J_{\ell}$ satisfies
\begin{equation}
  \label{eq:Jquad}
  J_{\ell}^2 = p_{\ell}(\HRabi{\ell}; g,\Delta), 
\end{equation}
for some polynomial $p_{\ell}(x; g ,\Delta)$ of degree $\ell$, first noted in \cite{MBB2020} for small values of $\ell$.
In \cite{MBB2020}, and also in \cite{RBW2021} based in geometric considerations, it was argued that the quadratic relation \eqref{eq:Jquad} represents the fact that the operator $J_{\ell}$ gives rise to a $\Z_2$ symmetry (parity).
We note that, however, there does not appear to be an straightforward way to describe the invariant decomposition (expressed as a ``parity'') of the Hilbert space induced by $J_\ell$, if one exists.
We also mention that the method of \cite{MBB2020} for the computation of the hidden symmetry has been extended to other systems with bias terms (see \cite{LLMB2021} for the biased Dicke model and \cite{LLMB2021a} for other models related to the QRM). 

In this paper, we initiate an algebro-geometric approach to the study of the hidden symmetry based on the relation  \eqref{eq:Jquad}.
Notably, we propose a conjecture giving a direct relation between \eqref{eq:Jquad} and the degeneracy of the spectrum of \ibQRM{\ell} by the polynomials appearing independently in each study. Concretely, the polynomials $p_{\ell}(x; g, \Delta)$ in \eqref{eq:Jquad} are, up to a change of variable, equal to the polynomials $A_{\ell}^{N}((2g)^2,\Delta^2)$ appearing as quotients of the two constraint polynomials controlling the degeneracy \cite{KRW2017}. In \S\ref{sec:preliminary-results} we recall the background materials on symmetry and degeneracy for the \ibQRM{\ell} and in \S\ref{sec:constr-polyn} we give the precise statement of the main conjecture (Conjecture \ref{conject:main}) and some of its immediate consequences.

Further consequences of the main conjecture are discussed in \S\ref{sec:symm-decomp}. In particular, in Theorem \ref {thm:detexpP}, under the assumption of the main conjecture, we give a determinant expression for $p_{\ell}(x; g,\Delta)$ by a matrix that controls the degeneracy \cite{KRW2017}. In fact, the determinant expression for $p_{\ell}(x; g,\Delta)$ turns out to be equivalent to the main conjecture. Moreover, the conjecture also allows a deeper understanding of the kernel $\mathcal{K}_\ell$ of the operator $J_\ell$, which is related to the question of the existence of invariant (parity) decompositions. 

In \S\ref{sec:approximation} we initiate the study of the spectrum of the \ibQRM{\ell} by the associated geometric structure provided by the quadratic relation \eqref{eq:Jquad}. Concretely, associated to the relation \eqref{eq:Jquad} between $J_\ell$ and $\HRabi{\ell}$, the equation 
\begin{equation}
  \label{eq:introelliptic}
  y^2=p_\ell(x;g,\Delta)
\end{equation}
defines in general a hyperelliptic curve (see e.g. \cite{M1984, S1994}). The curve \eqref{eq:introelliptic}, by definition, contains the joint eigenvalues $(\lambda,\mu_\lambda)$ of the operators $H_{\ell}$ and $J_{\ell}$. Similarly, we may consider the algebraic surface
\[
  \mathcal{S}_\ell = \left\{ (x,y,g) \in \R^3\,|\, y^2=p_\ell(x;g,\Delta)  \right\},
\]
giving rise to a generalization of the usual spectral curves.  In \S\ref{sec:parity} we describe how the geometric picture gives a natural setting for the parity decomposition for the case of the QRM (see Figure \ref{fig:3dQRM}(a)) and a generalization for the case of $\ell>0$.
Actually, the surface $\mathcal{S}_\ell$ draws a resolution of singularities for the degenerate crossing point of the spectral curves.

Moreover, another numerical result worth mentioning is that there is a remarkable approximation of the first $\ell$ spectral curves by the $(x,g)$-plane curves defined by $p_\ell(x;g,\Delta)=0$, the section of the algebraic surface $\mathcal{S}_\ell$ by $y=0$. 
This was first observed in \cite{RBW2021} and in \S\ref{sec:approximation} we discuss the approximation in light of the geometric picture of the spectrum. The nature of the approximation remains largely mysterious. For instance, it is not known whether there are crossings between the three curves in the $(g,x)$-plane: spectral (energy) curves $E=\lambda(g, \Delta)$, energy baselines $E=N-g^2+\frac\ell2$ and curves defined by $p_\ell(E; g,\Delta)=0$ (see Figure \ref{fig:Eigencurves3}).

On the topic of approximation of the eigenvalues of the AQRM by polynomials, it is relevant to mention the adiabatic approximation (AA) that approximates the eigenvalues using Laguerre polynomials, obtained by considering the exact solutions in the extremal case $\Delta \to 0$. In the recent paper \cite{LB2021b}, the authors present a generalized adiabatic approximation (GAA) for the AQRM where the Laguerre polynomials are replaced by the constraint polynomials. The GAA gives better agreement with the exact values than the AA for a large family of parameters $g,\Delta>0$ and, by construction always agrees on the degenerate points (see also Remark \ref{rem:gaa}). The GAA in \cite{RBW2021} is, however, only valid for eigenvalue curves that contains crossings.
Since the polynomial $p_\ell(x;g,\Delta)$ is given by the ratio of two constraint polynomials if the aforementioned conjecture holds, 
by combining the approximation of the first $\ell$ eigenstates by the curves $p_\ell(x;g,\Delta)=0$ and the GAA in the $(x,g)$-plane, we might say that the constraint polynomials know the essential features of the shape of the spectral curves of \ibQRM{\ell}. It is indeed quite remarkable that the existence of degenerate points determines  the spectral structure of the system to a great extent. 

In \S~\ref{sec:hyperelliptic}, to explore another aspect of the geometric picture, we consider the surface $\mathcal{S}_\ell$ as
a (hyper)-elliptic surface. The additional structure obtained in this setting is expected to contribute to the proof of the Conjecture \ref{conject:main} (via the equivalent conditional Theorem \ref{thm:detexpP}). For the case $\ell=3$ we briefly discuss the algebraic surface by means of the Kodaira-N\'eron model of elliptic surfaces.
Since the generic fiber of the surface turns out to have a particularly simple expression, we easily describe the singular fibers of the model and its type in the Kodaira classification. In this setting, the degeneracy of eigenvalues may be reworded in terms of the group
operator in the generic fiber as we briefly discuss in \S \ref{sec:dens-judd-solut} related to the distribution of Juddian points.
Finally, in \S\ref{sec:addit-remarks-degen} we describe how the divisibility of constraint polynomials at the Juddian points (degeneracy of the spectrum for the AQRM) resembles the study \cite{CZ2010, RTW2021} of the divisibility of polynomials and degeneracy of integral points for curves on surfaces along a certain blow-up. In Diophantine geometry, as it is pointed out in \cite{RTW2021}, divisibility conditions are connected to the celebrated Vojta's conjectures in many ways. Therefore, the discussion in \S\ref{sec:addit-remarks-degen} suggests a potential connection of the spectrum of the \ibQRM{\ell} with problems in Diophantine approximation and arithmetic geometry (see e.g. \cite{HS2000}).

\section{Revisiting the hidden symmetry of the \ibQRM{\ell}}
\label{sec:preliminary-results}

We denote by $\C[a,a^{\dag}]$ the Weyl algebra generated by the elements $a$ and $a^{\dag}$ and by ${\rm Mat}_2(\C[a,a^{\dag}])$ the $2\times2$ matrix algebra over $\C[a,a^{\dag}]$ . The degree of a monomial $a^k (a^\dag)^\ell \in \C[a,a^{\dag}]$ is defined to be $k+\ell$ and for a general element $\mathfrak{f} \in \C[a,a^{\dag}]$ the degree is defined to be the maximum degree of the monomials appearing in $\mathfrak{f}$.

We denote the \ibQRM{\ell} Hamiltonian by
\begin{equation*}
      \HRabi{\ell} =  a^\dag a+\Delta \sigma_z +g\sigma_x(a^\dag+a) + \frac{\ell}{2} \sigma_x, \quad (\ell\in \Z).
\end{equation*}
The system parameter $\Delta >0 $ is taken as fixed and we consider spectral curves (energy curves) with respect to the variation
of the parameter $g>0$.

The main result of \cite{RBW2021} (and \cite{MBB2020})  is the existence\footnote{Note that in \cite{RBW2021} the operator $J_{\ell}$ (resp. the Hamiltonian $H_{\ell}$) was denoted by $J_{\frac{\ell}{2}}$ (resp. $H_{\tfrac{\ell}{2}}$).} of a self-adjoint operator $J_{\ell}$ such that
\[
  [\HRabi{\ell},J_{\ell}] = 0,
\]
that is, it is a symmetry of the Hamiltonian $\HRabi{\ell}$. The operator $J_{\ell}$ has the form
\[
  J_{\ell} = \mathcal{P} Q_0^{(\ell)},
\]
where $\mathcal{P} = \exp(i \pi a^\dag a)$ is the photon number parity operator and $Q_0^{(\ell)} \in {\rm Mat}_2(\C[a,a^{\dag}])$.
The components of $Q_0^{(\ell)}$ have degree $\ell$ as polynomials on $a$ and $a^\dag$. Moreover, $Q_0^{(\ell)}$ is the matrix with components
of least degree satisfying this condition and any such matrix of larger degree is a multiple of $Q_0^{(\ell)}$. We also note that a
general expression for the components of $Q_0^{(\ell)}$ for general $\ell$ is not known currently.

The property of the operator $J_{\ell}$ that is more relevant to the current study is the quadratic equation
\begin{equation} \label{eq:Jquad2}
  J_{\ell}^2 = p_{\ell}(\HRabi{\ell};g,\Delta),
\end{equation}
where $p_{\ell} \in \R[x,g,\Delta]$ is a polynomial of degree $\ell$. As mentioned in the introduction, there is a relation between this
quadratic equation and the $\Z_2$-symmetry, we refer the reader to \S \ref{sec:approximation} and the aforementioned papers
for the details.

An immediate consequence of \eqref{eq:Jquad2} is that the eigenvalues of $J_{\ell}$ are of the form
\begin{align}
  \label{eq:mu}
  \mu = \pm  \sqrt{p_{\ell}(\lambda; g ,\Delta)},
\end{align}
for some eigenvalue $\lambda \in \Spec(\HRabi{\ell})$. In general, it is not easy to determine the sign of the eigenvalue (see \S~\ref{sec:sign-eigenv-j_yell} for the case $g=0$). Moreover, since $J_{\ell}$ is a self-adjoint operator,
\begin{equation}
  \label{eq:nonneg}
  p_{\ell}(\lambda;g,\Delta) \geq 0,
\end{equation}
holds for all eigenvalues $\lambda \in \Spec(\HRabi{\ell})$.

At this point it is not possible to discount the possibility that equality in \eqref{eq:nonneg} may hold, equivalently, that the
operator $J_{\ell}$ has a nontrivial kernel that we denote by $\Jker{\ell}= \Jker{\ell}(g, \Delta) \subset \HRabi{\ell}$. As we see later, the existence
of a non-trivial kernel $J_{\ell}$ is related to the potential existence of a parity decomposition for the Hamiltonian
$H_{\ell}$.
Note that since $\Delta$ is fixed, the kernel $\Jker{\ell}$ depends on the parameter $g >0$, and the same holds for the Hamiltonians $H_\ell$ and $J_{\ell}$. In general, we omit the dependence on the notation. 

As mentioned in the introduction, the eigenvalues curves of $\HRabi{\ell}$ may contain degenerate eigenvalues
(see also \S \ref{sec:dens-judd-solut}). For $g>0$, let $\lambda$ be a degenerate eigenvalue and denote by $V_{\lambda}$ the corresponding eigenspace
under $\HRabi{\ell}$. Since $\lambda$ is doubly degenerated, the dimension of $V_{\lambda}$ is exactly $2$. 

\subsection{Degenerate eigenspaces in the Bargmann picture}
\label{sec:degen-eigensp}

To investigate the action of $J_{\ell}$ on the degenerate eigenspace $V_\lambda$, it is convenient to consider a realization of the operators
$a$ and $a^\dag$ in a particular Hilbert space. For the analysis of the solutions of the QRM and its generalizations, including the
\ibQRM{\ell},  the Bargmann space realization is frequently used (see e.g. \cite{Sc1967AP} and \cite{B2011,KRW2017} for the
case of the QRM). The Bargmann space $\mathcal{B}$ is the space of entire functions finite with respect to the norm induced by the
inner product
\[
  (f|g)= \frac1\pi \int_{\C} \overline{f(z)}g(z)e^{-|z|^2}d(\Re(z))d(\Im(z)).
\]
In this realization, the operators $a$ and $a^\dag$ are mapped to
\[
  a \to \partial_z := \tfrac{d}{d z}, \qquad a^\dag \to z,
\]
and thus, we consider the Hamiltonian $\mathcal{H_\ell}$ and the operator $J_\ell$ as operators acting on
$\mathcal{H} = \mathcal{B}\otimes \C^2$.

It is widely known that any degenerate eigenvalue $\lambda$ of the \ibQRM{\ell} is of the form $\lambda = N \pm \frac{\ell}{2} - g^2$ for some integer
$N\geq 1$ (in general, eigenvalues of that form are called ``exceptional''). Moreover, the degenerate eigenspace $V_\lambda$ consists of
Juddian (quasi exact) solutions, that is, solutions that consist of a polynomial multiplied by an exponential factor. We refer
the reader to \cite{KRW2017} (see also \cite{LB2015JPA,W2016JPA}) for a detailed discussion on exceptional solutions and degeneracy for the
AQRM.

We now give the explicit form of a basis of the degenerate eigenspace $V_\lambda$ with $\lambda = (N +\ell)- \frac{\ell}{2} - g^2 = N + \frac{\ell}{2} - g^2$.
First, we have an eigenfunction $\Psi^{(N,\ell)}= (\psi^{(N,\ell)}_1,\psi^{(N,\ell)}_2)^t$ with components given by
\begin{align*}
  \psi_1^{(N,\ell)}(z) & = e^{-g z}\left( \frac{4g^2 K_{N-1}^{(N,\frac{\ell}{2})}}{\Delta} \frac{(g+z)^N}{(2g)^N} - \Delta \sum_{n=1}^{N-1} \frac{K_n^{(N,\frac{\ell}{2})}}{n-N} \frac{(g+z)^n}{(2g)^n} \right) \\
  \psi_2^{(N,\ell)}(z) & = e^{-g z} \sum_{n=1}^{N-1} K_n^{(N,\frac{\ell}{2})} \frac{(g+z)^n}{(2g)^n},
\end{align*}
and another eigenfuction $\Phi^{(N+\ell,-\ell)} = (\phi^{(N+\ell,-\ell)}_1,\phi^{(N+\ell,-\ell)}_2)^t$ with components
\begin{align*}
  \phi_1^{(N+\ell,-\ell)}(z) & = e^{g z} \sum_{n=1}^{N+\ell-1} K_n^{(N+\ell,-\frac{\ell}{2})} \frac{(g+z)^n}{(2g)^n} \\
  \phi_2^{(N+\ell,-\ell)}(z) & = e^{g z} \left( \frac{4g^2 K_{N+\ell-1}^{(N+\ell,-\frac{\ell}{2})}}{\Delta} \frac{(g+z)^{N+\ell}}{(2g)^{N+\ell}} - \Delta \sum_{n=1}^{N+\ell-1} \frac{K_n^{(N+\ell,-\frac{\ell}{2})}}{n-N+\ell} \frac{(g+z)^n}{(2g)^n} \right).
\end{align*}

In both cases, the coefficients are defined as
\begin{align*}
  K_0^{(N,\tfrac{\varepsilon}{2})} &= 1,\\
  K_1^{(N,\tfrac{\varepsilon}{2})} &=  2g + \frac{1}{2g}\left( -N -\epsilon + \frac{\Delta^2}{N} \right) \\
  K_n^{(N,\tfrac{\varepsilon}{2})} &= \frac{1}{n}\left( 2g + \frac{1}{2g}\left(n-1 -N -\varepsilon + \frac{\Delta^2}{N-n+1}\right) \right) K_{n-1}^{(N,\tfrac{\varepsilon}{2})} - \frac{1}{n} K_{n-2}^{(N,\tfrac{\varepsilon}{2})},
\end{align*}
for $n\ge2$ and $\varepsilon \in \R$.

In the discussion above, we have assumed the existence of the degenerate eigenvalue $\lambda$.
In general, the conditions on the parameters $g,\Delta\geq0$ for the existence of the eigenvalue $\lambda = N + \tfrac{\ell}{2}-g^2$, and therefore,
the existence of the Juddian solutions $\Psi^{(N)}$ and $\Phi^{(N+\ell)}$, are given by the constraint conditions
\begin{align}
  \label{eq:constraint}  K_{N}^{(N,\tfrac{\ell}{2})} = 0 \quad \text{and} \quad
  K_{N+\ell}^{(N+\ell,-\frac{\ell}{2})} = 0,
\end{align}
respectively. These constraint conditions are usually expressed in an equivalent polynomial form
  \[
    \cp{N,+ \frac{\ell}{2}}{N}((2g)^2,\Delta^2)=0 \quad \text{and} \quad \cp{N+\ell,- \frac{\ell}{2}}{N+\ell}((2g)^2,\Delta^2)=0.
  \]
Here, the constraint
polynomial $\cp{N,\tfrac{\varepsilon}{2}}{N}(u,v)$ is the $N$-th polynomial of the family $\cp{N,\tfrac{\varepsilon}{2}}{k}(u,v)$
defined by three-term recurrence relation
\begin{align*}
  \cp{N,\tfrac{\varepsilon}{2}}{0}(u,v) &= 1, \\
  \cp{N,\tfrac{\varepsilon}{2}}{1}(u,v) &= u + v - 1 - \varepsilon, \\
  \cp{N,\tfrac{\varepsilon}{2}}{k}(u,v) &= (k u + v - k(k + \varepsilon) ) P_{k-1}^{(N,\tfrac{\varepsilon}{2})}(u,v) - k(k-1)(N-k+1) u P_{k-2}^{(N,\tfrac{\varepsilon}{2})}(u,v),
\end{align*}
for \(k \geq 2 \).

The fact that for the \ibQRM{\ell} any Juddian solution $\lambda=N+\tfrac{\ell}{2}-g^2$ is degenerate implies that the two constraint conditions in \eqref{eq:constraint} have the same positive roots, as was shown in \cite{KRW2017}. This result is essential for further developments and we continue this discussion in \S \ref{sec:constr-polyn}.

\begin{ex} \label{ex:cPoly}
  For \(N=2,4\), we have the following constraint polynomials
  \begin{align*}
    \cp{2,\frac{\e}{2}}{2}(u,v) &= 2 u^2 + 3 u v + v^2 - 4(2+ \e) u - (3 \e +5)v + 2 (\e+1) (\e+2) \\
    \cp{4,-\frac{\e}{2}}{4}(u,v) &= 24 u^4 + 50 u^3 v + 35 u^2 v^2 + 10 u v^3 + v^4 \\
                               &+ 96 (\e-4) u^3  +2 (75 \e-271) u^2 v +10 (7 \e-23) u v^2+10 (\e-3) v^3 \\
                               &+144 (\e-4) (\e-3) u^2 + 2 \left(75 \e^2-473 \e+722\right) u v+\left(35 \e^2-200 \e+273\right) v^2 \\
                               &+96 (\e-4) (\e-3) (\e-2) u + \left(50 \e^3-404 \e^2+1030 \e-820\right) v +24 (\e-4) (\e-3) (\e-2) (\e-1).
  \end{align*}
\end{ex}

\subsection{Action of $J_\ell$ on degenerate eigenspaces}
\label{sec:action-j_ell}

Let us now consider the action of $J_{\ell}$ on eigenfunctions of a degenerate eigenspace $V_{\lambda}$. Since $\HRabi{\ell}$ and $J_{\ell}$ commute,
the subspace $V_{\lambda}$ is a $J_{\ell}$-invariant subspace. There are two possibilities, one is that $V_{\lambda}$ is also an eigenspace of $J_{\ell}$
for a unique eigenvalue $\mu$, or that $V_{\lambda}$ decomposes as
\[
  V_{\lambda} = \C \phi_{1} \oplus \C \phi_{2},
\]
where $\phi_{i}$ are eigenfunctions of $J_{\ell}$ corresponding to different eigenvalues $\mu_i$ ($i=1,2$). Necessarily, by \eqref{eq:mu} we
must have $\mu_{1} = -\mu_{2}$.

\begin{thm} \label{thm:decomposition}
  Let $V_{\lambda}$ be the eigenspace for a degenerate eigenvalue $\lambda$ of $\HRabi{\ell}$. Then, the space $V_{\lambda}$ decomposes as
  \[
    V_{\lambda} = \C \phi_{\mu} \oplus \C \phi_{-\mu},
  \]
  where $\phi_{\mu}$ (resp. $\phi_{-\mu}$) is an eigenfuction of $J_{\ell}$ with real eigenvalue $\mu$ (resp. $-\mu$). Moreover, we have
  $\mu^2 = p_{\ell}(\lambda; g, \Delta)$. 
\end{thm}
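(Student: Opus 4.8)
The plan is to reduce everything to the restriction $J:=J_\ell|_{V_\lambda}$ of $J_\ell$ to the two-dimensional invariant subspace $V_\lambda$ and to analyse it as a $2\times 2$ self-adjoint operator. Since $J_\ell$ commutes with $\HRabi{\ell}$, the subspace $V_\lambda$ is $J_\ell$-invariant, and the quadratic relation \eqref{eq:Jquad2} restricts to $J^2 = p_\ell(\lambda;g,\Delta)\,\mathrm{id}_{V_\lambda}$. Because $J_\ell$ is self-adjoint, $J$ is diagonalizable with real eigenvalues, and each eigenvalue $\mu_i$ satisfies $\mu_i^2 = p_\ell(\lambda;g,\Delta)$, so $\mu_i \in \{\pm\sqrt{p_\ell(\lambda;g,\Delta)}\}$. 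The entire content of the theorem is therefore to exclude the possibility that $J$ is a nonzero scalar $\pm\sqrt{p_\ell(\lambda;g,\Delta)}\,\mathrm{id}_{V_\lambda}$; equivalently, I would show that $\tr J = 0$, so that the two eigenvalues are genuinely $\mu$ and $-\mu$.

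First I would fix the explicit basis of $V_\lambda$ furnished in \S\ref{sec:degen-eigensp}, namely the Juddian solutions $\Psi^{(N,\ell)}$ and $\Phi^{(N+\ell,-\ell)}$; these are linearly independent because they carry the distinct exponential factors $e^{-gz}$ and $e^{gz}$, respectively, and (with $V_\lambda$ two-dimensional) they form a basis. The key step is to prove that $J_\ell$ is \emph{anti-diagonal} in this basis, i.e. $J_\ell \Psi^{(N,\ell)} \in \C\,\Phi^{(N+\ell,-\ell)}$ and $J_\ell \Phi^{(N+\ell,-\ell)} \in \C\,\Psi^{(N,\ell)}$. For this I would use the factorization $J_\ell = \mathcal{P}\,Q_0^{(\ell)}$. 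In the Bargmann realization, $Q_0^{(\ell)}\in \mathrm{Mat}_2(\C[a,a^\dag])$ is a matrix of polynomials in $z$ and $\partial_z$, and such operators send $e^{-gz}(\text{polynomial vector})$ to $e^{-gz}(\text{polynomial vector})$, i.e. they preserve the exponential type; the photon parity $\mathcal{P}=\exp(i\pi a^\dag a)$ acts by $f(z)\mapsto f(-z)$ and hence interchanges the types $e^{-gz}\leftrightarrow e^{gz}$. Consequently $J_\ell \Psi^{(N,\ell)}$ is of pure type $e^{gz}$; writing it in the basis as $\alpha\,\Psi^{(N,\ell)}+\beta\,\Phi^{(N+\ell,-\ell)}$ and using that $e^{-gz}$ and $e^{gz}$ times nonzero polynomials are linearly independent forces $\alpha=0$, so $J_\ell \Psi^{(N,\ell)} = \beta\,\Phi^{(N+\ell,-\ell)}$. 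The symmetric computation gives $J_\ell \Phi^{(N+\ell,-\ell)} = \gamma\,\Psi^{(N,\ell)}$.

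With the anti-diagonal form in hand, $\tr J = 0$ (the trace being basis independent), so the two real eigenvalues of $J$ sum to zero; together with $\mu_i^2 = p_\ell(\lambda;g,\Delta)$ this yields eigenvalues $\mu$ and $-\mu$ with $\mu^2 = p_\ell(\lambda;g,\Delta)$, and the corresponding eigenvectors give the asserted decomposition $V_\lambda = \C\phi_\mu \oplus \C\phi_{-\mu}$. The degenerate edge case $p_\ell(\lambda;g,\Delta)=0$ is automatically covered: then $J^2=0$ with $J$ self-adjoint forces $J=0$, so $\mu=0$ and $V_\lambda\subset\Jker{\ell}$, consistent with the statement.

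The step I expect to be the main obstacle is the rigorous justification of the anti-diagonality, that is, that $J_\ell$ maps $\Psi^{(N,\ell)}$ \emph{exactly} into the line $\C\,\Phi^{(N+\ell,-\ell)}$ rather than into a general combination. This rests on two points that must be argued with care: that the two Juddian solutions carry genuinely distinct pure exponential scales $e^{\mp gz}$ (so that $V_\lambda$ decomposes cleanly by exponential type), and that no operator appearing in $Q_0^{(\ell)}$ alters that scale. A secondary subtlety is that $\{\Psi^{(N,\ell)},\Phi^{(N+\ell,-\ell)}\}$ is not orthonormal for the Bargmann inner product, so one should argue through the basis-independent trace rather than through Hermiticity of the matrix of $J$ in this basis.
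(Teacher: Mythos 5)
Your core argument is the same as the paper's: work in the Bargmann realization, use the factorization $J_\ell=\mathcal{P}Q_0^{(\ell)}$, note that $Q_0^{(\ell)}$ (a matrix of polynomials in $z,\partial_z$) preserves the exponential type $e^{\mp gz}$ of the two Juddian solutions while $\mathcal{P}$ exchanges them, and conclude that $J_\ell|_{V_\lambda}$ cannot act as a scalar. The paper states this tersely (``neither $\Phi^{(N+\ell,-\ell)}$ nor $\Psi^{(N,\ell)}$ can be eigenvectors of $J_{\ell}$, and the result follows''), whereas you push the same mechanism one step further to full anti-diagonality of $J_\ell|_{V_\lambda}$ in the Juddian basis together with the basis-independent trace; that stronger statement is exactly the paper's Corollary \ref{cor:Jaction}, so your write-up in effect proves the theorem and its corollary simultaneously. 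For the generic case $p_\ell(\lambda;g,\Delta)\neq 0$ your argument is complete and, if anything, more careful than the published one.

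The one place where you genuinely part ways with the paper is the edge case $p_\ell(\lambda;g,\Delta)=0$, and there your resolution is the wrong one. You declare that $J_\ell|_{V_\lambda}=0$, $\mu=0$, $V_\lambda\subset\Jker{\ell}$ is ``consistent with the statement,'' but the theorem is intended to assert (and is later used as asserting) two \emph{distinct} eigenvalues $\mu\neq-\mu$: the proof of Corollary \ref{cor:NoJuddianKernel} derives its contradiction precisely from $\mu=-\mu=0$, so under your reading that corollary would collapse. The paper's conclusion instead rules out the scalar case outright, including the zero scalar, since $J_\ell|_{V_\lambda}=0$ would make both Juddian solutions eigenvectors of $J_\ell$ with eigenvalue $0$. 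Note that the exponential-type argument alone cannot exclude this possibility --- the zero function has every exponential type --- so to obtain the theorem in the strength the paper needs, you must additionally show $J_\ell\Psi^{(N,\ell)}\neq 0$, i.e.\ that your constant $\beta$ is nonzero (equivalently, $p_\ell(\lambda;g,\Delta)\neq0$ at Juddian points). This can be done, for instance, with the leading expansion of $Q_0^{(\ell)}$ recalled in the remark following Corollary \ref{cor:Jaction}: the top term $(-1)^\ell z^\ell$ applied to $\psi_1^{(N,\ell)}$ produces a polynomial of degree exactly $N+\ell$ (the leading coefficient involves $K_{N-1}^{(N,\ell/2)}$, which is nonzero, since $K_N^{(N,\ell/2)}=K_{N-1}^{(N,\ell/2)}=0$ would force $K_0^{(N,\ell/2)}=0$ by the three-term recurrence), and the terms of lower Weyl degree cannot cancel it. Some argument of this kind must be supplied rather than absorbing the case into the statement; strictly speaking the paper's own two-line proof also glosses over this point, but its stated conclusion is the correct one.
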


The fact that the two eigenvalues $\pm \mu$ appear in the decomposition of $V_{\lambda}$ into $J_{\ell}$ eigenspaces resembles that at
the degenerate points there is a solution from each ``parity'' in the QRM. We discuss a generalization of the parity decomposition for the \ibQRM{\ell} based in geometric considerations in \S \ref{sec:parity}.

\begin{proof}

  Any degenerate eigenvalue of the \ibQRM{\ell} is of the form $\lambda = N + \frac{\ell}{2} - g^2$. Then, it is clear that $\Phi^{(N+\ell,-\ell)}$ and
  $\Psi^{(N,\ell)}$ are two linearly independent solutions for $\lambda$ in the Bargmann picture.
  
  Note that $\mathcal{P} = e^{i \pi z \partial_z}$, thus the identity
  \[
    \mathcal{P} e^{g z} f(z) = e^{-g z} f(-z),
  \]
  is valid for any polynomial $f \in \C[z]$. Now, since $\mathcal{P}$ appears in the expression of $J_{\ell}$,
  neither $\Phi^{(N+\ell,-\ell)}$ or $\Psi^{(N,\ell)}$ can be eigenvectors of $J_{\ell}$, and the result follows.
\end{proof}

\begin{rem}
  An alternative way to prove Theorem \ref{thm:decomposition} for the case $\ell>0$ is to show that the operator $J_{\ell}$ is
  non-degenerate for all $g,\Delta>0$. However, since there is no known general expression for $J_{\ell}$, this approach appears
    to be considerably difficult.
\end{rem}

The presence of the factors $e^{\pm g z}$ in the Juddian solutions $\Phi^{(N+\ell,-\ell)}$ and $\Psi^{(N,\ell)}$
forces the action of $J_{\ell}$ to alternate the two eigenfunctions. Concretely, we immediately obtain the following explicit form
of the action.

\begin{cor} \label{cor:Jaction}
  Let $\Phi^{(N+\ell,-\ell)}$ and $\Psi^{(N,\ell)}$ be the linearly independent (Juddian) eigenfunctions of $V_{\ell}$ for
  $\lambda = N+\tfrac{\ell}{2}-g^2$. Then, we have
  \begin{align*}
    J_\ell \Phi^{(N+\ell,-\ell)} &= \alpha \Psi^{(N,\ell)},\\
    J_\ell  \Psi^{(N,\ell)} &= \beta \Phi^{(N+\ell,-\ell)}
  \end{align*}
  for $\alpha,\beta \in \C(g,\Delta)$ such that $\alpha \beta = p_\ell(\lambda; g,\Delta)$.
\end{cor}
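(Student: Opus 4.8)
The plan is to make the mechanism behind Theorem~\ref{thm:decomposition} fully explicit for the distinguished basis $\{\Phi^{(N+\ell,-\ell)},\Psi^{(N,\ell)}\}$ of $V_\lambda$. The decisive structural feature is the \emph{exponential type} of each basis vector: every component of $\Phi^{(N+\ell,-\ell)}$ has the shape $e^{gz}$ times a polynomial, while every component of $\Psi^{(N,\ell)}$ has the shape $e^{-gz}$ times a polynomial. First I would record that these two types are mutually exclusive inside $V_\lambda$, in the sense that an element $c_1\Phi^{(N+\ell,-\ell)}+c_2\Psi^{(N,\ell)}$ is of type $e^{-gz}$ precisely when $c_1=0$ and of type $e^{gz}$ precisely when $c_2=0$; this is immediate from the linear independence of the two solutions together with the linear independence of $e^{gz}$ and $e^{-gz}$ over $\C[z]$.

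Next I would trace how $J_\ell=\mathcal{P}Q_0^{(\ell)}$ acts on type. Since the entries of $Q_0^{(\ell)}$ are polynomials in $a=\partial_z$ and $a^\dag=z$, applying $Q_0^{(\ell)}$ to a vector of type $e^{gz}$ again yields a vector of type $e^{gz}$, because a polynomial differential operator preserves the factor $e^{gz}$. Applying the parity operator then flips the type via the identity $\mathcal{P}e^{gz}f(z)=e^{-gz}f(-z)$ recorded in the proof of Theorem~\ref{thm:decomposition}. Hence $J_\ell$ sends type $e^{gz}$ to type $e^{-gz}$ and conversely. Because $[\HRabi{\ell},J_\ell]=0$, the eigenspace $V_\lambda$ is $J_\ell$-invariant, so $J_\ell\Phi^{(N+\ell,-\ell)}$ lies in $V_\lambda$ and is of type $e^{-gz}$; by the dichotomy above it must be a scalar multiple of $\Psi^{(N,\ell)}$, say $J_\ell\Phi^{(N+\ell,-\ell)}=\alpha\Psi^{(N,\ell)}$. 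The symmetric argument gives $J_\ell\Psi^{(N,\ell)}=\beta\Phi^{(N+\ell,-\ell)}$.

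The relation $\alpha\beta=p_\ell(\lambda;g,\Delta)$ then falls out by composing the two actions. Applying $J_\ell$ twice,
\[
  J_\ell^2\,\Phi^{(N+\ell,-\ell)} = \alpha\,J_\ell\Psi^{(N,\ell)} = \alpha\beta\,\Phi^{(N+\ell,-\ell)},
\]
while the quadratic relation \eqref{eq:Jquad2} together with $\HRabi{\ell}\Phi^{(N+\ell,-\ell)}=\lambda\Phi^{(N+\ell,-\ell)}$ gives
\[
  J_\ell^2\,\Phi^{(N+\ell,-\ell)} = p_\ell(\HRabi{\ell};g,\Delta)\,\Phi^{(N+\ell,-\ell)} = p_\ell(\lambda;g,\Delta)\,\Phi^{(N+\ell,-\ell)}.
\]
Comparing the two expressions yields $\alpha\beta=p_\ell(\lambda;g,\Delta)$.

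The remaining point, and the one I expect to require the most care, is the claim that $\alpha,\beta\in\C(g,\Delta)$ rather than merely $\C$. I would obtain this by extracting $\alpha$ (resp. $\beta$) as the ratio of matching coefficients on the two sides of $J_\ell\Phi^{(N+\ell,-\ell)}=\alpha\Psi^{(N,\ell)}$: the coefficients $K_n^{(\cdot,\cdot)}$ defining $\Phi^{(N+\ell,-\ell)}$ and $\Psi^{(N,\ell)}$ are rational in $g$ and $\Delta$ by their recurrence, and the entries of $Q_0^{(\ell)}$ likewise depend rationally (indeed polynomially) on the parameters, so any coefficient comparison produces a ratio in $\C(g,\Delta)$. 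The only genuine subtlety is bookkeeping---ensuring that the matrix operator $Q_0^{(\ell)}$ preserves exponential type componentwise---since the conceptual content is already supplied by Theorem~\ref{thm:decomposition}.
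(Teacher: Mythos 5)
Your proposal is correct and follows exactly the mechanism the paper itself invokes: the paper derives this corollary as ``immediate'' from the proof of Theorem~\ref{thm:decomposition}, namely that the factors $e^{\pm gz}$ force $J_\ell = \mathcal{P}Q_0^{(\ell)}$ to swap the two exponential types, which is precisely your type-dichotomy argument, followed by the same composition $J_\ell^2 = p_\ell(\HRabi{\ell};g,\Delta)$ to get $\alpha\beta = p_\ell(\lambda;g,\Delta)$. You simply make explicit the details (type exclusivity, $J_\ell$-invariance of $V_\lambda$, rationality of $\alpha,\beta$ via the coefficients $K_n$) that the paper leaves unstated.
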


\begin{ex} \label{ex:action}

  Let us consider the action of $J_{\ell}$ on a degenerate eigenspace (consisting of Juddian solutions) for the case of $\ell=1$.
  Namely we assume that $P_1^{(1, \frac12)}((2g)^2, \Delta^2)=0$ and $P_2^{(2, -\frac12)}((2g)^2, \Delta^2)=0$.
  In this case, the operator $J_1$ is realized in the Bargmann space as
  \[
    J_{1} = \mathcal{P}
    \begin{bmatrix}
      \Delta & 2 g (g-\partial_z) \\
      2g (g+z) & \Delta
    \end{bmatrix}.
  \]
  The Juddian solutions $\Psi^{(1,1)}(z)$ assured by $P_1^{(1, \frac12)}((2g)^2, \Delta^2)=0$ and
  $\Phi^{(2,-1)}(z)$ by $P_2^{(2, -\frac12)}((2g)^2, \Delta^2)=0$, respectively, are given by
  \begin{align*}
    \psi_1^{(1,1)}(z) & = e^{-g z}\left( z + \frac{2 -2g^2}{2 g } \right) \\
    \psi_2^{(1,1)}(z) & = e^{-g z} \frac{\Delta}{2 g}
  \end{align*}
  and
  \begin{align*}
    \phi_1^{(2,-1)}(z) & = e^{g z} \left( -\frac{\Delta}{2g} z + \frac{\Delta^4 + 6 \Delta^2 g^2 - \Delta^2 g^2 + \Delta^2}{4 \Delta g^2}  \right) \\
    \phi_2^{(2,-1)}(z) & = e^{g z} \left( z^2 - \frac{\Delta^2+4g^2}{2 g} z  + \frac{\Delta^4+8\Delta^2g^2 + 8g^4 }{8 g^2}  \right)
  \end{align*}
  where we have followed the normalization given in \cite{LB2015JPA}. Then, by direct computation we obtain
  \begin{align*}
    J_1 \Psi^{(1,1)}(z) &= 2 g \Phi^{(2,-1)}(z) \\
    J_1 \Phi^{(2,-1)}(z) &= \frac{8g^2 + \Delta^2}{2 g} \Psi^{(1,1)}(z),
  \end{align*}
  where we verify that $p_{1}(1+\tfrac{1}{2} - g^2; g ,\Delta) = 8g^2 + \Delta^2$ as expected.
  It is important to note that the computations are subject to the constraint condition
  $P_1^{(1, \frac12)}((2g)^2, \Delta^2)=0$ for the Juddian eigenvalue $\lambda=\frac{3}{2}-g^2$ (for $N=1$ and $\ell=1$), that is,
  \[
    \Delta^2 + 4g^2 -2 = 0.
  \]
  
  In other words, up to a constant normalization, the coefficients of the Juddian solutions
  actually take values in the domain $\Q[g,\Delta]/(\Delta^2 + 4g^2 -2)$.
  
\end{ex}

\begin{rem}

Corollary \ref{cor:Jaction} above shows that $J_\ell = \mathcal{P}Q_0^{(\ell)}$ sends, up to constant, the solution $\Phi^{(N+\ell,-\ell)}$ to $\Psi^{(N,\ell)}$. Recall the fact that the leading
  expansion (relative to the degree in the Weyl algebra $\C[z, \partial_z]$) of the operator $Q_0^{(\ell)}$ is expressed as  
  \[
    Q_0^{(\ell)}=
    \begin{bmatrix}       
      0 & \partial_z^{\ell} \\
      (-1)^\ell z^{\ell} & 0
    \end{bmatrix}
    + R_\ell,
  \]
  where $R_\ell$ is a polynomial (in $z$ and $\partial_z$) matrix with components of degree strictly less than $\ell$ (see the proof of
  Lemma 4.7 in \cite{RBW2021}). Notice that the first component
  of the polynomial part of $\Phi^{(N+\ell)}$ is of degree $N+\ell-1$ whereas the degree of the first component of the polynomial of  $\Psi^{(N)}$
  equals $N$. This fact implies that there must be a non-trivial cancellation given by the constraint relations.

\end{rem}

The key point of the proof of Theorem \ref{thm:decomposition} is the existence of solutions that are not parity solutions,
that is, common eigenfunctions of $\HRabi{\ell}$ and $J_{\ell}$. It is actually possible to construct the parity solutions directly
as we show in the following example (the QRM case was originally done by Ku\'s in \cite{K1985JMP}).

\begin{ex} \label{ex:paritysol} 
  Let us give and example of parity solutions for the case $N=1$ and $\ell=1$. Let us fix the solutions
  $\Psi^{(1,1)}(z)$ and $\Phi^{(2,-1)}(z)$ as in Example \ref{ex:action}.

  The parity solutions $\Pi_{(1,1)}^{\pm} = (\pi^{\pm}_1,\pi^{\pm}_2)^t$ are then given by
  \begin{align*}
    \pi^{\pm}_1(z) &=   \phi^{(2,-1)}_1(z)  \pm \frac{\mu_{\lambda}}{2 g} \psi^{(1,1)}_1(z) \\
    \pi^{\pm}_2(z) &=  \phi^{(2,-1)}_2(z)  \pm \frac{\mu_{\lambda}}{2 g} \psi^{(1,1)}_2(z),
  \end{align*}
  here $\mu_{\lambda}^2 = 8g^2 + \Delta^2$ and $\mu_{\lambda}$ is one of the square roots.
\end{ex}

Note that the example shows that finding the parity solutions explicitly and finding $\alpha$ and $\beta$ of Corollary \ref{cor:Jaction}
are equivalent. In particular, both require the explicit computation of the action of $J_{\ell}$ on the Juddian solutions.

We also remark that Theorem \ref{thm:decomposition} above forbids the possibility of Juddian eigenfunctions in $\Jker{\ell}$.

\begin{cor} \label{cor:NoJuddianKernel}
  No Juddian solutions $\phi_{\lambda}$ of $\HRabi{\ell}$ are in the kernel $\Jker{\ell}$ of $J_{\ell}$.
\end{cor}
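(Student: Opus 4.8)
The plan is to read off the statement directly from Theorem~\ref{thm:decomposition}, exploiting the elementary fact that an element of the kernel $\Jker{\ell}$ is precisely an eigenvector of $J_{\ell}$ with eigenvalue $0$. First I would recall that for the \ibQRM{\ell} every Juddian solution $\phi_{\lambda}$ occurs at a degenerate eigenvalue $\lambda = N + \tfrac{\ell}{2} - g^2$ (this is the key input from \cite{KRW2017} recalled in \S\ref{sec:degen-eigensp}), so that $\phi_{\lambda}$ lies in the two-dimensional degenerate eigenspace $V_{\lambda}$ spanned by $\Psi^{(N,\ell)}$ and $\Phi^{(N+\ell,-\ell)}$. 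Being a quasi-exact solution, $\phi_{\lambda}$ is a single polynomial multiplied by a single exponential factor $e^{\pm g z}$; since for $g>0$ the functions $e^{g z}$ and $e^{-g z}$ are linearly independent over $\C[z]$, it follows that $\phi_{\lambda}$ is a scalar multiple of either $\Psi^{(N,\ell)}$ or $\Phi^{(N+\ell,-\ell)}$.

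Next I would invoke the mechanism already established in the proof of Theorem~\ref{thm:decomposition}: writing $J_{\ell} = \mathcal{P} Q_0^{(\ell)}$ and using the identity $\mathcal{P} e^{g z} f(z) = e^{-g z} f(-z)$, the photon-parity factor $\mathcal{P}$ interchanges the two exponential types, so neither $\Psi^{(N,\ell)}$ nor $\Phi^{(N+\ell,-\ell)}$ can be an eigenvector of $J_{\ell}$. Because any nonzero element of $\Jker{\ell}$ is by definition an eigenvector of $J_{\ell}$ (with eigenvalue $0$), a Juddian solution — being a scalar multiple of $\Psi^{(N,\ell)}$ or $\Phi^{(N+\ell,-\ell)}$, hence not an eigenvector — cannot lie in $\Jker{\ell}$, which is exactly the claim. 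If one prefers to exclude arbitrary (not necessarily single-exponential) vectors of $V_{\lambda}$ as well, I would instead use Theorem~\ref{thm:decomposition} to write $V_{\lambda} = \C \phi_{\mu} \oplus \C \phi_{-\mu}$ with $\mu^2 = p_{\ell}(\lambda; g, \Delta)$ and argue that $\mu \neq 0$: were $\mu = 0$, the operator $J_{\ell}$ would act as the zero scalar on $V_{\lambda}$, forcing $\Psi^{(N,\ell)}$ and $\Phi^{(N+\ell,-\ell)}$ to be eigenvectors and contradicting the previous sentence. Hence $J_{\ell}|_{V_{\lambda}}$ has the two nonzero eigenvalues $\pm\mu$ and is injective, so $V_{\lambda} \cap \Jker{\ell} = \{0\}$.

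The proof is short because all the substantive content is carried by Theorem~\ref{thm:decomposition}; the only delicate points are the identification of $\Jker{\ell}$-membership with being a zero-eigenvalue eigenvector, and the non-eigenvector property of the explicit Juddian solutions furnished by the action of $\mathcal{P}$. The main obstacle, such as it is, is to make sure the degeneracy hypothesis is genuinely available, namely that \emph{every} Juddian solution of the \ibQRM{\ell} is degenerate so that Theorem~\ref{thm:decomposition} applies; this is precisely the coincidence of the positive roots of the two constraint conditions in \eqref{eq:constraint} established in \cite{KRW2017}.
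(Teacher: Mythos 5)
Your proposal is correct and is essentially the paper's own argument: the paper's proof likewise supposes $\phi_{\lambda}\in\Jker{\ell}$, uses the degeneracy of Juddian eigenvalues to invoke Theorem~\ref{thm:decomposition}, concludes $\mu^{2}=p_{\ell}(\lambda;g,\Delta)=0$ and hence $\mu=-\mu=0$, contradicting the decomposition into distinct $\pm\mu$ eigenspaces --- which is exactly your fallback argument for arbitrary vectors of $V_{\lambda}$. Your write-up is in fact slightly more explicit than the paper's two-line proof, since you spell out why $\mu=0$ is impossible (it would make $\Psi^{(N,\ell)}$ and $\Phi^{(N+\ell,-\ell)}$ eigenvectors of $J_{\ell}$, against the parity-flip mechanism $\mathcal{P}e^{gz}f(z)=e^{-gz}f(-z)$ established in the proof of Theorem~\ref{thm:decomposition}).
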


\begin{proof}
  Suppose that $\phi_{\lambda}$ is in $\Jker{\ell}$, then, since Juddian eigenvalues of $\HRabi{\ell}$ are degenerate, then
  \[
    V_{\lambda} = \C \phi_{\mu} \oplus \C \phi_{-\mu},
  \]
  where $\mu^2 = p_{\ell}(\lambda; g, \Delta) = 0$, and therefore $\mu=-\mu=0$, contradicting Theorem \ref{thm:decomposition}.
\end{proof}

\begin{rem}
Since $p_{\ell}(\lambda; g, \Delta)>0$ when $\lambda$ is a Juddian eigenvalue, by this corollary, we easily obtain the projection of $V_{\lambda}$ onto each eigenspace of eigenvalues $\pm \mu_\lambda$ as 
\begin{align}\label{eq:projection}
  P_\pm(\lambda):= \frac12 ({\rm{I}}_2\pm \tilde{J}(\lambda)), 
\end{align}
where we set the involution on $V_{\lambda}$ as 
\[
  \tilde{J}(\lambda):=p_{\ell}(\lambda; g, \Delta)^{-\frac12}J_\ell|_{V_\lambda}.
\]  
\end{rem}

It is worth remarking that we cannot discard the possibility of non-degenerate (i.e. non-Juddian) exceptional solutions in the kernel of $J_{\ell}$. This situation is equivalent to the identity
\begin{equation}
  \label{eq:kernelExcept}
  p_{\ell}(N+\tfrac{\ell}{2}-g^2; g, \Delta) = 0,
\end{equation}
for some $N \geq 0$. It is thus interesting to see the expression of $p_{\ell}(N+\frac{\ell}{2}-g^2; g,\Delta)$ for small
 values of $\ell$. For instance, we have
\begin{align*}
  &p_{0}(N -g^2; g ,\Delta) =1, \\
  &p_{1}(N+\tfrac{1}{2}-g^2; g ,\Delta) = 4 g^2 (N+1) + \Delta^2 \\
  &p_{2}(N+1-g^2; g ,\Delta) =  4 (N+1)(N+2) (2g^2)^2 + 2 (3 + 2 N) (2 g^2) \Delta^2 + \Delta^4 + \Delta^2,
\end{align*}
in particular, we note that all the coefficients are positive. This suggests that Corollary \ref{cor:NoJuddianKernel} may be extended
for any type of exceptional solution. We continue this discussion in the next section.

\section{Symmetry and degeneracy of the \ibQRM{\ell}}
\label{sec:constr-polyn}

In this section we finally are in a position to state the relation between the symmetry operator and the degeneracy in the
spectrum of the \ibQRM{\ell} announced in the introduction.  As we mentioned in Example \ref{ex:action}, for the computation of the constants $\alpha$ and $\beta$ of Corollary \ref{cor:Jaction}, the use of the constraint conditions
\[
  \cp{N,\frac{\ell}{2}}{N}((2g)^2,\Delta^2) = 0, \qquad \cp{N+\ell,-\ell/2}{N+\ell}((2g)^2,\Delta^2) = 0
\]
is fundamental. In fact, in \cite{KRW2017} it was shown that the two constraint conditions satisfy a divisibility relation. Concretely,
\begin{align}
  \label{eq:div}
  \cp{N+\ell,-\ell/2}{N+\ell}(u,v) =  A^\ell_N(u,v) \cp{N,\ell/2}{N}(u,v)
\end{align}
where \( A^\ell_N(u,v)\) is a polynomial satisfying \( A^\ell_N(u,v) >0\) for $u, v > 0$. The change of variable $u=(2g)^2$ and $v=\Delta^2$
relates the above equations with the system parameters. In particular, they have the same roots, as was expected
from the double degeneracy of Juddian solutions of the \ibQRM{\ell}.

In addition, the polynomial $A^\ell_N(u,v)$ has the have the determinant expression
\[
  A^\ell_N(u,v) = \frac{(N+\ell)!}{N!}\det\Tridiag{u+\frac{v}{N+i}-\ell+2i-1}{1}{-i(\ell-i)}{1\le i\le \ell}.
\]
Here, we used the notation 
\begin{equation*}
  \Tridiag{a_i}{b_i}{c_i}{1\le i\le n}
  :=\begin{bmatrix}
    a_1 & b_1 & 0 &  \cdots & 0    \\
    c_1 & a_2 & b_2 &  \cdots  & 0\\
    \vdots & \ddots   & \ddots &  \ddots & \vdots   \\
    0 &  \cdots &  0  & a_{n-1} & b_{n-1} \\
    0 & \cdots  & 0  & c_{n-1} & a_n
  \end{bmatrix}
\end{equation*}
for a tridiagonal matrix. For the full discussion we refer the reader to \cite{KRW2017} (see also \cite{LB2015JPA}).

\begin{rem}
  The first half of the statement of Theorem 3.17 in \cite{KRW2017} should be read as 
  ``In AQRM with bias parameter $\frac\e2$ (in the notation of the present paper), the necessary and sufficient condition for
  the spectral curves $E=\lambda(g)$ with respect to $g$ to have a  crossing point is that $\frac\e2$ is a half-integer''.
    
  In addition, we note that in Corollary 3.5 of \cite{KRW2017}, the additional condition $k \leq N$ should be added.
  The condition is clear from the context and does not affect the results in \cite{KRW2017}.
\end{rem} 

In this setting, the quotient factor $A^\ell_N(u,v)$ in (\ref{eq:div}) does not appear to have an immediate interpretation in terms of
Juddian solutions. The first few values of $A^\ell_N(u,v)$ for fixed $N \ge 0$ are given by
\begin{align*}
  &A^0_N(u,v) = 1, \\
  &A^1_N(u,v) = (N+1)u +v,   \\
  &A^2_N(u,v) = (N+1)_2 u^2 + \biggl(\sum_{i=1}^2 (N+i) \biggr)u v + v (1+ v), 
\end{align*}
and we refer the reader to Example 3.3 of \cite{KRW2017} for further examples. Note that, as polynomials
on the variable $N$, we have $\deg_N(A_N^\ell((2g)^2,\Delta^2)) = \deg_N(p_\ell(N+\frac{\ell}{2}-g^2; g , \Delta)) =\ell$.
Surprisingly, we verify that the equality
\[
  A_{N}^\ell((2g)^2,\Delta^2) = p_{\ell}(N+\frac{\ell}{2}-g^2; g ,\Delta),
\]
actually holds for $0 \leq \ell \leq 6$, that is, they are identical as polynomials in $N$. These observations  and the resulting implications that we discuss in section \S \ref{sec:symm-decomp} suggest the following conjecture.

\begin{conject} \label{conject:main}
  For all natural numbers $N,\ell\geq 0$, we have
  \begin{align} \label{eq:main}
  p_{\ell}(N+\tfrac{\ell}{2} - g^2;g,\Delta) &= A^\ell_N((2g)^2,\Delta^2),
  \end{align}
  and, for $N \ge \ell$, we have
  \begin{equation*}
    p_{\ell}(N- \tfrac{\ell}{2} - g^2;g,\Delta)= A^\ell_{N-\ell}((2g)^2,\Delta^2).
  \end{equation*}
\end{conject}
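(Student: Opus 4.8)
The plan is to prove the first identity $p_{\ell}(N+\tfrac{\ell}{2}-g^2;g,\Delta) = A^\ell_N((2g)^2,\Delta^2)$ for all $N\geq 0$; the second identity then follows immediately by replacing $N$ with $N-\ell$, since $p_{\ell}((N-\ell)+\tfrac{\ell}{2}-g^2;g,\Delta)=p_{\ell}(N-\tfrac{\ell}{2}-g^2;g,\Delta)$ and this substitution is legitimate whenever $N-\ell\geq 0$. I would establish the first identity in two stages: first, show that both sides agree whenever the parameters $(g,\Delta)$ lie on the constraint locus $\cp{N,\frac{\ell}{2}}{N}((2g)^2,\Delta^2)=0$; and second, upgrade this ``agreement on a curve'' to a genuine polynomial identity by a degree count combined with the fact (recorded just before Conjecture \ref{conject:main}) that both sides are polynomials of degree $\ell$ in $N$.

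For the second stage, fix $\ell$ and set $P(N):=p_{\ell}(N+\tfrac{\ell}{2}-g^2;g,\Delta)$ and $Q(N):=A^\ell_N((2g)^2,\Delta^2)$, regarded as polynomials in $u=(2g)^2$ and $v=\Delta^2$. Since $p_{\ell}(x;g,\Delta)$ is a fixed polynomial, the degree of $P(N)$ in $u$ is bounded by a constant $C_\ell$ independent of $N$, while $\deg_u Q(N)=\ell$; hence $\deg_u(P(N)-Q(N))\leq C_\ell$ for every $N$. On the other hand $\cp{N,\frac{\ell}{2}}{N}(u,v)$ has degree $N$ in $u$, and for generic fixed $v_0>0$ its number of distinct roots in $u$ grows with $N$ (using the root analysis of the constraint polynomials in \cite{KRW2017}). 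At each such root $(u_*,v_0)$ lies on the constraint locus, so the first stage gives $P(N)-Q(N)=0$ there; thus for $N>C_\ell$ the univariate polynomial $(P(N)-Q(N))|_{v=v_0}$ has more roots than its degree and vanishes identically, and letting $v_0$ vary yields $P(N)=Q(N)$ for all large $N$. Finally, since both are polynomials of degree $\ell$ in $N$, agreement for all large $N$ forces agreement for every $N\geq 0$.

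The crux is the first stage. When $\cp{N,\frac{\ell}{2}}{N}((2g)^2,\Delta^2)=0$, the divisibility relation \eqref{eq:div} forces $\cp{N+\ell,-\ell/2}{N+\ell}((2g)^2,\Delta^2)=0$ as well, so both constraints \eqref{eq:constraint} hold, $\lambda=N+\tfrac{\ell}{2}-g^2$ is a degenerate Juddian eigenvalue, and Corollary \ref{cor:Jaction} applies: $p_{\ell}(\lambda;g,\Delta)=\alpha\beta$, where $J_\ell\Phi^{(N+\ell,-\ell)}=\alpha\Psi^{(N,\ell)}$ and $J_\ell\Psi^{(N,\ell)}=\beta\Phi^{(N+\ell,-\ell)}$. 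What remains is to show $\alpha\beta=A^\ell_N((2g)^2,\Delta^2)$ on this locus. I would attempt to compute $\alpha$ and $\beta$ by applying $J_\ell=\mathcal{P}Q_0^{(\ell)}$ to the explicit Juddian solutions, using the identity $\mathcal{P}e^{gz}f(z)=e^{-gz}f(-z)$ together with the leading form $Q_0^{(\ell)}=\left[\begin{smallmatrix}0 & \partial_z^{\ell}\\ (-1)^{\ell}z^{\ell} & 0\end{smallmatrix}\right]+R_\ell$, and then matching a single coefficient (say the top-degree one) of $\Psi^{(N,\ell)}$, and symmetrically for $\beta$.

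Here lies the principal obstacle: the lower-order part $R_\ell$ of $Q_0^{(\ell)}$ has no known closed form, and, as flagged in the remark on the structure of $Q_0^{(\ell)}$, the surviving coefficient that determines $\alpha$ emerges only after a nontrivial cancellation among high-degree terms enforced by the constraint relations. Consequently the naive coefficient match cannot be completed without more information about $J_\ell$ than is presently available. I would therefore pursue the equivalent reformulation as a determinant expression for $p_{\ell}$ (Theorem \ref{thm:detexpP}), matched against the tridiagonal determinant defining $A^\ell_N$, or the complex-analytic route through the (hyper)elliptic surface $\mathcal{S}_\ell$ and its Kodaira--N\'eron model, where rigidity of the fibration may force $\alpha\beta=A^\ell_N$ without an explicit description of $R_\ell$. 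I expect this cancellation/rigidity step to be the real difficulty.
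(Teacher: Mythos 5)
You cannot be measured against the paper's own proof here, because there is none: the statement is Conjecture \ref{conject:main}, and the paper says explicitly that no proof of it is currently known, offering only direct verification for $0\leq\ell\leq6$, numerical evidence, and equivalent reformulations. To your credit, your proposal is honest about the same point: it is a reduction strategy, not a proof. The outer reduction (your second stage) is sound. Granting the identity on the constraint loci, for fixed large $N$ and fixed $\Delta^2=v_0>0$ the constraint polynomial $\cp{N,\ell/2}{N}(u,v_0)$ has $N$ simple positive roots in $u$ (this is part of the root analysis of \cite{KRW2017} that you invoke), while the $u$-degree (or, more safely, the $g$-degree, since polynomiality in $u$ is not given a priori) of the difference of the two sides is bounded by a constant independent of $N$; hence the difference vanishes identically for all large $N$, and since both sides are polynomials of degree $\ell$ in $N$, as recorded just before Conjecture \ref{conject:main}, the identity propagates to every $N\geq0$. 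Your deduction of the second displayed identity from the first via $N\mapsto N-\ell$ is also correct. This interpolation scheme is parallel to, though not identical with, the paper's own observation that the conjecture is equivalent to the determinant expression of Theorem \ref{thm:detexpP}, where the interpolation is instead performed at the $\ell+1$ points $N=0,1,\dots,\ell$ in the variable $x$.

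The genuine gap is exactly where you place it, and it is not a technicality: your first stage requires $\alpha\beta=A^{\ell}_N((2g)^2,\Delta^2)$ on the constraint locus, with $\alpha,\beta$ the constants of Corollary \ref{cor:Jaction}. Computing them means applying $J_\ell=\mathcal{P}Q_0^{(\ell)}$ to the explicit Juddian solutions, and the part of $Q_0^{(\ell)}$ beyond the leading term $\bigl[\begin{smallmatrix}0&\partial_z^{\ell}\\ (-1)^{\ell}z^{\ell}&0\end{smallmatrix}\bigr]$ is unknown for general $\ell$; worse, as the remark following Corollary \ref{cor:Jaction} points out, the degree bookkeeping (the first component of $\Phi^{(N+\ell,-\ell)}$ has polynomial degree $N+\ell-1$, the target $\Psi^{(N,\ell)}$ has degree $N$) shows that the answer only emerges after nontrivial cancellations enforced by the constraint relation, so no amount of leading-coefficient matching can close the argument. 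This is the same open core the paper reaches by its $\mathfrak{sl}_2$-route: there the conjecture reduces to showing $\tr\mathcal{A}=0$ for the transition matrix $\mathcal{A}$ with $\det\mathcal{A}=A_N^{\ell}((2g)^2,\Delta^2)$ (Figure \ref{dia:diagramJaction}), and the paper states that this reformulation does not seem to give or simplify a proof. In short, your outer reduction is valid and consistent with the paper's equivalences, but the inner identity at the degenerate points --- which is the actual content of the conjecture --- remains unproven in your proposal, exactly as it does in the paper.
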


The equation of Conjecture \ref{conject:main} is highly significant, as it gives a direct realization of the relation between
degeneracy and symmetry. In other words, it gives the explicit relation between the equation
\[
  J_{\ell}^2 = p_{\ell}(\HRabi{\ell};g,\Delta),
\]
and the divisibility relation 
\begin{align*}
  \cp{N+\ell,-\ell/2}{N+\ell}(u,v) =  A^\ell_N(u,v) \cp{N,\ell/2}{N}(u,v)
\end{align*}
assuring the existence of degeneracies in the spectrum of the \ibQRM{\ell}.

There is currently no proof of the remarkable statement in Conjecture \ref{conject:main}. In the rest of the paper we focus on the discussion of the consequences of the conjecture and giving further evidence for it, while pointing out possible approaches for the proof.

As an immediate consequence, since $A^\ell_N(u,v)>0$ for $u,v > 0$ we can extend the result of Corollary \ref{cor:NoJuddianKernel}
to any type of exceptional solutions $\lambda = N + \tfrac{\ell}{2} - g^2$ (not limited to Juddian solutions).

\begin{prop} \label{prop:NoExceptKernel}
 If Conjecture \ref{conject:main} is true, there are no exceptional solutions $\phi_{\lambda}$ of $\HRabi{\ell}$ associated to
 eigenvalues $\lambda = N + \tfrac{\ell}{2} - g^2 $ with $N\ge0$ in the kernel $\Jker{\ell}$ of $J_{\ell}$.
\end{prop}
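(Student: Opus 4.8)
The plan is to argue by contradiction, in the same spirit as the proof of Corollary \ref{cor:NoJuddianKernel}, but replacing the input from Theorem \ref{thm:decomposition} (which is available only for \emph{degenerate}, i.e.\ Juddian, eigenvalues) with the conjectural identity \eqref{eq:main}, which transfers to $p_\ell$ the unconditional positivity of the quotient polynomial $A^\ell_N$. So I would suppose, toward a contradiction, that for some $N\ge 0$ there is a nonzero exceptional eigenfunction $\phi_\lambda$ of $\HRabi{\ell}$ with eigenvalue $\lambda = N + \tfrac{\ell}{2} - g^2$ lying in $\Jker{\ell}$, so that $J_\ell \phi_\lambda = 0$.

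First I would feed this into the quadratic relation \eqref{eq:Jquad2}. Applying $J_\ell$ once more gives
\[
  0 = J_\ell^2 \phi_\lambda = p_\ell(\HRabi{\ell}; g, \Delta)\,\phi_\lambda = p_\ell(\lambda; g,\Delta)\,\phi_\lambda,
\]
where the last equality uses that $\phi_\lambda$ is an eigenfunction of $\HRabi{\ell}$ with eigenvalue $\lambda$. Since $\phi_\lambda \neq 0$, this forces the scalar condition $p_\ell(\lambda; g, \Delta) = 0$, which is exactly the identity \eqref{eq:kernelExcept} for this $N$.

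Next I would invoke the assumed Conjecture \ref{conject:main}: for an eigenvalue of the form $\lambda = N + \tfrac{\ell}{2} - g^2$ it supplies the equality $p_\ell(\lambda; g, \Delta) = A^\ell_N((2g)^2, \Delta^2)$. Since we work with $g > 0$ and fixed $\Delta > 0$, both arguments $u = (2g)^2$ and $v = \Delta^2$ are strictly positive, and the positivity statement accompanying the divisibility relation \eqref{eq:div}—namely $A^\ell_N(u,v) > 0$ for $u, v > 0$, established unconditionally in \cite{KRW2017}—yields $A^\ell_N((2g)^2, \Delta^2) > 0$. This contradicts $p_\ell(\lambda; g, \Delta) = 0$ from the previous step, completing the argument.

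There is no genuine computational obstacle here: once Conjecture \ref{conject:main} is granted, the entire content is carried by the sign inequality $A^\ell_N > 0$, which is already available. The conceptual point worth stressing is that this is precisely where the conjecture buys strictly more than Corollary \ref{cor:NoJuddianKernel}: the corollary relied on the double degeneracy of Juddian eigenvalues (via the splitting $V_\lambda = \C \phi_\mu \oplus \C \phi_{-\mu}$ of Theorem \ref{thm:decomposition}), whereas the present argument makes no use of degeneracy and therefore also covers the \emph{non-Juddian} exceptional solutions with $\lambda = N + \tfrac{\ell}{2}-g^2$. The only true difficulty is, of course, Conjecture \ref{conject:main} itself, which is assumed throughout.
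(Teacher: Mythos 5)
Your proposal is correct and follows essentially the same route as the paper: the paper's (implicit) argument is precisely that kernel membership of an exceptional eigenfunction forces $p_{\ell}(\lambda;g,\Delta)=0$ (the equivalence recorded in \eqref{eq:kernelExcept}), while Conjecture \ref{conject:main} identifies this value with $A^{\ell}_{N}((2g)^2,\Delta^2)$, which is strictly positive for $g,\Delta>0$ by the unconditional result of \cite{KRW2017} accompanying \eqref{eq:div}. Your explicit derivation of $p_{\ell}(\lambda;g,\Delta)=0$ from $J_{\ell}\phi_{\lambda}=0$ via $J_{\ell}^2\phi_{\lambda}=p_{\ell}(\HRabi{\ell};g,\Delta)\phi_{\lambda}$ is exactly the mechanism the paper relies on, so there is nothing to add.
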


Note that non-Juddian exceptional solutions with eigenvalues $\lambda = N - \tfrac{\ell}{2} - g^2$ for $N \leq \ell$ are not ruled out by the Proposition
\ref{prop:NoExceptKernel}. In \S \ref{sec:symm-decomp} we discuss this situation along with further consequences and implications of the main conjecture.

Before concluding this section, we shortly present another aspect of Conjecture \ref{conject:main}. Let us note the presence of both polynomials in equation \eqref{eq:main} in the $\mathfrak{sl}_2$-picture for the eigenvalue problem of the Hamiltonian $\HRabi{\ell}$. We direct the reader to \cite{W2016JPA} for the full description of the $\mathfrak{sl}_2$-picture. We recall that the representation theoretic picture allows us to describe the defining recurrence equation of the constraint polynomial $\cp{N,\ell/2}{N}((2g)^2,\Delta^2)$ by the continuant (i.e. the determinant of the triangular matrix) which describes the eigenfunction in the space of the corresponding irreducible finite dimensional representation $\mathbb{F}_{N} ( \dim_{\C}\mathbb{F}_N=N)$ of $\mathfrak{sl}_2$. 
  In particular, there is a $(N+\ell+1)\times(N+\ell+1)$ matrix $M_{N+\ell}^{(N+\ell,-\frac{\ell}{2})}$ (resp. $(N+1)\times(N+1)$ matrix $M_{N}^{(N,\frac{\ell}{2})}$)
  such that, up to a constant
  \[
    \det M_{N+\ell}^{(N+\ell,-\frac{\ell}{2})} = \cp{N+\ell,-\ell/2}{N+\ell}((2g)^2,\Delta^2), \qquad \det M_{N}^{(N,\frac{\ell}{2})} = \cp{N,\ell/2}{N}((2g)^2,\Delta^2).
  \]
  It is known (see Proposition 5.8 of \cite{W2016JPA}) that if the determinant vanishes, that is, the matrix is singular, then the rank of $M_{N+\ell}^{(N+\ell,\frac{\ell}2)}$ (resp. $M_{N}^{(N,\frac{\ell}{2})}$) is exactly $N+\ell$ (resp. $N$) and any vector in the kernel of $M_{N+\ell}^{(N+\ell,\frac{\ell}2)}$
  (resp. $M_{N}^{(N,\frac{\ell}{2})}$) corresponds to a Juddian solution $\lambda = N + \tfrac{\ell}{2} -g^2$. The linear independence
  of the solutions is (again) verified by the fact that they correspond (with exception to the case $\ell=1$) to different finite dimensional irreducible representations of $\mathfrak{sl}_2$.
  We now define the $(N+\ell+1)\times(N+\ell+1)$-matrix $\widetilde{M}_{N}^{(N,\frac{\ell}{2})}$  from $M_{N}^{(N,\frac{\ell}{2})}$ as the block 
  \[
    \begin{bmatrix}
      M_{N}^{(N,\frac{\ell}{2})} & \bm{0} \\
      \bm{0} & \bm{I}_\ell
    \end{bmatrix}.
  \]
  Next, we note that by elementary matrix operations we can find non-degenerate $(N+\ell+1)\times(N+\ell+1)$-matrices $\mathcal{B}$ and $\mathcal{C}$ such that
  \[
    M_{N+\ell}^{(N+\ell,-\frac{\ell}2)} \mathcal{B}= \bm{I}_{N+\ell+1} = \widetilde{M}_{N}^{(N,\frac{\ell}{2})} \mathcal{C},
  \]
  and by setting $\mathcal{A}= \mathcal{A}^\ell(N, g, \Delta)= \mathcal{C}\mathcal{B}^{-1} \in GL_{N+\ell+1}(\R)$ we obtain
  \[
    M_{N+\ell}^{(N+\ell,-\frac{\ell}2)} = \widetilde{M}_{N}^{(N,\frac{\ell}{2})} \mathcal{A},
  \]
  with $\det \mathcal{A} = A_N^{\ell}((2g)^2,\Delta^2)>0$. We note that in particular,  if $\bm{v}_0=\bm{v}_0(\lambda)(\neq0) \in {\rm Ker}( M_{N+\ell}^{(N+\ell,-\frac{\ell}2)})$,
  then $\mathcal{A} \bm{v}_0 \in \widetilde{M}_{N}^{(N,\frac{\ell}{2})}$ and $\{ \bm{v}_0, \mathcal{A} \bm{v}_0 \}$ is a basis of the eigenspace 
 $\tilde{V}_{\lambda}(\simeq V_{\lambda})$ with $\lambda = N + \frac{\ell}{2} - g^2$. It is known that the two dimensional space 
 $\tilde{V}_{\lambda} \subset \mathbb{F}_{N+1}\oplus \mathbb{F}_{N+\ell}$.
In other words, $\bm{v}_0$ (resp. $\mathcal{A} \bm{v}_0$) is an alternative expression of the eigenfunction $\Phi^{(N+\ell)}$ (resp.  $\Psi^{(N)}$ ) in
  terms of $\mathfrak{sl}_2$-representation. Hence, by Theorem \ref{thm:decomposition}, neither  $\bm{v}_0$ nor  $\mathcal{A} \bm{v}_0$ can
  be the eigenvector of (the corresponding image of) $J_\ell$ on $V_\lambda$. 

  Let us now consider the action of $J_{\ell}$ in $V_{\lambda}$. We denote $J_{\ell} |_{V_{\lambda}}$ by $J(\lambda)$. Since the eigenvalues of
  $J_{\ell}$ are $\pm \mu_\lambda$, we must have $\tr J(\lambda) = 0$. It follows that $\det J(\lambda) = - p_{\ell}(\lambda; g,\Delta)$ by the Cayley-Hamilton theorem. Therefore, Conjecture \ref{conject:main} is rewritten as
  \[
    \det \mathcal{A}^\ell(N, g, \Delta) = - \det J(N+\frac{\ell}{2}-g^2).
  \]
  Notice that, although $J(N+\frac{\ell}{2}-g^2) \bm{v}_0 = c\,\mathcal{A} \bm{v}_0$ for some constant $c$ by Corollary \ref{cor:Jaction}, this approach does not seem to give nor simplify the proof of the conjecture. In Figure \ref{dia:diagramJaction} we show the diagram of the action of the operators $J(\lambda)$ and $\mathcal{A}$ in the particular elements of the degenerate eigenspace $V_{\lambda}$. The conjecture is reduced to showing that $\mathcal{A}$ sends $\mathcal{A} \bm{v}_0$ to $\bm{v}_0$ (i.e. the action giving by the squiggly arrow in the diagram of Figure \ref{dia:diagramJaction}), equivalently that  ${\rm tr} A = 0$. 

  \begin{figure}[h]
    \centering
    \begin{tikzcd}
  V_{\lambda}\ni \Phi^{(N+\ell)} 
  \arrow[r,mapsto,"\simeq"]
  \arrow[d,mapsto,"J(\lambda)"']
  & \bm{v}_0\in \tilde{V}_\lambda
  \arrow[d,mapsto,"\mathcal{A}"',shift right =3.5ex]
  \\
  V_{\lambda}\ni \Psi^{(N)} \arrow[u,mapsto,shift right=1.5ex]   \arrow[r,mapsto,"\simeq"] &  \mathcal{A}\bm{v}_0\in \tilde{V}_\lambda
  \arrow[u,squiggly,shift left=1ex,"?"']
\end{tikzcd}
    \caption{Actions of $J(\lambda)$ and $\mathcal{A}$}
    \label{dia:diagramJaction}
  \end{figure}

  Another possible approach to the proof may be to consider the analytic continuation of the non-unitary principal series
  representation to discrete series representation, which are regarded as the sub-quotient of non-unitary principal series
  (cf. \cite{HT1992, VR1976}). In fact, the non-Juddian exceptional eigenfunction, with eigenvalue of the form $N+\frac{\ell}2-g^2$ for
  some $N\in \N$, is regarded as a vector that belongs to one of the lowest weight irreducible representations (i.e. discrete series representations) of $\mathfrak{sl}_2(\R) $\cite{KRW2017}. 

\begin{rem}
  Recall that Juddian eigenstates are captured in a finite dimensional irreducible representation \cite{W2016JPA}. Since finite dimensional representations never co-exist with the discrete series representations as sub-representation nor sub-quotient of the same non-unitary principal series \cite{KRW2017}, there is no chance that Juddian and non-Juddian can share the same eigenvalue.
\end{rem}

It is also worth remarking that the distinction of spherical and non-spherical representations given in \cite{W2016JPA}
does not give a ``parity'' of the eigenspaces of $\HRabi{\ell}$. Actually, when $\ell$ is even the pair $(\bm{v}_0, \mathcal{A} \bm{v}_0)$ belongs either (spherical, non-spherical)-representations or (non-spherical, spherical)-representations depending on the parity of $N$, but when $\ell$ is odd the pair belongs either (spherical, spherical)-representations or (non-spherical, non-spherical)-representations (see Table 1 and 2 in \cite{W2016JPA}).

\section{Consequences of the main conjecture}
\label{sec:symm-decomp}

In this section we assume that Conjecture \ref{conject:main} holds and discuss several significant consequences. In addition, some of the
results of this section can be considered as further evidence for the validity of Conjecture \ref{conject:main}.

First, it is natural to attempt to extend the local property (i.e. observed at degenerate points) of Conjecture \ref{conject:main}
to get a determinant expression for the polynomials $p_{\ell}(x;g,\Delta)$ in general. 

\begin{thm} \label{thm:detexpP}
  Under the assumption of Conjecture \ref{conject:main}, the polynomial $p_{\ell}(x;g,\Delta)$ is given by
  \begin{equation} 
    \label{eq:polyPandA}
  p_{\ell}(x;g,\Delta) = \det \left( \Delta^2 \bm{I}_\ell + \bm{M}_{\ell}(x,g) \right),
  \end{equation}
  where $\bm{M}_{\ell}(x,g)$ is the matrix
  \[
    \bm{M}_{\ell}(x,g) = \Tridiag{((2g)^2-\ell+2i-1)(x - \tfrac{\ell}2 + g^2 + i) }{(x - \tfrac{\ell}2 + g^2 + i)}{-i(\ell-i)(x - \tfrac{\ell}2 + g^2 + i+1)}{1\le i\le \ell}.
  \]
\end{thm}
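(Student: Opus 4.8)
The plan is to read off the determinant expression directly from the determinant formula for $A^\ell_N(u,v)$ recalled above, invoking Conjecture \ref{conject:main} only to replace $p_\ell$ by an appropriate substitution of $A^\ell_N$.

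First I would upgrade the pointwise identity of Conjecture \ref{conject:main} to an identity of polynomials in $N$. By the conjecture, the two expressions $p_\ell(N+\tfrac{\ell}{2}-g^2;g,\Delta)$ and $A^\ell_N((2g)^2,\Delta^2)$ agree for every integer $N\ge 0$; as already noted, both are polynomials in $N$ of degree $\ell$, so agreement at infinitely many values of $N$ forces them to be equal as polynomials in $N$. Treating $N$ as a formal variable and carrying out the invertible affine substitution $N = x - \tfrac{\ell}{2} + g^2$ then gives the polynomial identity in $x$
\[
  p_\ell(x;g,\Delta) = A^\ell_{\,x-\frac{\ell}{2}+g^2}\bigl((2g)^2,\Delta^2\bigr),
\]
where the right-hand side denotes the degree-$\ell$ polynomial $A^\ell_N((2g)^2,\Delta^2)$ in $N$ evaluated at $N = x - \tfrac{\ell}{2} + g^2$. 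It therefore suffices to rewrite this right-hand side in the claimed tridiagonal form.

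Next I would transform the determinant expression
\[
  A^\ell_N(u,v) = \frac{(N+\ell)!}{N!}\det\Tridiag{u+\frac{v}{N+i}-\ell+2i-1}{1}{-i(\ell-i)}{1\le i\le \ell}
\]
by absorbing the prefactor $\frac{(N+\ell)!}{N!} = \prod_{i=1}^{\ell}(N+i)$ into the determinant. Multiplying the $i$-th row of the tridiagonal matrix by the scalar $(N+i)$ multiplies the determinant by $(N+i)$, so performing this for all rows accounts exactly for the prefactor and clears every denominator $\frac{1}{N+i}$. In the resulting matrix, the $i$-th row carries diagonal entry $(N+i)(u-\ell+2i-1)+v$ and super-diagonal entry $(N+i)$, while the sub-diagonal entry at position $(i+1,i)$ becomes $-i(\ell-i)(N+i+1)$, since the original entry $-i(\ell-i)$ lies in row $i+1$, which is scaled by $(N+i+1)$.

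Finally, setting $u=(2g)^2$, $v=\Delta^2$ and $N+i = x-\tfrac{\ell}{2}+g^2+i$, I would verify that this matrix is precisely $\Delta^2\bm{I}_\ell + \bm{M}_\ell(x,g)$: the diagonal entries become $\Delta^2 + \bigl((2g)^2-\ell+2i-1\bigr)(x-\tfrac{\ell}{2}+g^2+i)$, the super-diagonal entries become $(x-\tfrac{\ell}{2}+g^2+i)$, and the sub-diagonal entries become $-i(\ell-i)(x-\tfrac{\ell}{2}+g^2+i+1)$, matching the definition of $\bm{M}_\ell(x,g)$ term by term. Since the whole argument takes the conjecture as given, the genuine content lies in Conjecture \ref{conject:main} itself; the only place demanding care is the sub-diagonal bookkeeping in the row-scaling, where one must attach the factor $(N+i+1)$ --- rather than $(N+i)$ --- to the entry in position $(i+1,i)$. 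I would also note that the argument is reversible: evaluating the determinant identity at $x = N+\tfrac{\ell}{2}-g^2$ returns Conjecture \ref{conject:main}, which is the sense in which the two are equivalent.
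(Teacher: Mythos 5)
Your proposal is correct and follows essentially the same route as the paper: the paper defines $q(x;g,\Delta)$ to be the determinant on the right-hand side, verifies (by exactly the row-scaling computation you spell out, absorbing $\tfrac{(N+\ell)!}{N!}=\prod_{i=1}^{\ell}(N+i)$ into the rows) that $q(N+\tfrac{\ell}{2}-g^2;g,\Delta)=A_N^{\ell}((2g)^2,\Delta^2)=p_{\ell}(N+\tfrac{\ell}{2}-g^2;g,\Delta)$ for $N=0,1,\ldots,\ell$, and concludes since two polynomials of degree $\ell$ in $x$ agreeing at $\ell+1$ points must coincide. Your only (immaterial) deviations are that you interpolate in $N$ at infinitely many points and then substitute $N=x-\tfrac{\ell}{2}+g^2$, rather than interpolating in $x$ at $\ell+1$ points, and that you make explicit the tridiagonal bookkeeping the paper compresses into ``we verify''; your treatment of the sub-diagonal factor $(N+i+1)$ there is correct.
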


\begin{proof}
 Denote by $q(x; g, \Delta)$ the right-hand side of equation (\ref{eq:polyPandA}). Then, 
 we verify that
 \[
   q(N + \frac{\ell}{2} -g^2; g,\Delta) = A_{N}^{\ell}((2g)^2,\Delta^2) = p_{\ell}(N + \frac{\ell}{2} -g^2;g,\Delta),
 \]
 holds for $N = 0,1,\ldots,\ell$. Since $q(x; g, \Delta)$ and $p_{\ell}(x;g,\Delta)$ are polynomials of degree $\ell$ the equality follows.
\end{proof}

\begin{rem}
    Clearly, an unconditional proof of Theorem \ref{thm:detexpP} is equivalent to a proof of the main Conjecture \ref{conject:main}.
However, it is not easy to know which statement is more difficult to prove.
The complex geometric picture in \S \ref{sec:hyperelliptic} may be useful for such an approach.
\end{rem}

Note that the determinant expression of the tridiagonal matrix in the right-hand side of \eqref{eq:polyPandA} defines a family of polynomials by a three-term recurrence relation (similar to the case of orthogonal polynomials). In particular, it gives an expression of $p_\ell(x; g,\Delta)$ in terms of sums and products of lower degree polynomials. However, there does not appear to be a way to obtain the general form of the operator $J_{\ell}$ from such expression.

Let us recall from \cite{RBW2021} that the roots of the polynomial $p_{\ell}(x;g,\Delta)$ provide a good approximation for the
  first $\ell$ eigenvalues for $g/\Delta\gg 1$ (the deep strong coupling region \cite{YS2018}) for the explicitly computed cases. Concretely, for a fixed $\Delta>0$ the curve
\[
  p_{\ell}(E; g ,\Delta) = 0
\]
in the $(E,g)$ approximates the first $\ell$ eigencurves of $\HRabi{\ell}$. This suggests another relation of the hyperelliptic curve
\eqref{eq:hyperelliptic} and the spectrum of $\HRabi{\ell}$. The explanation of the approximation property and its relation with
the hidden symmetry is one of the open problems in the study.  

Using the determinant expressions of Theorem \ref{thm:detexpP}, we can give further evidence for Conjecture \ref{conject:main}.
In Figure \ref{fig:Eigencurves2} we show the spectral curves of $\HRabi{\ell}$ along with the curves defined by the right hand
side of \eqref{eq:polyPandA}, that is, $p_{\ell}(x;g,\Delta) =0$, for cases that the polynomial $p_{\ell}(x;g,\Delta)$ has not been
computed directly. The curves given by the determinant expression have the expected shape and approximate the first $\ell$
spectral curves for $g/\Delta \gg 1$ as expected. In \S \ref{sec:approximation-1} we revisit the approximation property
from a geometric point of view.

\begin{figure}[h!]
  \centering
  \subfloat[$\ell = 6$]{
    \includegraphics[height=4cm]{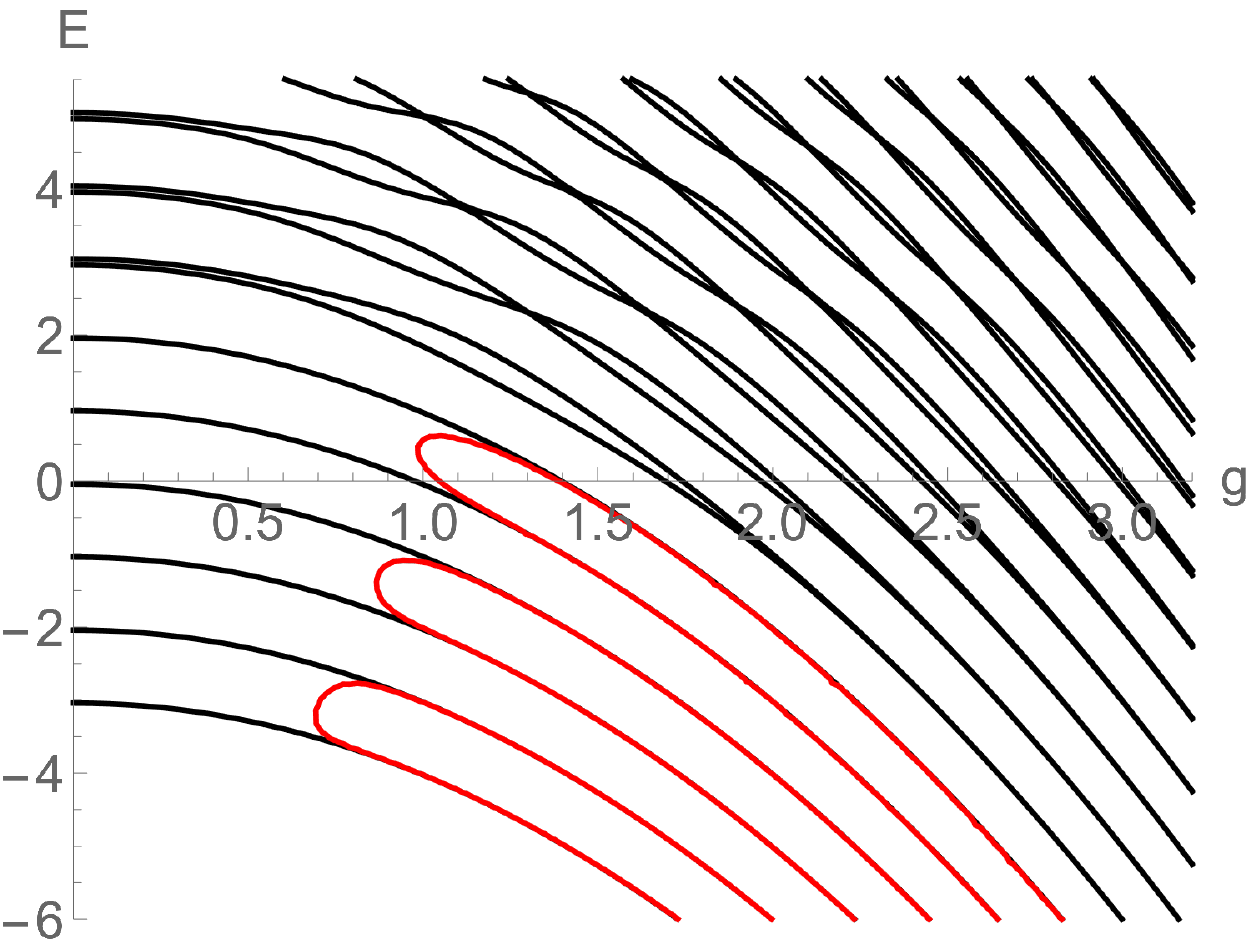}}
  ~
  \subfloat[$\ell = 10$]{
    \includegraphics[height=4cm]{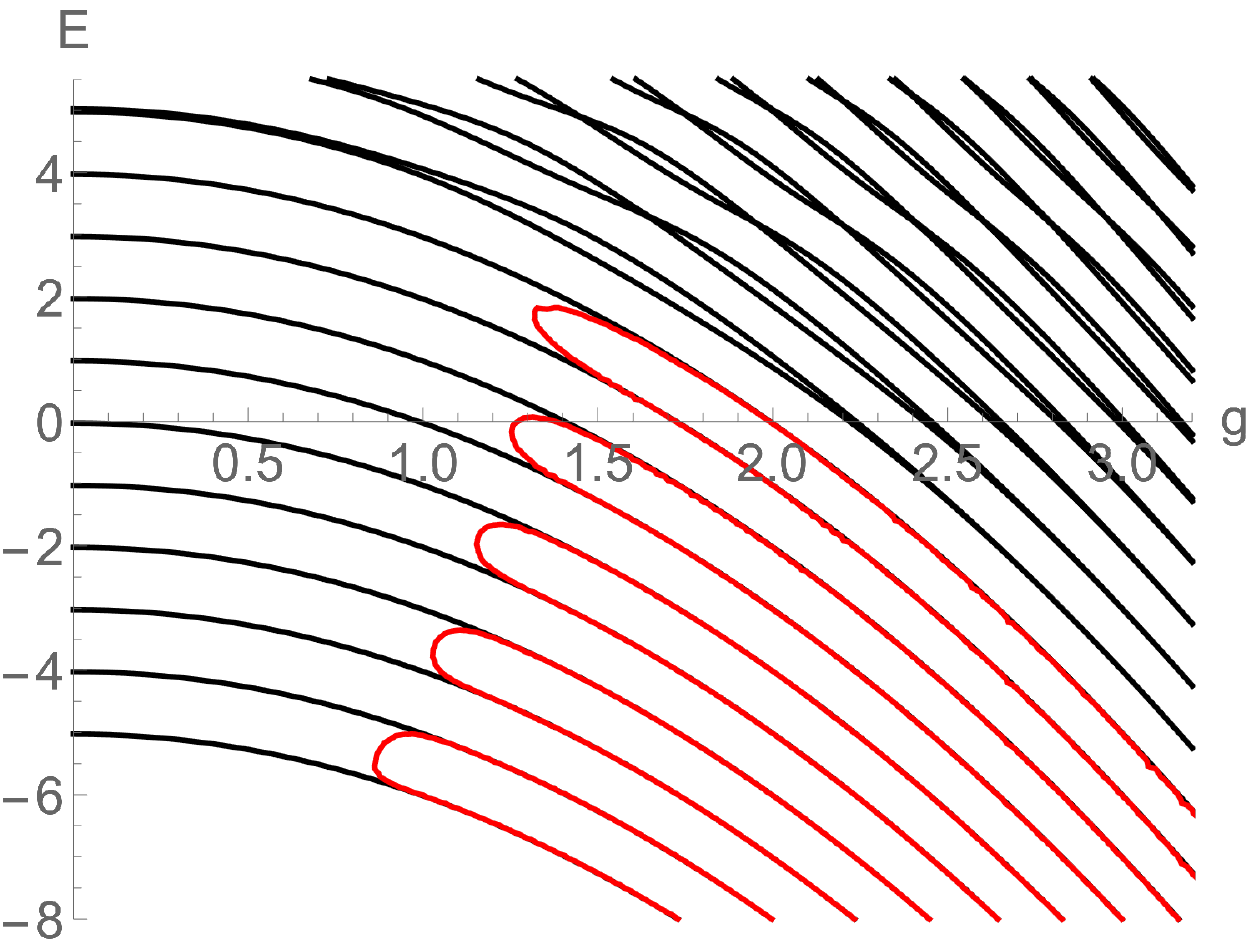}}
  ~
  \subfloat[$\ell = 17$]{
    \includegraphics[height=4cm]{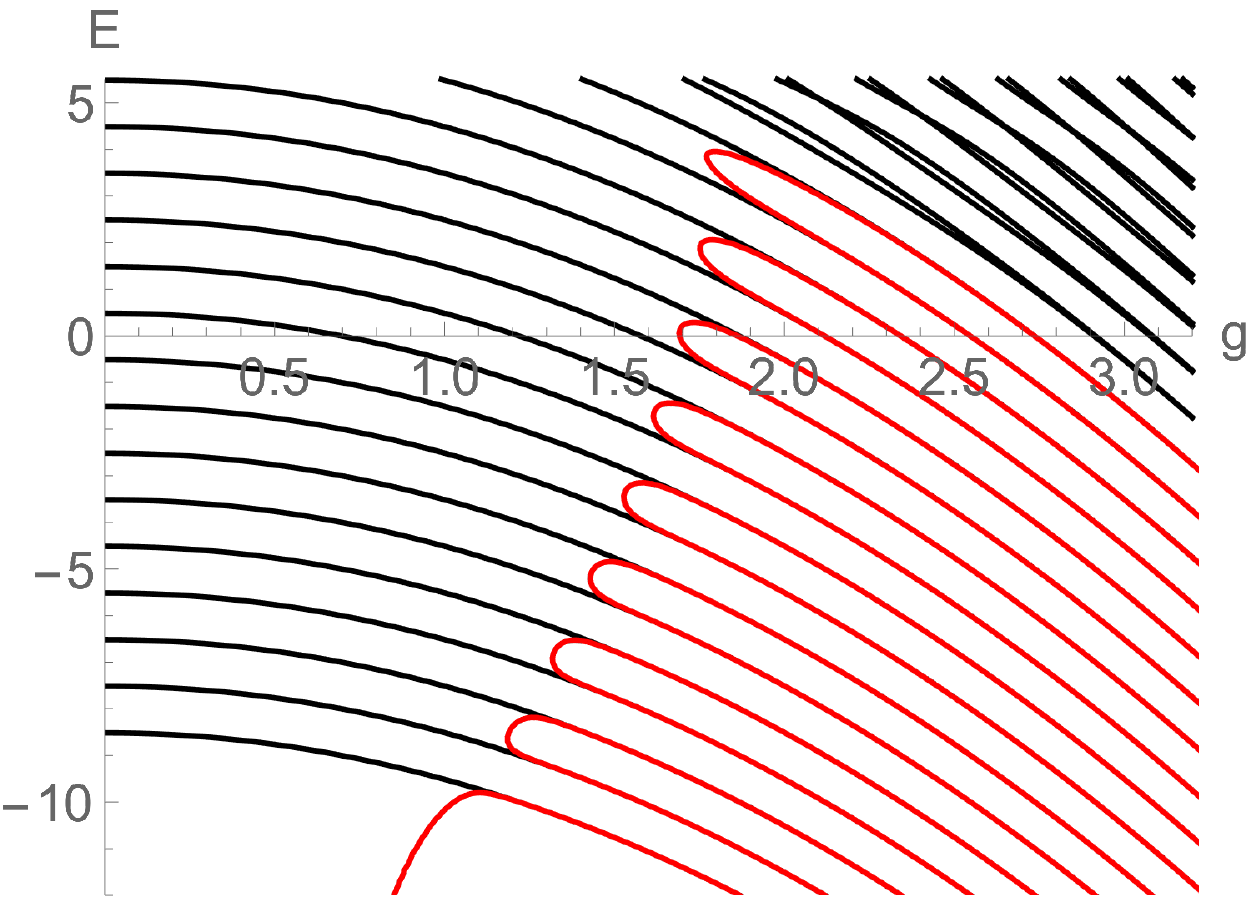}} \\
  \caption{Spectral curves (grey) and curves defined by the determinant expressions \eqref{eq:polyPandA} (red) for $\Delta = \frac12$.}
  \label{fig:Eigencurves2}
\end{figure}

\begin{rem}
  It would be interesting to verify if for other biased models with hidden symmetry and with
quadratic relation between the Hamiltonian and the commuting operator (e.g. \cite{LLMB2021a, LB2020,XC2021}) similar approximation properties hold.
\end{rem}

As we have discussed before, the question of determining whether the kernel $\Jker{\ell}$ of the operator $J_{\varepsilon}$ is trivial or not is
a delicate one and it appears to be deeply connected to the question of whether the parity decomposition of the QRM can be extended to the general \ibQRM{\ell}.
In \S~\ref{sec:preliminary-results} we showed there are no exceptional eigenfunctions of Juddian type in $\Jker{\ell}$ and under the
assumption of Conjecture \ref{conject:main} that there are no exceptional eigenvalues in general.

  We now observe some applications of the determinant expression of Theorem \ref{thm:detexpP} to the
  analysis of the kernel of $J_{\ell}$. Let $\ell\geq0$ and $N \geq 1$, we recall that for fixed $u>0$ (resp. $v>0$) the constraint polynomial
  $\cp{N,\ell /2}{N}(u,v)$ has real roots with respect to the variable $v$ (resp. $u$). By the divisibility  relation \eqref{eq:div} the
  same property holds for the polynomial $A_N^{\ell}(u,v)$. Currently it is not known whether an analog of the divisibility relation
  \eqref{eq:div} involving the polynomial $p_{\ell}(x; g , \Delta)$ for $x - \frac{\ell}{2} +g^2 \notin \Z_{\ge0}$
  exists and even if the answer is affirmative the nature of the objects corresponding to the constraint polynomials is unknown.
  Nevertheless, it is natural to expect that for $\Delta\neq 0$ we have the property
  \begin{equation}
    \label{eq:claimS}
    \text{ the polynomial } p_{\ell}(x; g, \Delta) \text{ has real roots with respect to the variable $g$ for } 
    x - \tfrac{\ell}{2} + 1 +g^2 > 0 \tag{$\star$}.
  \end{equation}
  
\begin{lem} \label{lem:positive}
  Suppose that Conjecture \ref{conject:main} holds. For fixed $g,\Delta>0$, if $x > \frac{\ell}{2}- 1 -g^2$ with $g>\frac{\sqrt{\ell-1}}{2}$ we have
  \[
    p_{\ell}(x; g, \Delta) >0.
  \]
  Moreover, further assuming  \eqref{eq:claimS}, we obtain the same estimate $p_{\ell}(x; g, \Delta) >0$ when $x > \frac{\ell}{2}- 1 -g^2$
  for any $g >0$.
\end{lem}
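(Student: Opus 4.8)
The plan is to read everything off the determinant formula \eqref{eq:polyPandA} of Theorem \ref{thm:detexpP} (valid under Conjecture \ref{conject:main}) and to exploit the standard three-term recurrence satisfied by the leading principal minors of a tridiagonal matrix. Writing $t = x - \tfrac\ell2 + g^2$ and $u = (2g)^2$, the $\ell \times \ell$ matrix $\Delta^2 \bm{I}_\ell + \bm{M}_\ell(x,g)$ has diagonal entries $a_i = \Delta^2 + (u - \ell + 2i - 1)(t+i)$, superdiagonal entries $b_i = t + i$, and subdiagonal entries $c_i = -i(\ell-i)(t+i+1)$. Let $D_k$ denote the determinant of its top-left $k\times k$ block, so $p_\ell(x;g,\Delta) = D_\ell$, and recall
\[
  D_k = a_k D_{k-1} - b_{k-1}c_{k-1}\,D_{k-2}, \qquad D_0 = 1,\ D_1 = a_1.
\]

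For the first assertion I would simply check the sign of each ingredient under the hypotheses $g > \tfrac{\sqrt{\ell-1}}2$ (i.e.\ $u > \ell - 1$) and $x > \tfrac\ell2 - 1 - g^2$ (i.e.\ $t > -1$). Since $t > -1$ gives $t + i > 0$ and $t + i + 1 > 0$ for every $i \ge 1$, and $u > \ell - 1$ gives $u - \ell + 2i - 1 \ge u - \ell + 1 > 0$, every diagonal entry satisfies $a_i > 0$, while every off-diagonal product satisfies $b_i c_i = -i(\ell-i)(t+i)(t+i+1) \le 0$ for $1 \le i \le \ell - 1$. Hence $-b_{k-1}c_{k-1} \ge 0$, so the recurrence expresses $D_k$ as a sum of two nonnegative multiples of $D_{k-1}$ and $D_{k-2}$; since $D_0 = 1 > 0$ and $D_1 = a_1 > 0$, a trivial induction gives $D_k > 0$ for all $k$, and in particular $p_\ell(x;g,\Delta) = D_\ell > 0$. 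This settles the first assertion completely.

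For the second assertion the factor $u - \ell + 2i - 1$ may become negative for small $i$ once $g \le \tfrac{\sqrt{\ell-1}}2$, so the induction breaks down and one must instead propagate positivity from the region already covered. I would fix $\Delta > 0$ and argue by continuity on the connected region $\Omega = \{(x,g) : g > 0,\ x > \tfrac\ell2 - 1 - g^2\}$: the first assertion gives $p_\ell > 0$ on $\Omega \cap \{g > \tfrac{\sqrt{\ell-1}}2\}$, and since the leading coefficient of $p_\ell$ in $x$ equals $u^\ell > 0$ (a short determinant computation, so $p_\ell \to +\infty$ as $x \to +\infty$), it suffices to prove that $p_\ell$ has no zero inside $\Omega$. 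This is where \eqref{eq:claimS} enters: fixing $x$ and viewing $p_\ell(x;g,\Delta)$ as a polynomial in $g$ of even degree with positive leading coefficient, the hypothesis asserts that all of its roots in the relevant range are real, which controls the zero locus along each slice $\{x = \text{const}\}$. Combining this with the fact that, by Conjecture \ref{conject:main}, $p_\ell(\tfrac\ell2 - g^2 + N; g,\Delta) = A^\ell_N((2g)^2,\Delta^2) > 0$ at every exceptional value $N \ge 0$ should pin the real roots on or below the boundary parabola $t=-1$ and force the sign to be constant, hence positive, throughout $\Omega$.

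The hard part will be exactly this last step: real-rootedness alone bounds the number of sign changes of $p_\ell$ along a slice but does not by itself forbid a dip below zero between two consecutive exceptional values $\tfrac\ell2 - g^2 + N$ and $\tfrac\ell2 - g^2 + (N+1)$, nor in the sub-threshold window $(\tfrac\ell2 - 1 - g^2,\ \tfrac\ell2 - g^2)$ adjacent to the boundary. I expect the decisive extra input to be the behaviour on the boundary parabola itself: at $t = -1$ one has $b_1 = 0$ and $a_1 = \Delta^2$, so the determinant factors as $p_\ell = \Delta^2 \cdot D'$ with $D'$ an $(\ell-1)\times(\ell-1)$ determinant of the same tridiagonal shape (with shifted indices), which is amenable to an induction on $\ell$ to establish the sign along $\partial\Omega$ and, together with \eqref{eq:claimS}, to exclude zeros in the interior of $\Omega$.
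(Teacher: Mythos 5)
Your treatment of the first assertion is correct and is the same argument the paper intends: with $u=(2g)^2>\ell-1$ and $t=x-\tfrac{\ell}{2}+g^2>-1$, every diagonal entry of $\Delta^2\bm{I}_\ell+\bm{M}_\ell(x,g)$ is positive and every paired off-diagonal product $b_ic_i=-i(\ell-i)(t+i)(t+i+1)$ is non-positive, so the minor recurrence gives $D_\ell>0$ by induction; this is precisely the rigorous form of the paper's remark that the determinant expansion is ``immediately'' seen to have positive terms.

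The second assertion is where your proposal has a genuine gap, which you yourself flag. Your plan anchors the sign of $p_\ell$ at the discrete exceptional points $x=N+\tfrac{\ell}{2}-g^2$ (where Conjecture \ref{conject:main} gives $p_\ell=A^\ell_N((2g)^2,\Delta^2)>0$) and on the boundary parabola $t=-1$, hoping that real-rootedness along slices then forces constant sign on $\Omega$. As you concede, this cannot close: real-rootedness says nothing about \emph{where} the roots sit, and positivity at isolated points or on the boundary does not forbid a dip below zero in between. The ingredient you are missing --- and the actual engine of the paper's proof --- is the closed-form evaluation of the determinant expression \eqref{eq:polyPandA} at $\Delta=0$,
\[
  \det \bm{M}_{\ell}(x,g) \;=\; p_{\ell}(x; g, 0) \;=\; (2g)^{2 \ell}\,\bigl(x - \tfrac{\ell}{2} + 1 + g^{2}\bigr)_{\ell},
\]
where $(a)_{\ell}$ is the Pochhammer symbol. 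This is strictly positive on the \emph{entire} region $g>0$, $x>\tfrac{\ell}{2}-1-g^2$; equivalently, $0$ is never an eigenvalue of $\bm{M}_\ell(x,g)$ there. That is a positivity statement along the whole slice, not merely at integer points, and it is what makes a continuity argument viable.

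Given this, the paper concludes exactly as in the positivity proof for $A^\ell_N$ in \cite{KRW2017}: under \eqref{eq:claimS} the eigenvalues $\mu_1,\dots,\mu_\ell$ of $\bm{M}_\ell(x,g)$ are real; they vary continuously as $g$ moves (with $x$ fixed the pair $(u,t)$ stays in the admissible region); for $g>\tfrac{\sqrt{\ell-1}}{2}$ they are all positive by your first part together with the displayed formula; and as $g$ decreases they can never cross $0$, since $\det\bm{M}_\ell(x,g)\neq 0$ throughout. Hence every $\mu_i$ stays positive for all $g>0$, and $p_\ell(x;g,\Delta)=\prod_i(\Delta^2+\mu_i)>0$ for every $\Delta>0$. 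Note this is also where \eqref{eq:claimS} is genuinely needed: without realness, two eigenvalues could collide, pass through the complex plane (keeping the determinant nonzero), and re-emerge as negative reals, producing a zero of $p_\ell$ at some positive $\Delta^2$. Your factorization $p_\ell|_{t=-1}=\Delta^2\cdot D'$ is correct but becomes unnecessary once the $\Delta=0$ evaluation is in place.
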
 

\begin{proof}
  If $g>\frac{\sqrt{\ell-1}}{2}$ and $x > \frac{\ell}{2}- 1 -g^2$, from the determinant expression of Theorem \ref{thm:detexpP} it is
  immediate to verify that $p_\ell(x; g, \Delta)$ is a polynomial with positive coefficients and the results follows immediately.

  Next, let us assume \eqref{eq:claimS}, then since we know that for any eigenvalue $x$,  $p_{\ell}(x; g, \Delta)\ge0$, it is enough to show
  that it is not vanishing for $g,\Delta >0$. By \eqref{eq:polyPandA}, it is enough to verify that for fixed $g$ and $x$, $0$ is not an
  eigenvalue of the matrix $\bm{M}_{\ell}(x,g)$. As in the case of the polynomial $A_N^{\ell}((2g)^2,\Delta^2)$ (see Lemma 3.8 or
  Proposition 3.13 of \cite{KRW2017}), we see that
  \[
    \det \bm{M}_{\ell}(x,g) = p_{\ell}(x; g, 0) = (2g)^{2 \ell} (x - \tfrac{\ell}{2} + 1 + g^2)_{\ell},
  \]
  where $(a)_{\ell}$ is the Pochhammer symbol. By the condition $x > \frac{\ell}{2}-1 -g^2$, $\det \bm{M}_{\ell}(x,g)= 0$ if and only
  if $g=0$, then the proof follows in the same way as the proof of positivity of $A_N^{\ell}((2g)^2,\Delta^2)$ in \cite{KRW2017}.
\end{proof}

It is not easy to discount the possibility of vanishing of the polynomial $p_{\ell}(\lambda; g, \Delta)\ge0$ on the case of $\lambda \leq - \frac{\ell}2 -g^2$.
The example below shows that in order to prove that $\Jker{\ell}$ is trivial for all parameters $g>0$ it is not enough
to consider the properties of the polynomial $p_{\ell}(x; g ,\Delta)$.

\begin{ex} \label{ex:K0region}
  For $\ell=1$, we have
  \[
    p_{\ell}(\tfrac{\ell}{2}-1-g^2+\delta; g, \Delta) = \Delta^2 + 4 \delta g^2,
  \]
  and for $\delta > 0$, the polynomial has positive coefficients and therefore no zeros with $g,\Delta\in \R$.

  Similarly, for $\ell=2$, the polynomial
  \[
    p_{\ell}(\tfrac{\ell}{2}-1-g^2+\delta; g, \Delta) = \Delta^4+ \Delta^2+ 8 \Delta^2 g^2 \delta  + 4 \Delta^2 g^2+ 16 g^4 \delta^2+16 g^4 \delta,
  \]
  has positive coefficients for $\delta >0$. Note that in both cases, the polynomial $p_{\ell}(\tfrac{\ell}{2}-1-g^2+\delta; g, \Delta)$ also has positive
  coefficients in the case $\delta=0$. However, Lemma \ref{lem:positive} does not cover this case.

  The examples above also show that for $\delta<0$, the polynomial $p_{\ell}(\tfrac{\ell}{2}-1-g^2+\delta; g, \Delta)$ may have
  real zeros for $g,\Delta\in \R$. For instance, if $\ell=2$, the polynomial
  \begin{equation} \label{eq:vanishingpoly}
    p_{\ell}(-\tfrac{1}{2}-g^2; g,\Delta) = \Delta^4+\Delta^2- 4g^4,
  \end{equation}
  and it vanishes for  $4g^2 = \Delta^4+\Delta^2$. We remark that, however, that the vanishing of \eqref{eq:vanishingpoly}
  only indicates the non-triviality of $\Jker{\ell}$ if
  \[
    \lambda = -\tfrac{1}{2}-g^2 \in \Spec(H_{\ell}).
  \]

  Even if we limit the study to the case of exceptional solutions $\lambda = N \pm \frac{\ell}{2}-g^2$, we cannot
  discard the possibility that the condition $\lambda \in \Jker{\ell}$ may hold in the region $\lambda < \frac{\ell}{2}-1 -g^2$.
  For instance, for $\ell=3$ and $\Delta=4$, in Figure \ref{fig:Eigencurves3} we show the spectral curves of the \ibQRM{\ell} along with the baselines
  $E =i -\frac{\ell}{2}-g^2$ for $i=1,2,3$. The exceptional
  solutions, that is the crossings of the spectral lines with the baselines are shown with circle and square marks. It is known
  that these exceptional solutions must be of non-Juddian exceptional type (see Corollary 3.14 in \cite{KRW2017}). For the
  exceptional solutions with square marks there is a possibility that $\lambda \in  \Jker{\ell}$, however, in practice, condition
  \eqref{eq:kernelExcept} is difficult to verify numerically. For this purpose it may be useful to compute
  the intersection multiplicities of the baselines $E = i \pm \frac{\ell}{2} -g^2$ and the curve $p_{\ell}(E; g ,\Delta) = 0$ for
  arbitrary bias $\ell$.

\begin{figure}[h!]
  \centering
  \includegraphics[height=5cm]{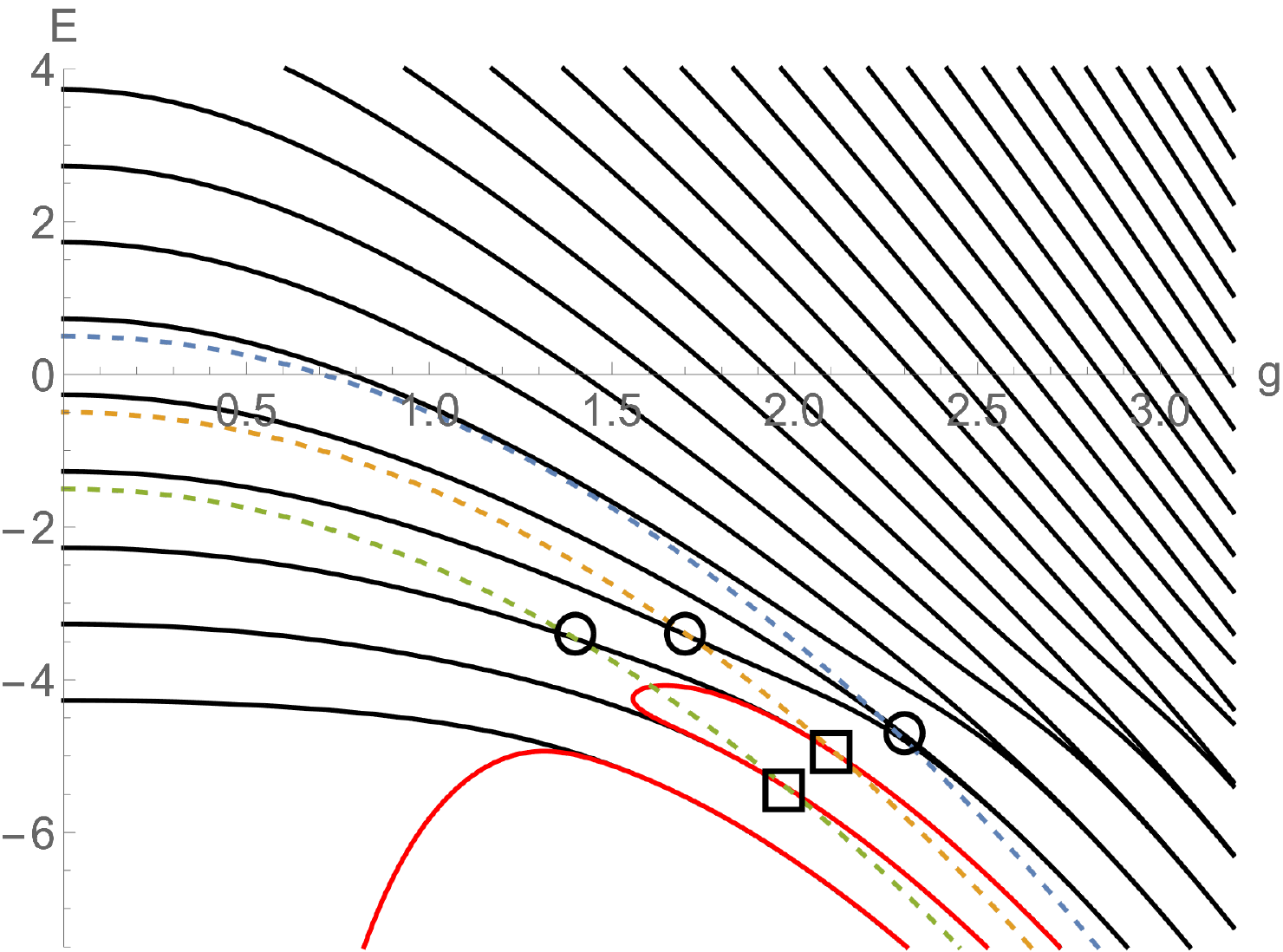}
  \caption{Spectral curves (grey) and curves defined by the determinant expressions \eqref{eq:polyPandA} (red) for $\Delta = 4$ and $\ell=3$.
    The dotted lines correspond to the baselines $E = i-\frac{\ell}{2}-g^2$ for $i=0,1,2$.}
  \label{fig:Eigencurves3}
\end{figure}

\end{ex}


Let us conclude by noting that by the determinant expression of $p_\ell(x; g, \Delta)$, we may write
\[
p_\ell(x; g, \Delta)= (2g)^{2\ell}  \prod_{i=1}^{\ell/2} ((x+\alpha_i)^2+\beta_i),
\]
for $\ell$ is even, and 
\[
p_\ell(x; g, \Delta)= (2g)^{2\ell}(x+\alpha_0)\prod_{i=1}^{(\ell-1)/2} ((x+\alpha_i)^2+\beta_i),
\]
for $\ell$ odd, with $\alpha_i, \beta_i \in \C[\frac1g, \Delta]$ for all $i$. From these expressions, the
shape of the curves
\[
  p_\ell(x; g, \Delta) = 0
\]
curves in the $(g, x)$-plane in the graphs in Figure \ref{fig:Eigencurves2} becomes clear.

\begin{rem}
  By Lemma \ref{lem:positive}, only eigenvalues satisfying $\lambda \leq \frac{\ell}{2}-1 -g^2$ may have the property that $p_{\ell}(\lambda; g, \Delta)=0$.
  The numerical computations have shown that the number of these eigenvalues appears to be $\ell$ (and that these eigenvalues satisfy
  the approximation property for $g\gg 0$). This is compatible with the {\it extended Braak conjecture} (see \cite{B2011} for the
  original case of the QRM) on eigenvalue distribution for the AQRM:
  \begin{enumerate}
  \item In each interval of length $1$ there are at most two eigenvalues of the AQRM.
  \item If an interval $[n, n+1) \, (n\in \Z)$ has two eigenvalues, then the neighboring intervals has no eigenvalues.
  \end{enumerate}
  
  Moreover, as shown in Figure \ref{fig:Eigencurves3},  we note that some of the first $\ell$ eigenvalues may be non-Juddian exceptional
  eigenvalues (cf. Corollary 5.7 in \cite{KRW2017}).

\end{rem}

\section{Geometric picture of the spectrum}
\label{sec:approximation}

  The fact that the relation between the Hamiltonian $H_{\ell}$ and $J_{\ell}$ is given by the quadratic relation
  \[
    J_{\ell}^2 = p_{\ell}(\HRabi{\ell};g,\Delta),
  \]
  determined by the polynomial $p_{\ell}(x; g, \Delta)$ suggests that in order to achieve a deeper understanding of the spectrum of the
  \ibQRM{\ell} and its properties, an investigation of the algebraic curves and surfaces determined by $y^2 = p_{\ell}(x; g, \Delta)$ seems crucial.
  While unconditionally an expression for the polynomial is unknown, under the assumption of Conjecture \ref{conject:main}, Theorem \ref{thm:detexpP} gives a determinant expression that allows the study of the geometric picture of the spectrum even for large values of $\ell$.

For fixed $g,\Delta>0$, note that the equation
\begin{equation} \label{eq:hyperelliptic}
  y^2 = p_{\ell}(x; g ,\Delta),
\end{equation}
describes a plane curve $\mathcal{C}_{\ell} = \mathcal{C}_{\ell}(g,\Delta)$
(hyperelliptic curve in the general case; elliptic curves \cite{K1992} for $\ell=3,4$) such that
\begin{align*}
  \Spec(H_{\ell}) \times \Spec(J_{\ell})& \\
         \cup  \qquad \qquad   &  \\
        \mathcal{D}_\ell(g,\Delta)         \qquad \qquad     & \subset  \qquad \mathcal{C}_{\ell}(g,\Delta),
\end{align*}
where $\mathcal{D}_\ell (= \mathcal{D}_\ell(g,\Delta))$ is the set of tuples of joint eigenvalues  $(\lambda,\mu_{\lambda})$. In other words, the joint
eigenvalues lie on the real plane curve $\mathcal{C}_{\ell}$.

In the study of the spectra of the QRM and its generalizations it is natural to consider the parameter $g$ as a variable and consider
the spectral curves. From this point of view, the equation \eqref{eq:hyperelliptic} describes an algebraic surface
$\mathcal{S}_\ell (= \mathcal{S}_\ell(\Delta)) \subset \R^3$ with variables  $(x,y,g)$.  Next, for $g>0$ the set $\mathcal{L}_\ell ( = \mathcal{L}_\ell(\Delta))$ of tuples $(\lambda(g),\mu_{\lambda}(g),g) \in \R^3$ of eigenvalues of $\HRabi{\ell}$ and $J_{\ell}$ corresponding to joint eigenfunctions is the union of spectral curves in the $(x,y,g)$ plane. By construction, we have $\mathcal{L}_\ell \subset \mathcal{S}_\ell$, that is, the spectral curves of the \ibQRM{\ell} lie on the surface $\mathcal{S}_{\ell}$. The geometric picture is illustrated in Figure \ref{fig:3dQRM} for the cases $\ell=0,1$ where the surface $\mathcal{S}_\ell$
is shown in orange.

\begin{figure}[h]
  \centering
    \subfloat[$\ell = 0,\Delta=1$]{
    \includegraphics[height=5.5cm]{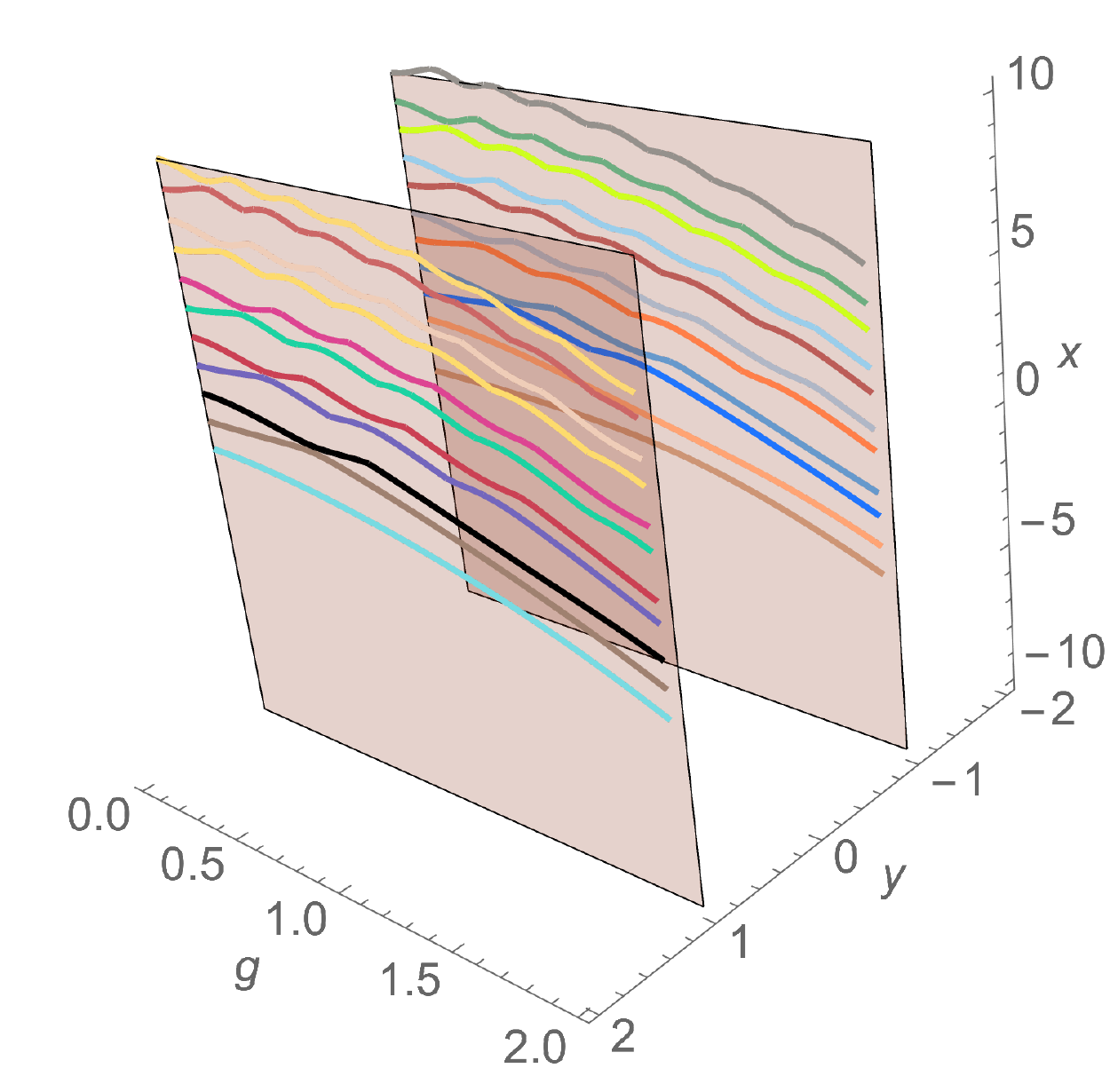}}
  ~
  \subfloat[$\ell=1, \Delta=1$]{
    \includegraphics[height=5.5cm]{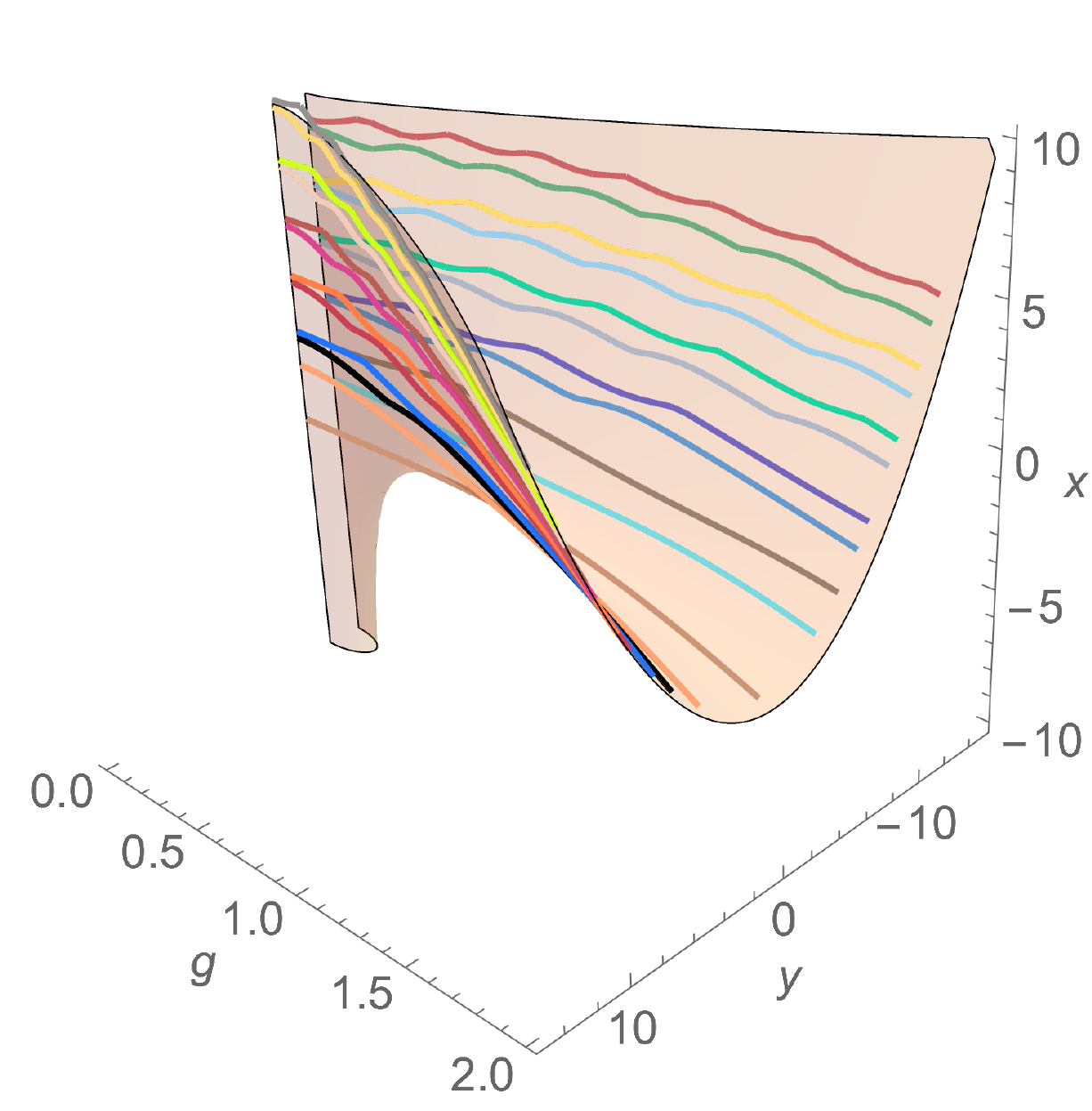}}
    ~
  \subfloat[$\ell=2,\Delta=4$]{
    \includegraphics[height=5.5cm]{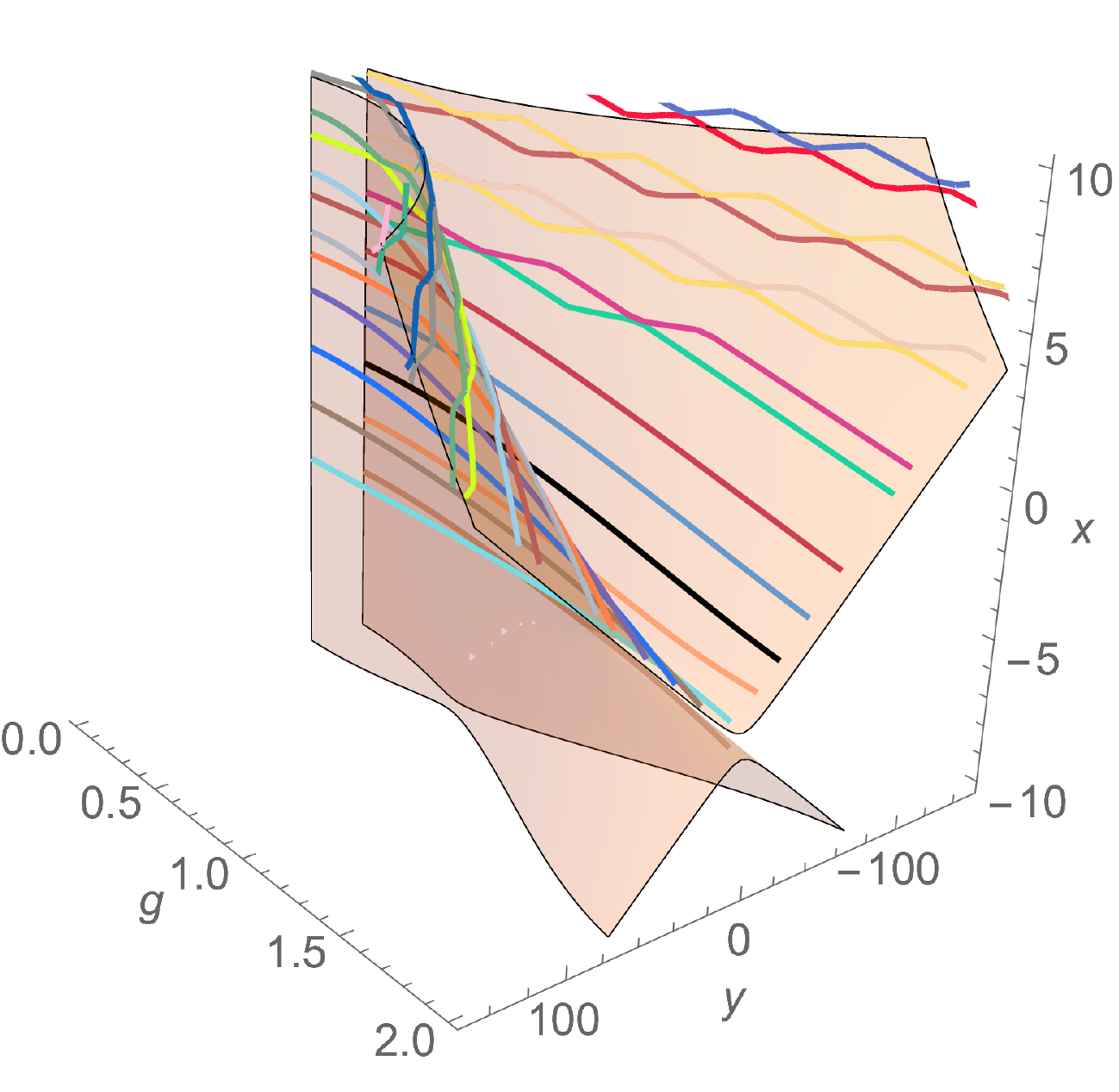}}
  \caption{Geometric picture for $\ell=0,1,2$. (a) In the case $\ell=0$ the parity decomposition appears naturally.
      (c) ``Avoided crossings'' for large values of $\Delta$.}
  \label{fig:3dQRM}
\end{figure}

The geometric picture is thus summarized in Figure \ref{fig:geompict}. By considering the projection of the surface
$\mathcal{S}_{\ell}$ for a fixed $g \in \R_{\ge 0}$ (or $\mathbb{P}^1(\R)/\Z_2$) we obtain the curve $\mathcal{C}_{\ell}$ containing the joint spectrum for the parameter $g>0$. In the elliptic case $\ell=3,4$, the surface $\mathcal{S}_\ell$ and aforementioned projection 
has the structure of an elliptic surface, we consider this case in \S\ref{sec:hyperelliptic}.
On the other hand, the projection of the surface $\mathcal{S}_\ell$ with the spectral curves onto the $(g,x)$-plane recovers
the usual picture of the spectral curves, giving rise to the approximation property of the polynomial $p_{\ell}(x; g, \Delta)$.

\begin{figure}[h]
  \centering

  \begin{tikzpicture}

    \draw node at (0,0) {$(x,y,g) \in \R^2\times \mathbb{P}^1(\R)$};
    \draw node at (0,-0.5) {$\bigcup$};
    \draw node at (0,-1)  {$y^2 = p_{\ell}(x; g ,\Delta) $};
    \draw node at (0,-1.5) {(algebraic surface)};

    \draw [<-] (-1.5,-1.5) -- node[above,sloped] {a section $g=g_0$}  (-5,-3.5);
    
    \draw node at (-6,-4) {$\underline{y}^2 = p_{\ell}(\underline{x}; g_0 ,\Delta)$};
    \draw node at (-6,-4.6) {(hyperelliptic curve ($g_0 \in \mathbb{P}^1(\R)$))};

    \draw [->] (1.5,-1.5) -- node[above,sloped] {proj. into $(g,x)$-plane} (5,-3.5);

    \draw node at (6,-4) {$0 = p_{\ell}(\underline{x}; \underline{g} ,\Delta)$};
    \draw node at (6,-4.5) {($(g,E)$-plane curves)};

     \draw [<->] (2.2,-4) -- node[above] {Approximate} node[below] {well for $g\gg1$} (4.3,-4);
    
    \draw node at (0,-4) {\{the first $\ell$ spectral curves\}};
    
  \end{tikzpicture}
  \caption{Geometric picture of the hidden symmetry of the \ibQRM{\ell}}
  \label{fig:geompict}
\end{figure}
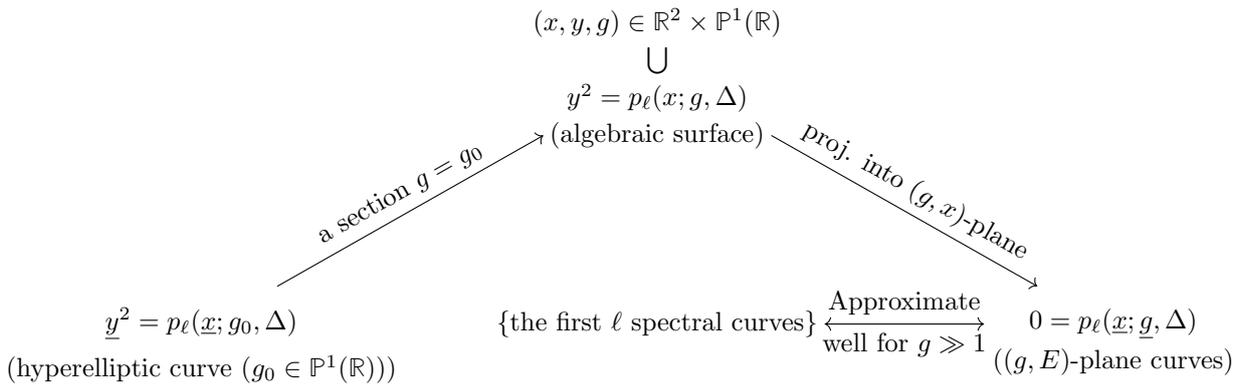

In the rest of this section we discuss the different aspects of the geometric picture described in the diagram in
Figure \ref{fig:geompict} and in \S\ref{sec:sign-eigenv-j_yell} we give a supplementary description of the method of computation
of the sign of the eigenvalues of $J_{\ell}$ used for the graphs in this section.

\subsection{Generalization of parity decomposition}
\label{sec:parity}

One of the hallmark features of the QRM ($\ell=0$ case) is the existence of a uniform (i.e. independent of the system parameters $g,\Delta$)
decomposition into invariant subspaces of the ambient Hilbert space
\[
  \mathcal{H} = \mathcal{H}_{+} \oplus \mathcal{H}_{-},
\]
by the so-called parity subspaces $\mathcal{H}_{\pm}$. The decomposition gives rise to a labeling of eigenvalues and, due the uniformity of the decomposition, to the spectral curves such that the crossings occur only between two spectral curves of different
parities.

This feature is recovered in the geometric picture described here. For the case $\ell=0$, the surface $\mathcal{S}_0$ consists of the union
of two disjoint planes $y=-1$ and $y=1$. The spectral curves of a given parity lie in one of the planes, and those of the opposite parity
in the other plane, as shown in Figure \ref{fig:3dQRM}(a). 

As mentioned in the introduction, for the case $\ell>0$ there is no uniform decomposition that holds for all values of $g>0$
allowing a concept of parity as in the QRM case. It has been suggested (e.g. in \cite{MBB2020}) that the sign of the
eigenvalues $\mu$ of $J_{\ell}$ may be used as a labeling similar to the parity in the QRM case. By Theorem \ref{thm:decomposition}
and continuity we see that indeed, spectral curves with crossings are contained in opposite half-spaces (that is, $y>0$ or $y<0$)
in the geometric picture (see for instance, Figure \ref{fig:3dQRM}(b)).

However, the situation is more complicated since the surface $\mathcal{S}_\ell$ is connected for $\ell>0$. In other words, there is the possibility that the kernel $\Jker{\ell}$ of $J_{\ell}$ is nontrivial for some $g>0$ (and therefore making the labeling ambiguous for the corresponding spectral curves) and thus is not clear at this point whether the concept of ``parity'' may be extended for the \ibQRM{\ell} with $\ell>0$ in a consistent way.
 We also note that when $\Delta\gg 0$, avoided crossings may also appear
between the spectral curves (see Figure  \ref{fig:3dQRM}(c)), this situation is a well known feature of the confluent Heun equations
that determine the spectrum of the AQRM (see e.g. \cite{SL2000} page 144-145 for the Heun case).

The conditional result of Lemma~\ref{lem:positive} shows that for $x > -\tfrac{\ell}{2} -g^2$, the surface
\[
  \mathcal{S}^*_{\ell} =  \left\{ (x,y,g) \in \mathcal{S}_\ell \, | \, x + \tfrac{\ell}{2} + g^2 >0  \right\} \subset \mathcal{S}_\ell
\]
contains two connected components. For all of the eigenvalue curves lying in $\mathcal{S}^*$, which is expected to
includes all of the eigenvalue curves having crossings, it is possible to define a labeling into two classes with the features of the parity in the QRM case.

A complete answer to the existence of an analog of the parity decomposition requires a full description of the kernel $\Jker{\ell}$ of $J_{\ell}$, which is, as we explained in \S \ref{sec:symm-decomp}, currently out of reach with the current tools, even under the assumption of Conjecture \ref{conject:main} (cf. Example \ref{ex:K0region}).

\subsection{Approximation property}
\label{sec:approximation-1}

We next discuss the approximation property of the first $\ell$ eigenvalues in the context of the geometric picture.
The projection from the $(g,x)$-plane of the geometric picture (see Figure \ref{fig:3dgx}) shows the picture of the
spectral curves in the usual way. The approximation property can be easily observed in this picture, the first $\ell$
spectral curves follow the geometry determined by the surface $\mathcal{S}$.

\begin{figure}[h]
  \centering
    \subfloat[$\ell = 1$]{
    \includegraphics[height=5cm]{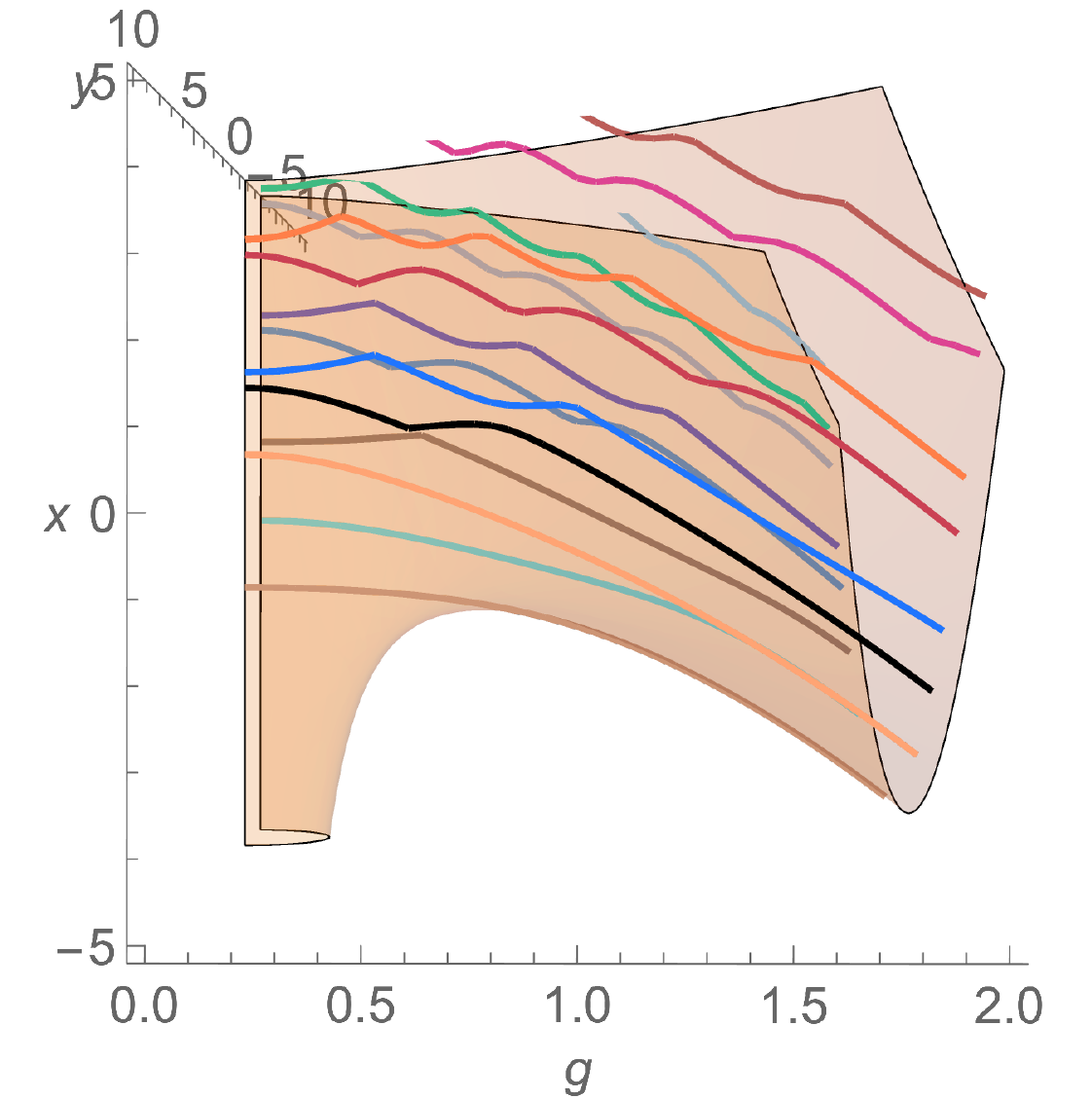}}
  ~
  \subfloat[$\ell=2$]{
    \includegraphics[height=5cm]{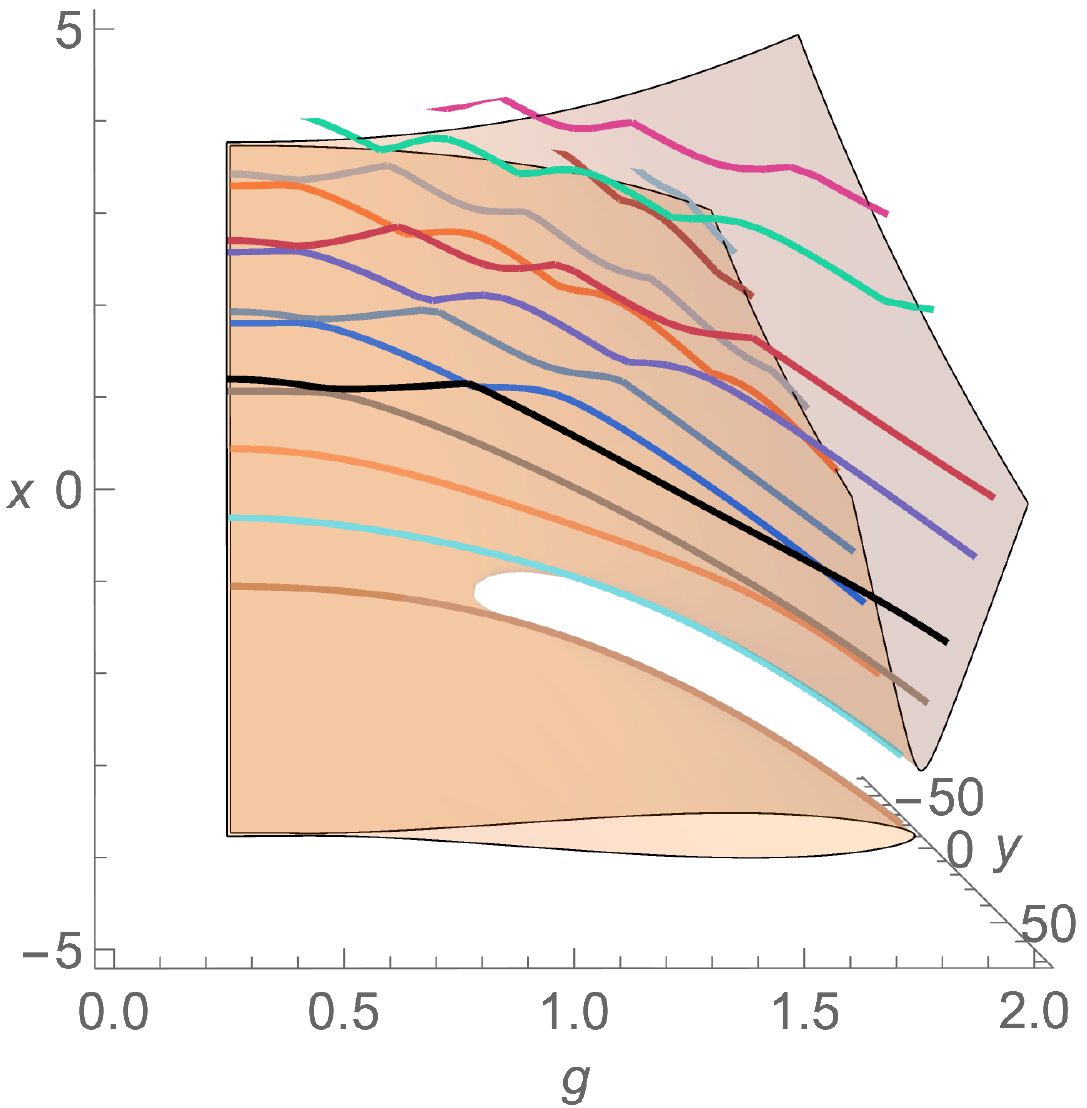}}
    ~
  \subfloat[$\ell=3$]{
    \includegraphics[height=5cm]{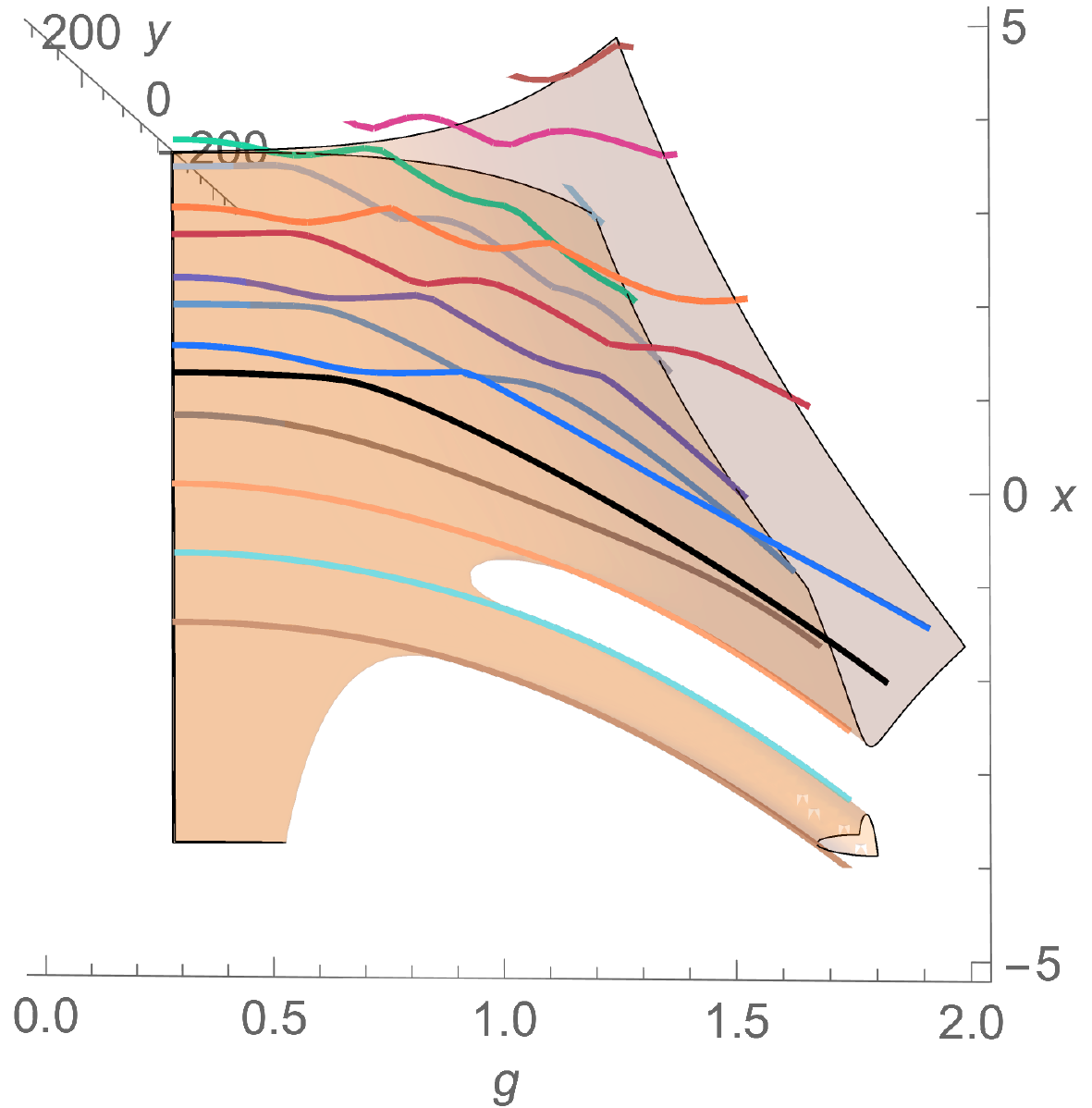}}
  \caption{View from the $(g,x)$-plane of the geometric picture for $\Delta=1$ for $\ell=1,2,3$}
  \label{fig:3dgx}
\end{figure}

The complementary projection in the $(g,y)$-plane shows how the first $\ell$ eigenvalue curves converge to the plane $y=0$ (that is,
to roots of the polynomial $p_\ell(x; g,\Delta)$ as $g$ grows larger). As we have noted before, in numerical experiments, the first
$\ell+2$ eigenvalue curves show no crossings. In the case of the QRM, the ground state and the first excited state were shown to be non-degenerate in \cite{HH2012}, however it has not been proved for general $\ell$. Note that the $\ell+1$, $\ell+2$ do not converge to the $y=0$ plane, therefore it is likely that these two curves correspond to the first two levels of the QRM case. From this point of view, the \ibQRM{\ell} appears to have a spectral structure similar to the QRM with the addition of $\ell$ levels of a different nature.

\begin{figure}[h]
  \centering
    \subfloat[$\ell = 1$]{
    \includegraphics[height=5cm]{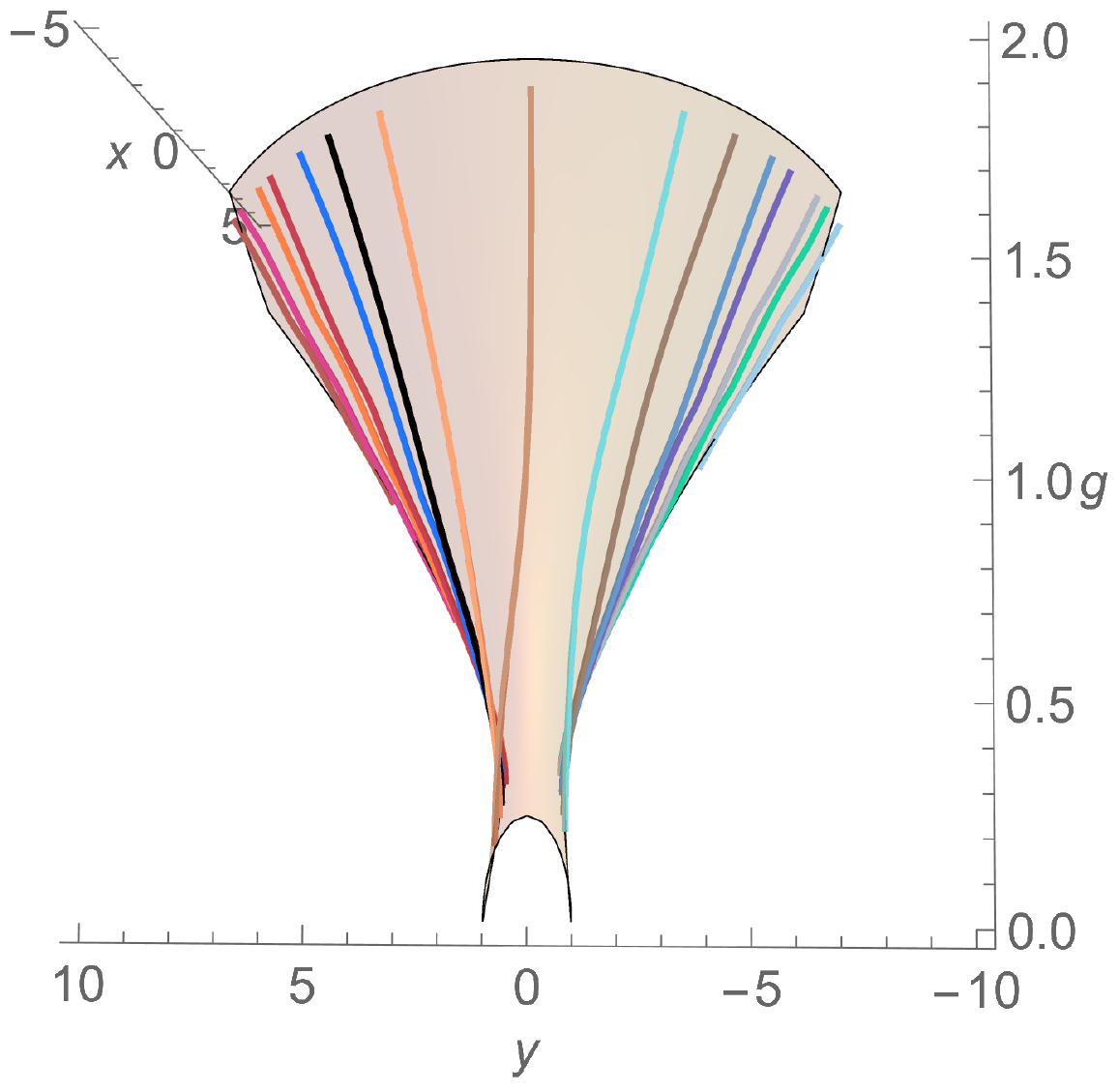}}
  ~
  \subfloat[$\ell=2$]{
    \includegraphics[height=5cm]{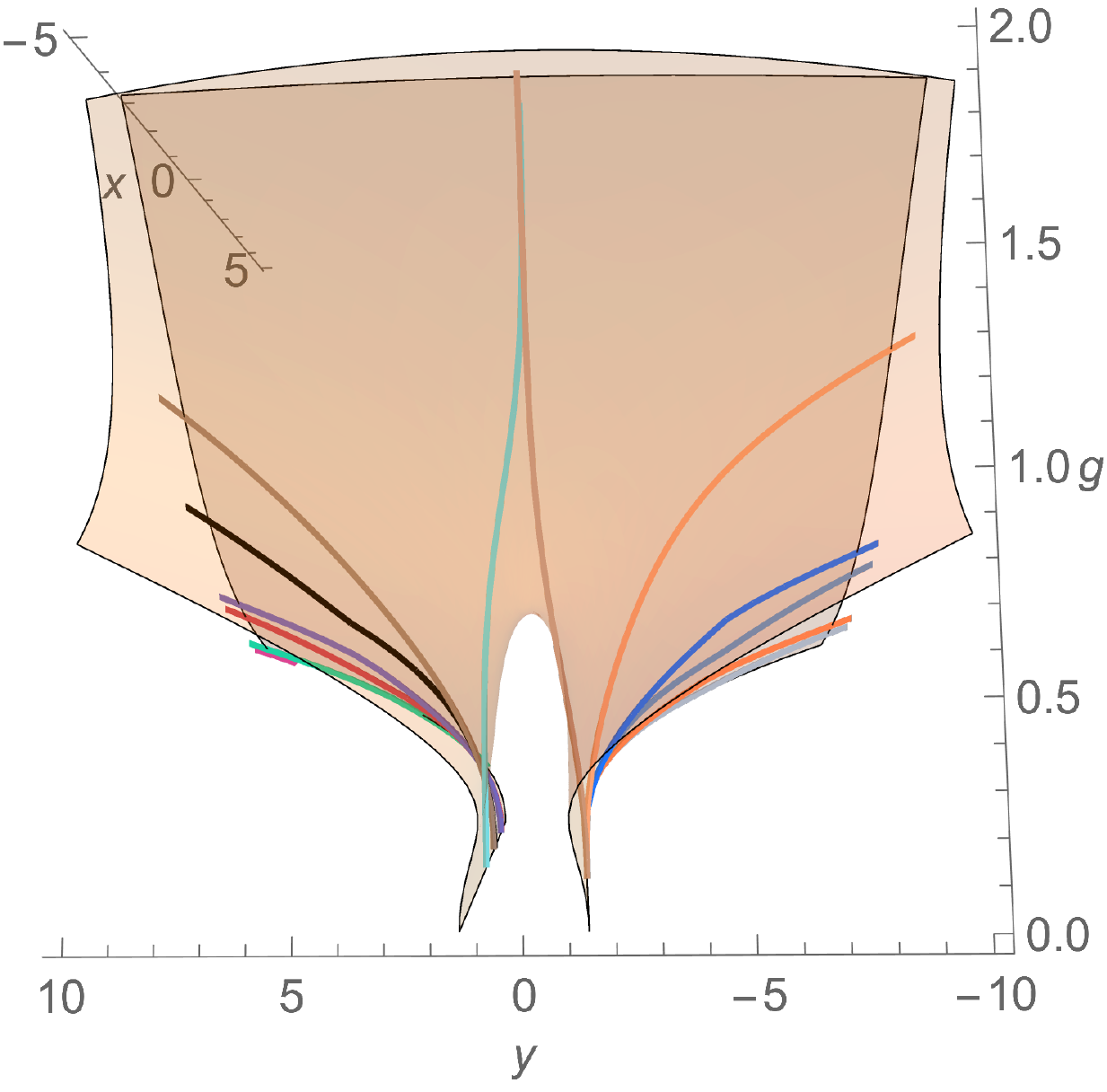}}
    ~
  \subfloat[$\ell=3$]{
    \includegraphics[height=5cm]{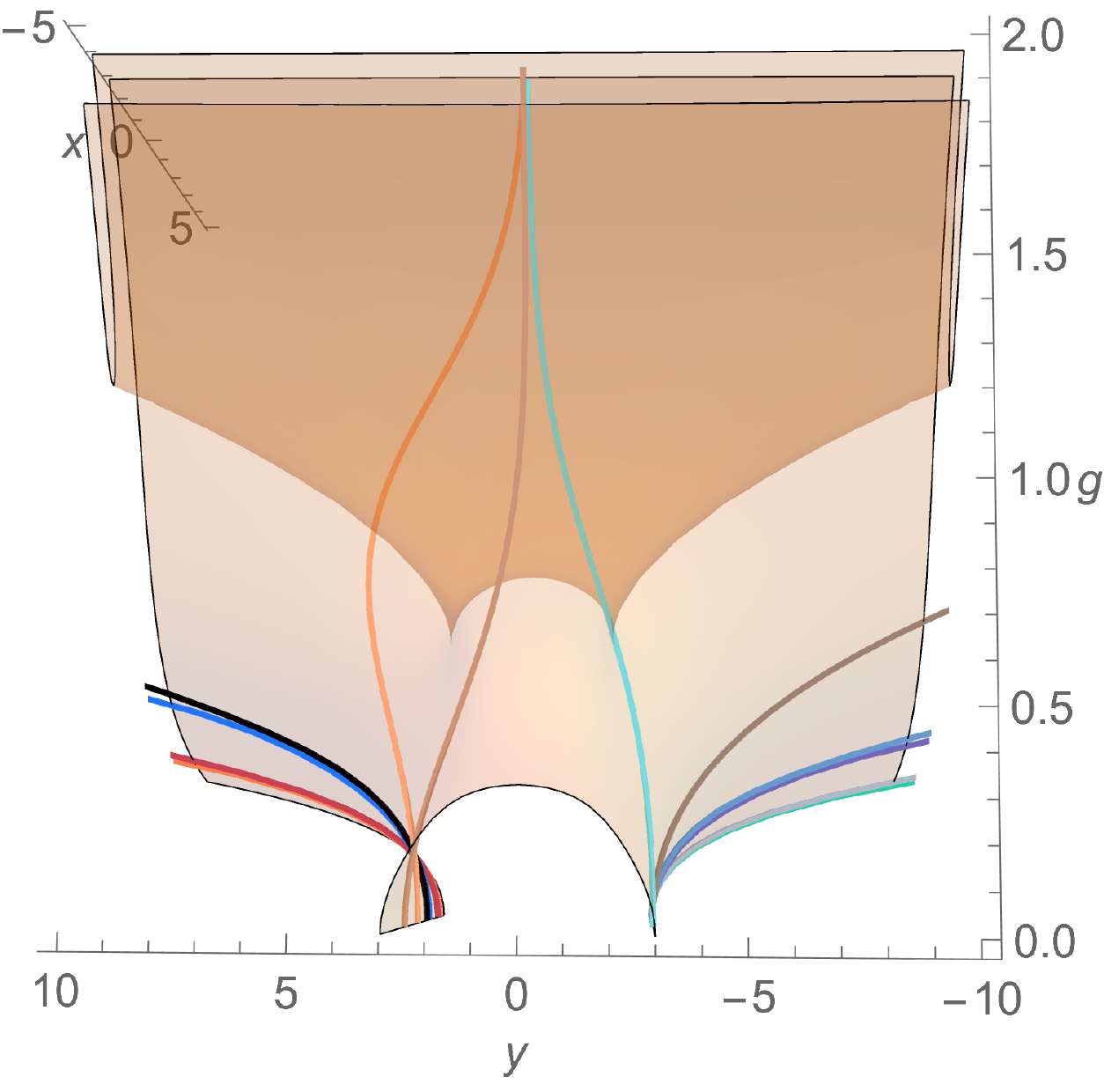}}
  \caption{View from the $(g,y)$-plane of the geometric picture for $\Delta=1$ for $\ell=1,2,3$}
  \label{fig:3dgy}
\end{figure}

In \S \ref{sec:symm-decomp} we discussed how the triviality of the kernel $\mathcal{K}_\ell$ cannot be verified only by the algebraic (or
geometric) analysis of the polynomial $p_{\ell}(x; g, \Delta)$. Similarly, the description of the first $\ell$ spectral curves, and in particular,
the approximation property require the study of both the geometric picture given by $\mathcal{S}_\ell$ and the spectrum of both  operators
$H_{\ell}$ and $J_{\ell}$. Such spectral-geometric analysis is out of the scope of this paper but it is nevertheless one of the most interesting prospects for future investigations.

  In addition, the excellent approximation given by the polynomial $p_{\ell}(x; g, \Delta)$ may allow the exploration of the spectrum from
  the point of view of Diophantine approximation as in the following claim.

  \begin{claim} Suppose $\Delta\in \Q$ and $\ell \geq 0$ are fixed. Let $E=\lambda(g,\Delta)$ be one of the first $\ell$ eigenvalues of \ibQRM{\ell}. 
    Assume that $g\in \Q$. Then, there exists a $g>0$ such that for any $\rho>2$, there are only finite many
    rational numbers $p/q\; (p, q\in \Z)$ satisfying 
    \[
      \Big|\frac{p}q -  E\Big|\leq \frac1{q^{\rho}}.
    \]
  \end{claim}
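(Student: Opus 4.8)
The plan is to recognize that this is, at bottom, an instance of Roth's theorem on the approximation of algebraic numbers, and to reduce the claim to the single assertion that \emph{for a suitable choice of $g$ the eigenvalue $E=\lambda(g,\Delta)$ is algebraic over $\Q$}. Recall that Roth's theorem asserts that if $\alpha\in\R$ is algebraic and irrational, then for every $\rho>2$ the inequality $|p/q-\alpha|\le q^{-\rho}$ has only finitely many solutions $p/q\in\Q$; and if $\alpha=a/b\in\Q$, the elementary bound $|p/q-a/b|\ge 1/(bq)$ for $p/q\neq a/b$ already forces finiteness for every $\rho>1$. Thus, once $E$ is known to be algebraic, the conclusion is immediate, and it suffices to exhibit one value $g>0$ making $E$ algebraic.

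Next I would not take $g$ generic but choose it at an \emph{exceptional point}, that is, a value of $g$ at which the eigenvalue curve $E=\lambda(g,\Delta)$ meets one of the baselines $E=i\pm\tfrac{\ell}{2}-g^2$ with $i\in\Z$. By the discussion surrounding Figure~\ref{fig:Eigencurves3} (see also Corollary~3.14 of \cite{KRW2017} for the non-Juddian case), the first $\ell$ spectral curves do cross such baselines, so such $g$ exist. At such a point one has $E=i\pm\tfrac{\ell}{2}-g^2$ exactly, with $i,\ell\in\Z$, and the existence of the exceptional solution is governed by the vanishing of an algebraic condition in $(2g)^2$ and $\Delta^2$. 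In the Juddian case this is a constraint polynomial $\cp{N,\ell/2}{N}((2g)^2,\Delta^2)=0$, which lies in $\Q[(2g)^2]$ once $\Delta^2\in\Q$ and $\ell\in\Z$ are fixed; hence $(2g)^2$ is a root of a polynomial with rational coefficients, so $(2g)^2$ is algebraic, and therefore
\[
  E=i\pm\tfrac{\ell}{2}-\tfrac{1}{4}(2g)^2
\]
is algebraic over $\Q$. Applying Roth's theorem to this $E$ (or the elementary bound when $E\in\Q$) gives the asserted finiteness for every $\rho>2$.

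The principal obstacle is exactly this algebraicity step. For a generic, non-exceptional $g$ the eigenvalue $\lambda(g,\Delta)$ is expected to be transcendental (the governing $G$-function condition is transcendental), so the claim genuinely depends on selecting $g$ among the exceptional values, and the hypotheses $\Delta\in\Q$, $\ell\in\Z$ serve precisely to place the defining constraint in $\Q[(2g)^2]$; for non-Juddian exceptional points one would additionally have to verify that the relevant existence condition is itself algebraic over $\Q$ in $(2g)^2$, which is not obvious and is part of the work. A secondary, more delicate point is compatibility with the stated hypothesis $g\in\Q$: the exceptional values of $g$ produced above are algebraic but typically irrational, so to keep $g$ rational one must either relax the hypothesis to ``$g$ algebraic'' (which is all the argument actually uses, since only the algebraicity of $E$ enters), or restrict to those $\Delta$ for which a constraint polynomial admits a positive rational root $(2g)^2$ that is a perfect rational square, in which case $E\in\Q$ and the conclusion is elementary. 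Finally, one should record that Roth's theorem is ineffective, so neither the admissible $g$ nor the bound on the number of good approximations is produced constructively; making either effective would require an effective irrationality measure for the exceptional eigenvalues, which appears out of reach with present methods.
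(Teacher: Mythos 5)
Your route is genuinely different from the paper's, and it breaks down exactly where the claim lives. The paper never tries to make $E$ itself algebraic: it takes $\alpha=\alpha(g)$ to be a root of $p_\ell(x;g,\Delta)=0$, which lies in $\Q[x]$ once $g,\Delta\in\Q$ (this is what the hypothesis $g\in\Q$ is for), applies Roth's theorem to the algebraic number $\alpha$, and then transfers the finiteness statement to $E$ by the triangle inequality, using the \emph{conjectural} excellent approximation $|E-\alpha|<\epsilon$ of the first $\ell$ eigenvalue curves by the zero set of $p_\ell(x;g,\Delta)$ for $g\gg 1$; the phrase ``there exists a $g>0$'' refers to taking the rational $g$ sufficiently large, and the price paid is a slightly weakened exponent (finiteness for $q^{-(2+\delta)}$ deduced from Roth at $q^{-(2+2\delta)}$) together with the argument remaining conditional on the approximation property. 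In this reading $E$ is allowed to be (and is expected to be) transcendental; the hidden-symmetry polynomial $p_\ell$ is the essential actor, and your proposal does not use it at all.

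The concrete gap in your approach is the algebraicity step at exceptional points. The claim concerns the \emph{first $\ell$} eigenvalues, and the paper itself notes (discussion around Figure \ref{fig:Eigencurves3}, citing Corollary 3.14 of \cite{KRW2017}) that the crossings of these low-lying spectral curves with the baselines $E=i-\tfrac{\ell}{2}-g^2$ are necessarily of \emph{non-Juddian} exceptional type: Juddian eigenvalues are always doubly degenerate, while the first $\ell+2$ curves show no crossings, so the Juddian points where your constraint-polynomial argument $\cp{N,\ell/2}{N}((2g)^2,\Delta^2)=0$ applies simply do not occur on the curves the claim is about. For non-Juddian exceptional solutions the existence condition is not a polynomial constraint in $(2g)^2$ (it is a transcendental condition coming from the confluent Heun picture), so algebraicity of $E$ at such points is unproven; you flag this as a verification left ``as part of the work,'' but it is not a secondary check --- it is the entire difficulty, and there is no evidence it can be carried out. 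Compounding this, the exceptional values of $g$ you would select are typically irrational algebraic numbers, contradicting the stated hypothesis $g\in\Q$ that the paper's argument genuinely uses, whereas the paper's $g$ ranges over all sufficiently large rationals. As it stands, your proposal therefore does not prove the stated claim, even conditionally.
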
 

  Indeed, let $\alpha=\alpha(g)$ be a root of the equation $p_\ell(x;g,\Delta)=0$ with respect to $x$. Since $g\in \Q$,  $\alpha$ is an algebraic number. Hence,
  the famous Roth theorem (e.g. \cite{HS2000}) asserts that for any $\delta>0$, there are only finite many rational numbers $p/q$ satisfying 
  \[
    \Big|\frac{p}q -  \alpha\Big|\leq \frac1{q^{2+2\delta}}.
  \]
By assumption, the conjectural excellent approximation of the first $\ell$ eigenvalue curves by the curve defined by $p_\ell(x;g,\Delta)=0$ in the
  $(g, x)$-plane for $g\gg 0$ holds. Thus, for any $\epsilon>0$, if $g$ is sufficiently large we have $|E-\alpha|<\epsilon$. Now we take $\epsilon$ such that
  \[
    0<\epsilon \leq \min_{p/q}\Big[\frac1{q^{2+\delta}}- \frac1{q^{2+2\delta}}\Big].
  \]
  Then, it follows that 
  \[
    \Big|\frac{p}q -  E\Big| \leq  \Big|\frac{p}q -  \alpha\Big| + |E-\alpha| \leq  \frac1{q^{2+2\delta}}+ \epsilon 
    \leq \frac1{q^{2+\delta}}.
  \]
  Hence the assertion follows.

  A complete proof of the claim requires the clarification of the nature of the excellent approximation including its proof.
  Nevertheless, the claim suggests that the study of the properties of the eigenvalues of the \ibQRM{\ell}, no limited to the first $\ell$ eigenvalues, from the point of view of Diophantine approximation may be useful to clarify the arithmetics/geometric property of the spectrum in addition to the study of spectral zeta functions discussed in \cite{RW2021} (see also \S\ref{sec:addit-remarks-degen} below).

\begin{rem} \label{rem:gaa}

  As mentioned in the introduction, the eigenvalue curves of the generalized adiabatic approximation (GAA) \cite{LB2021b}
  capture the degenerate points of the \ibQRM{\ell} correctly by construction. In fact, the energy curves of the GAA are given by
  \[
    E^{\ell}_{N,\pm} = N+\frac{\ell}{2} - g^2 \pm  \frac{(-1)^{N+\ell}(2g^2)^\ell \Delta}{2 (N!)^{\frac{3}{2}} \sqrt{(N+\ell)!}}  \exp\left(-2g^2\right) 
    \cp{N,\frac{\ell}{2}}{N}((2g)^2,\Delta^2),
  \]
  which shows that the degenerate points are given by the roots of the constraint polynomial $\cp{N,\frac{\ell}{2}}{N}((2g)^2,\Delta^2)$ as in
  the case of the \ibQRM{\ell} (see \cite{LB2021b} for the details of the derivation). Note that the energy levels $E^{\ell}_{N,+}$ and
  $E^{\ell}_{N,-}$ are symmetric with respect the baseline $E = N + \frac{\ell}{2} - g^2$ and that the approximation only considers the
  constraint polynomials $\cp{N,\frac{\ell}{2}}{N}((2g)^2,\Delta^2)$. However, the degeneracy in the \ibQRM{\ell} in the baseline $E = N + \frac{\ell}{2} - g^2$ is caused by the divisibility relation
  \eqref{eq:div} between the constraint polynomials $\cp{N,\frac{\ell}{2}}{N}((2g)^2,\Delta^2)$ and $\cp{N+\ell,-\frac{\ell}{2}}{N+\ell}((2g)^2,\Delta^2)$
  (note particularly the difference of the sign of the parameter $\pm \frac{\ell}{2}$).  This observation raises the question of whether the GAA
  may be improved in a way that the two families of constraint polynomials appear in the energy levels resulting in a better approximation.
  
\end{rem} 

\subsection{Elliptic surfaces associated to the \ibQRM{\ell} for $\ell=3$}
\label{sec:hyperelliptic}

The geometric picture described in \S\ref{sec:approximation} is naturally obtained in terms of algebraic surfaces and curves over $\R$
since the system parameters $g,\Delta>0$ and the joint eigenvalues of $H_{\ell}$ and $J_{\ell}$ are real numbers. By Lemma \ref{lem:positive}, 
under the assumption of Conjecture \ref{conject:main}, for fixed $g,\Delta>0$, the real zeros $\lambda$ of $p_{\ell}(\lambda; g, \Delta)=0$
must satisfy $\lambda \le \frac{\ell}{2}-1 -g^2$ when $g>\tfrac{\sqrt{\ell-1}}2$, however, in general, the zeros may take complex values. It is therefore important, in order to get a better understanding of the quadratic relation \eqref{eq:Jquad2}, to consider the geometric picture as algebraic surfaces over other fields in place of $\R$, in particular over $\C$ and $\Q$. 

Recall that for the cases $\ell=3,4$, for a fixed $g>0$,  $C_\ell$ is a curve of genus $1$ over an extension of $k$ of $\Q(g,\Delta)$. Under the
additional assumption that $C_{\ell}$ has a point in $k$, the curve $C_{\ell}$ is an elliptic curve. In this section, for simplicity, we consider only the case $\ell=3$.

Let $k (\supset \Q)$ be a field and denote the projection of the surface $\mathcal{S}_3$ to $k$ by $\pi$, giving
\[
  \pi \, : \, \mathcal{S}_3 \to \mathbb{P}^1(k)
\]
(the point at infinity is dealt in the usual way e.g. by the transformation $g \to \frac{1}{g}$). This gives the structure of a non-trivial elliptic surface (see e.g. \cite{S1994}  or the review paper \cite{SS2010}).

In the case of $\ell=3$, it is convenient to consider the elliptic curve by the Kodaira-Néron model of the respective elliptic curves.
Concretely, in the Kodaira-Néron model, the elliptic curve
\begin{equation} \label{eq:hyperelliptic3}
  \mathcal{E} : y^2 = p_{3}(x; g ,\Delta),
\end{equation}
is the generic fiber of the elliptic surface $\mathcal{S}_3$. The singular fibers for the case $k=\C$ can easily be described in terms of
the parameters, as we see below in Proposition \ref{prop:singularFiber}.

Let us make the detailed computations for the case $\ell=3$. We begin from the equation \eqref{eq:hyperelliptic3}
defining an elliptic curve over the field $k$. Concretely, we obtain the elliptic curve in (affine) generalized
Weierstrass form
\begin{equation}
  \label{eq:genW}
  y^2 = x^3 + a_2 x^2 + a_4 x + a^6,
\end{equation}
with
\begin{align*}
  a_2 &= 3 \left(\Delta ^2+ 2g^2(2 g^2+1)\right) \\
  a_4 &= \Delta ^2 \left(3 \Delta ^2+4\right)+48 g^8+48 g^6+4 \left(6 \Delta ^2-1\right) g^4+12 \Delta ^2 g^2 \\
  a_6 &= \Delta ^2 \left(\Delta ^2+2\right)^2+64 g^{12}+96 g^{10}+16 \left(3 \Delta ^2-1\right) g^8+24 \left(2 \Delta ^2-1\right) g^6 \\
      &\quad +12 \left(\Delta ^4+\Delta ^2\right) g^4+\left(6 \Delta ^4+8 \Delta ^2\right) g^2.
\end{align*}
By routine transformations, we obtain the Weierstrass equation in standard form
\[
  y^2 = x^3 - 27 c_4 x - 54 c_6,
\]
with
\begin{align*}
  c_4 &= 192 \left(4 g^4-\Delta ^2\right) \\
  c_6 &= -3456 \Delta^2,
\end{align*}

Next, the change of variable
\[
  A = 8 (2g^2), \qquad B = 192 \Delta^2,
\]
gives the elliptic curve

\begin{equation}
  \label{eq:elliptic}
  \mathcal{E} : \, y^2 = x^3 - 27(3 A^2-B) x + 54(18 B).
\end{equation}
defined over $\Q$ with discriminant
\[
   {\rm Disc}(\mathcal{E})=  2^6 3^9 ( (3 A^2-B)^3 - 324 B^2).
\]      

The discriminant shows that there are singular fibers in the elliptic surface. It is easy to verify then
the following result from Tate's algorithm and Kodaira classification of singular fibers.

\begin{prop} \label{prop:singularFiber}
  Let $k=\C$. For a fixed $\Delta>0$, the singular fibers of the elliptic surface $S_3$ are all of type $I_1$ (nodal type singular fibers).
  The elliptic surface $\mathcal{S}_\ell$ has at most 6 (resp. 12) singular fibers with respect to the variable $A$ (resp. $g$).
\end{prop}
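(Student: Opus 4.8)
The plan is to determine each Kodaira type directly from the local invariants of the Weierstrass model \eqref{eq:elliptic} by means of Tate's algorithm \cite{S1994}, exploiting the elementary fact that a fibre of a locally minimal Weierstrass model is of type $I_1$ exactly when the discriminant vanishes to order $1$: no additive type has discriminant order $1$, and order $1$ is automatically $<12$, so the model is minimal there. From \eqref{eq:elliptic} I would record the only two quantities needed, namely the invariant $c_4 = 3A^2 - B$ (relevant up to the unit $6^4$) and the discriminant, whose essential factor is $D(A) := (3A^2 - B)^3 - 324 B^2$, with $B = 192\Delta^2 > 0$ fixed and $A = 16 g^2$. Over the affine $A$-line the singular fibres sit precisely over the zeros of $D$.

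The heart of the matter over the finite base is to show that $D$ has only simple zeros. Differentiating gives $D'(A) = 18\,A\,(3A^2 - B)^2$, whose zeros are $A = 0$ and $3A^2 = B$. Evaluating $D$ at these points yields $D(0) = -B^2(B + 324)$ and $D|_{3A^2 = B} = -324 B^2$, both nonzero because $B > 0$; hence $\gcd(D,D') = 1$ and every zero of $D$ is simple. By the criterion above, each finite singular fibre is therefore of type $I_1$. The same computation exhibits multiplicative reduction intrinsically: if $c_4 = 3A^2 - B$ vanished at a zero of $D$, then $D = -324 B^2 \neq 0$ there, a contradiction, so $c_4 \neq 0$ on the whole discriminant locus.

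The counting is then a degree count. As a polynomial in $A$, $D$ has degree $6$, giving at most $6$ singular fibres over the $A$-line; substituting the quadratic relation $A = 16 g^2$ produces a polynomial of degree $12$ in $g$, whence at most $12$ singular fibres over the $g$-line. Because $D(0) \neq 0$, none of the six roots equals $A = 0$, so under the double cover $g \mapsto 16 g^2$ (ramified over $A = 0$ and $A = \infty$) each of the six $I_1$-fibres splits into two $I_1$-fibres, producing exactly $12$.

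The step I expect to require genuine care is the fibre at infinity, which the affine discriminant does not see. Setting $A = 1/s$ and applying the standard rescaling $x \mapsto s^{-2} x$, $y \mapsto s^{-3} y$ brings \eqref{eq:elliptic} to a model with $a_4 = s^2(-81 + 27 B s^2)$ and $a_6 = 972 B s^6$; one checks this is minimal at $s = 0$ (since $\operatorname{ord}_s a_4 = 2 < 4$) and reads off $\operatorname{ord}_s c_4 = 2$, $\operatorname{ord}_s c_6 = 6$, $\operatorname{ord}_s D = 6$, i.e.\ a fibre of type $I_0^{*}$ over the $A$-line. The subtlety is that this does not contradict the assertion for the surface $\mathcal{S}_3$ over the physical $g$-line: the map $g \mapsto 16 g^2$ is ramified at $g = \infty$, and a ramified quadratic base change untwists $I_0^{*}$ into smooth ($I_0$) reduction, consistent with the $g$-discriminant having degree exactly $12 = 12\,\chi(\mathcal{O}_{\mathcal{S}_3})$. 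Thus over the $g$-line all singular fibres of $\mathcal{S}_3$ are of type $I_1$, while over the $A$-line the at most six finite singular fibres are $I_1$ and the fibre at infinity is $I_0^{*}$; reconciling these two pictures and verifying minimality at infinity is where the argument needs the most attention. Throughout, the positivity $B = 192\Delta^2 > 0$ is exactly what forces the two non-vanishing checks, so the conclusion holds for every fixed $\Delta > 0$.
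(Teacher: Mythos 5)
Your proposal is correct, and its core is the same as the paper's: read the Kodaira type off the Weierstrass data $c_4$, $c_6$, $\mathrm{Disc}$ of \eqref{eq:elliptic} via Tate's algorithm. But you supply three things the paper's proof only asserts or omits entirely. First, the paper simply states that the order of vanishing of the discriminant is exactly $1$; you prove it, via $D'(A)=18A(3A^2-B)^2$ together with $D(0)=-B^2(B+324)\neq 0$ and $D|_{3A^2=B}=-324B^2\neq 0$ (both forced by $B=192\Delta^2>0$), so $\gcd(D,D')=1$ and all zeros are simple — this is also what upgrades the counting from ``at most'' to exactly $6$ distinct roots in $A$ and, since none is $A=0$, exactly $12$ in $g$. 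Second, you verify cleanly that $c_4\neq 0$ on the discriminant locus (the paper's ``$c_4$ and $c_6$ do not non-vanish'' is garbled at this point), which is what certifies multiplicative rather than additive reduction. Third, and most substantively, you analyze the fiber at infinity, which the paper's proof never mentions even though the Kodaira--N\'eron model is a fibration over $\mathbb{P}^1$: your computation showing an $I_0^{*}$ fiber at $A=\infty$, which untwists to smooth ($I_0$) reduction at $g=\infty$ under the ramified double cover $A=16g^2$, is correct (the degree-$12$ discriminant in $g$ confirms it). This clarifies the precise scope of the proposition: over the $g$-line, as the paper's projection $\pi$ is actually set up, every singular fiber is indeed $I_1$; over the $A$-line the statement holds for the six affine singular fibers but the fiber at infinity is additive of type $I_0^{*}$, a caveat absent from both the statement and the paper's proof.
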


\begin{proof}
  The second statement follows from the fact that for a fixed $\Delta>0$, the equation \({\rm Disc}(\mathcal{E}) = 0 \) has exactly
  twelve solutions in $g \in \C$.
  Next, we assume that $g$ is chosen such that the discriminant vanishes. In particular, we note that the order of vanishing is
  exactly $1$ (and thus singular fibers of multiplicative type have exactly one component).
  Since the characteristic of the field is $0$, it is enough to verify the vanishing of the coefficients of \eqref{eq:elliptic}
  with respect to the chosen $g$ (see \cite{SS2010} for details.). In this case, the coefficients $c_4$ and $c_6$ do not non-vanish,
  and therefore the singular fibers have one connected component and are of nodal type, that is, of type $I_1$ in the Kodaira
  classification. 
\end{proof}

The parameters $g$ that result in singular fibers are in general complex numbers. It may be interesting to investigate the spectral features of the \ibQRM{3} for the cases of $g \in \R_{\geq0}$ resulting in singular fibers.

\begin{ex}
  Let us consider an example of parameters that result on a singular curve $\mathcal{E}(g,\Delta)$. If we fix $\Delta=\tfrac{1}{4}$,
  the parameter $g=\tfrac{1}{2}$ gives the singular curve of nodal type
  \[
    y^2 = (x+\tfrac{29}{16})(x+\tfrac{5}{16})^2
  \]
  shown in Figure \ref{fig:nodal}.

  \begin{figure}[h]
    \centering
    \includegraphics[height=5cm]{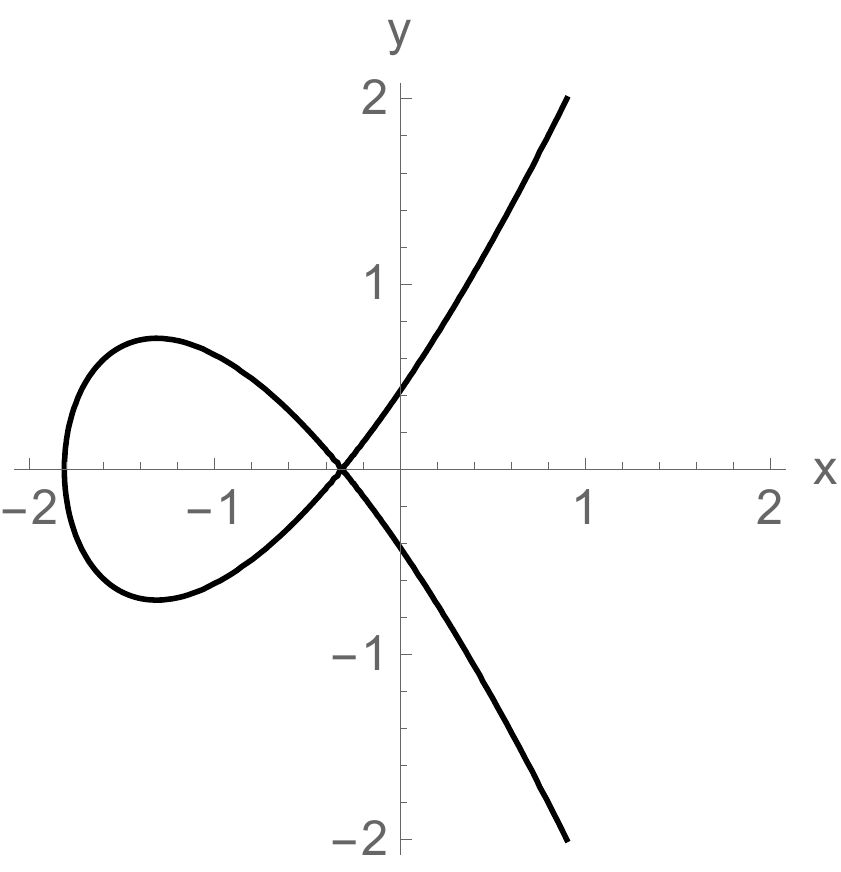}
    \caption{Singular fiber of nodal type for $g=\tfrac{1}{2}$ and $\Delta=\tfrac{1}{4}$.}
    \label{fig:nodal}
  \end{figure}
\end{ex}

It is an elementary result of the theory elliptic surfaces that the set of sections $\sigma : \mathbb{P}^1(k) \to \mathcal{S}_3$ form a group
isomorphic to the points of the generic fiber \eqref{eq:hyperelliptic3} considered as an elliptic curve over the function field $k(g)$
(see e.g. \S 3.4 of \cite{SS2010} or Proposition 3.10 of \cite{S1994}) thus allowing an interplay between the geometric and algebraic
aspects of the elliptic surface. In particular, we believe that sections $\sigma$ corresponding to parameters $g>0$ containing Juddian
points in $\mathcal{D}_\ell(g,\Delta)$ (i.e. degenerate points) are of particular importance for the understanding of Conjecture
\ref{conject:main}. For instance, the spectral points (elements from $\mathcal{D}_\ell(g,\Delta)$) in the corresponding elliptic curves may be
interpreted in terms of the group operation (see \S \ref{sec:dens-judd-solut}). A detailed study of the algebro-geometric aspects of
the (hyper)elliptic surfaces presented here is left to another occasion.

\begin{ex}

  We may consider the base field to be $\Q$ and study the arithmetic properties of the elliptic surfaces and curves.
  By considering special form of the parameters $A$, $B$ we can obtain a family of curves (depending of a single
  parameter) with fixed rational torsion. For this purpose, it is convenient to work in the generalized Weierstrass form
  given by \eqref{eq:genW}.

  Let us define the elliptic curve $\mathcal{E}(T)$ by the change of variable
  \begin{equation}
    \label{eq:changevarE}
       A = 2(T+6), \qquad B = 9 T (T + 12),
  \end{equation}
  in \eqref{eq:genW} with $T \in \Z$. It is immediate to verify that for $T \in \{12,-6,-24\}$, the curve $\mathcal{E}(T)$
  is singular. 

  Let $T \not\in \{12,-6,-24\}$, then after an admissible change of variable the elliptic curve $\mathcal{E}(T)$ has the form
  \[
    \mathcal{E}(T) : \, y^2 = (x+ 7 T (T+12)+240) (x+ T (7 T+100)+48) (x+ T (7 T+116)+432),
  \] 
  and thus, the order two elements are given (in affine form) by 
  \[
    P_1 =(-7 T (T+12)-240,0),\qquad  P_2 = (-T (7 T+100)-48,0), \qquad P_3 = (-T (7 T+116)-432,0),
  \]
  which, along with the point at infinity $\bm{O}$ form a subgroup $\Z_2 \oplus \Z_2$ of $\mathcal{E}(T)(\Q)_{\text{tor}}$.

  It is easy to verify that $\mathcal{E}(T)(\Q)_{\text{tor}} = \Z_2 \oplus \Z_2$ for the cases $T \not\equiv 0 \pmod{3}$. Indeed, since the
  discriminant is given by
  \[
    \rm{disc}(\mathcal{E}(T)) = 2^{26} (T-12)^2 (T+6)^2 (T+24)^2,
  \]
  for $T \not\equiv 0 \pmod{3}$ the curve $\mathcal{E}(T)$ has good reduction at $p=3$ resulting in the curve
  \[
    y^2 = x^3 + 2 T^2 x + 2T^4+T^6,
  \]
  defined in $\Z_3$. For $T \equiv \pm 1 \pmod{3}$ we have
  \[
    y^2 = x^3 - x,
  \]
  and we verify directly that $\# \mathcal{E}(T)(\Z_3) = 4$. By the Lutz-Nagell theorem (see e.g. Chapter 5 of \cite{K1992}),
  we have $\mathcal{E}(T)(\Q)_{\text{tor}} = \Z_2 \oplus \Z_2$.

  Similarly, if $T \equiv 0 \pmod{15}$ and $T \equiv 3 \pmod{15}$, we verify that $\rm{disc}(\mathcal{E}(T))$ has
  good reduction at $p=5$ and the resulting elliptic curves are given by
  \[
    y^2 = x^3 + x,
  \]
  and
  \[
    y^2 = x^3 - x^2 + 3 x,
  \]
  in $\Z_5$, respectively. In both cases we immediately verify that $\# \mathcal{E}(T)(\Z_5) = 4$.
  Other cases are dealt in a similar way.
   
  We also note that the change of variable \eqref{eq:changevarE} is not the only one with these properties.
  For instance, let $M,N \in \Z$ and suppose
  \[
    C(M,N) = 4 M^2 + 3 M N (4 + M) + 9 N^2 (4 + M),
  \]
  is a perfect square. Then, setting
  \[
    A' = \frac{\sqrt{C(M,N)}}{2}, \qquad B' = \frac{9}{4} N M (M + 3 N)
  \]
  with $N,M$ chosen so that $B \in \Z$, the elliptic curve $\mathcal{E}(M,N)$ with parameters $A,B$ is an elliptic curve
  (when it is non-singular) with rational torsion group
  \[
    \Z / 2\Z \oplus \Z / 2\Z.
  \]

  The family of curves $\mathcal{E}(T)$ above are obtained by taking
  \[
    M=T, \qquad N = 4.
  \]
  Note that $C(T,4)=16(T+6)^2)$ is a perfect square for any $T$.  We leave the detailed discussion of the arithmetics of the (hyper)elliptic
  curves appearing in the study of symmetry of the \ibQRM{\ell} for another occasion.
  
\end{ex}

\begin{rem}\label{rem:eta-invarinat}
In the discussion above, we have focused on the existence of degenerate points. In addition to this, we are also interested in the distribution of the sign of the spectrum of $J_\ell$ (see Figure \ref{fig:inverse}). It can be expected that those sign are almost equally distributed. Therefore, in particular, it is worth studying the naive notion of ``the number of positive eigenvalues minus the number of negative eigenvalues" of $J_\ell$ for studying a possible spectral asymmetry. In order to clarify the situation, we propose to introduce an analogue of the eta invariant that is defined for a self-adjoint elliptic differential operator on a compact manifold, initially introduced by Atiyah, Patodi and Singer (see \cite{Mu1995}). Precisely, the analogue of the eta invariant $\eta_\ell(0)=\eta_\ell(0; g,\Delta)$ for the \ibQRM{\ell}
may be defined using zeta regularization via the analytic continuation of the following Dirichlet series.
$$
\eta_\ell(s; g, \Delta)= \sum_{\lambda\not\in\Jker{\ell}}\frac{\sign(\mu_\lambda)}{|\mu_\lambda|^{s}}, \quad (\Re(s) \gg 0).
$$ 
Here the sum is over the eigenvalues $\lambda$ of $\HRabi{\ell}$ such that $\mu_\lambda^2=p_\ell(\lambda; g,\Delta)\ne0$. Notice that there is no contribution of the degenerate eigenvalues in this series. The analytic continuation can be discussed, for instance, by considering the following integral expression (Mellin transform) 
$$
\eta_\ell(s; g, \Delta)= \frac1{\Gamma\big(\frac{s+1}2\big)}\int_0^\infty t^{(s-1)/2} \Tr(J_\ell e^{-tJ_{\ell}^2})dt, \quad (\Re(s) \gg 0).
$$
If we take this way, we need to study the asymptotic expansion of $\Tr(J_\ell e^{-tJ_{\ell}^2})$ for 
$t\downarrow 0$ so we should explore a technique beyond the standard heat equation method (see e.g. \cite{D1996}) or the application of the heat kernel for studying spectral determinants of the AQRM (see \cite{RW2019, RW2021, R2020}).
In addition, we may expect to explore an alternative way to determine the distribution of the sign of $J_\ell$ using the continuity with
respect to $g$ as we discuss in \S\ref{sec:sign-eigenv-j_yell}.
\end{rem}

\subsection{Divisibility and degeneracy}
\label{sec:addit-remarks-degen}

As discussed in the previous subsections, the hidden symmetry of the \ibQRM{\ell} induces a geometric picture for the spectrum that may be interpreted as a resolution of singularities for the spectral curves at the degenerate (Juddian) points. Moreover, the main conjecture describes the relation between the symmetry and the degeneracy via the polynomial $p_{\ell}(x; g,\Delta)$, essentially determined by the divisibility relation  \eqref{eq:div} of constraint polynomials. 

The relation between divisibility of polynomials, degeneracy (intersection of curves on varieties) and resolution of singularities is not
unique to the \ibQRM{\ell} and similar situations have appeared in other contexts. For instance, in \cite{CZ2010}
(see also \cite{RTW2021} and the references therein) the degenerate integral points of curves $\mathcal{C}_i$ ($i=1,2$) lying
on a smooth projective surface $\mathcal{Y}$ are shown to be related to certain divisibility conditions of polynomials. Concretely, for
the blow-up $\pi : \mathcal{X} \to \mathcal{Y}$, along the intersection points $P_j (j=1,2,\ldots,M)$ of the two curves $\mathcal{C}_1$ and $\mathcal{C}_2$, the condition ``$\pi^{-1}(Q)$ is integral on $\mathcal{X}$ along the strict transform of $\mathcal{C}_1$'' for $P_j \neq Q\in \mathcal{Y}$ is related to a divisibility condition of polynomials defining the curves $\mathcal{C}_i$ ($i=1,2$) locally at the intersection points. 

Let us now return to the case of the \ibQRM{\ell}, more generally the AQRM, where we actually find a similar
    situation.
  Concretely, for fixed $\Delta>0$ and $N,\ell\ge 1$, we consider the curves $\mathcal{C}_1$ and $\mathcal{C}_2$ in $\mathbb{P}^2$ given by
  \begin{align*}
    \mathcal{C}_1 &= \{ (\epsilon,g)\in \mathbb{P}^2  \, : \cp{N,\frac{\epsilon}{2}}{N}((2g)^2,\Delta^2) = 0  \} \\
    \mathcal{C}_2 &= \{ (\epsilon,g)\in \mathbb{P}^2  \, : \cp{N+\ell,-\frac{\epsilon}{2}}{N+\ell}((2g)^2,\Delta^2) = 0  \}.
  \end{align*}
  In Figure \ref{fig:integrability} we show an example of the curves $\mathcal{C}_i$ ($i=1,2$) (cf. Example \ref{ex:cPoly}) in the
  $(\epsilon,g)$-plane for  $N=2, N+\ell=4$ and $\Delta=1$.
  \begin{figure}[h]
    \centering
    \includegraphics[height=5cm]{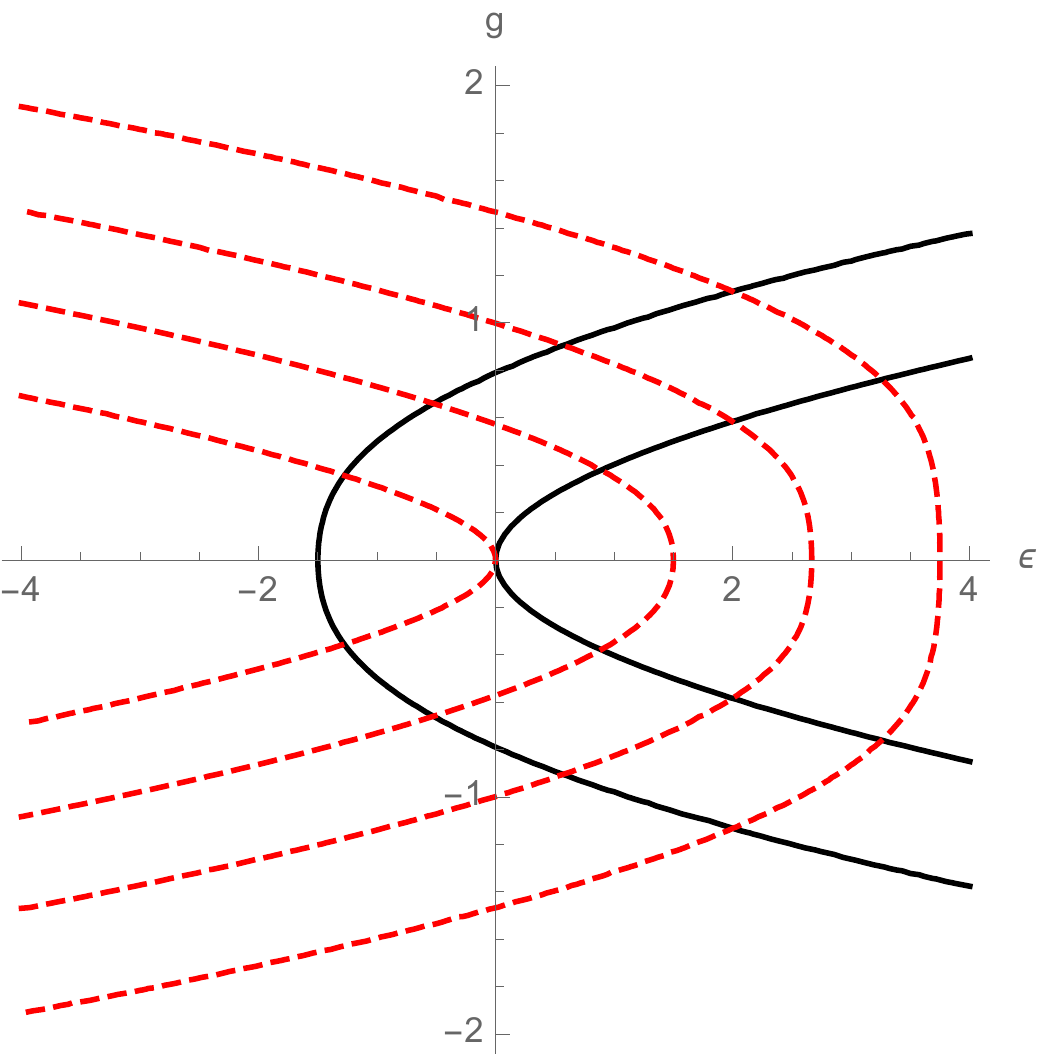}
    \caption{Curves $\mathcal{C}_1$ (black) and $\mathcal{C}_2$ (red) for $N=2, \ell=2$ and $\Delta=1$.}
    \label{fig:integrability}
  \end{figure}
  Denote the intersection points by $P_j \in \mathbb{P}^2$ ($j=1,2,\cdots,M$) and let $\pi: X \to \mathbb{P}^2$ be the blow-up along the
  intersection points $P_j$, that is
  \[
    X = \Bigg\{ (Q,L_j) \in \mathbb{P^2}\times \bigsqcup_{j} {\rm Gr}_{P_j}(1,2) : P_j,Q \in L_j  \Bigg\},
  \]
  where ${\rm Gr}_{P_j}(1,2)$ is the Grassmannian, the set of lines through the point $P_j$. Consider a point $Q \in \mathbb{P}^2$ with
  $Q \neq P_j$ ($j=1,2,\cdots,M$), and the inverse  $\pi^{-1}(Q) = ((\epsilon,g_0),L)$ along the strict transform of  $\mathcal{C}_1$ in $X$ (see e.g. \cite{H1997}).  We note that the
  results of \cite{KRW2017} on divisibility may be described as the equivalence 
  \[
    \epsilon = \ell \in \Z_{\ge 0}, g_0 \neq 0  \iff   \cp{N,\frac{\epsilon}{2}}{N}((2g_0)^2,\Delta^2) \text{ divides } 
    \cp{N+\ell,-\frac{\epsilon}{2}}{N+\ell}((2g_0)^2,\Delta^2)
  \]
  involving a divisibility condition for degeneracy and an integrability condition on the parameters.
 
  The framework described in \cite{CZ2010} may be helpful to obtain a further understanding of the degeneracy in the spectrum of the
  \ibQRM{\ell} and to provide a proof of the main conjecture, the excellent approximation of the first $\ell$ eigenvalue curves by the curves
  $p_\ell(x;g,\Delta)=0$ (for instance, by relating to
  known Diophantine approximation results) and the estimation of the density of the Juddian points (see Remark
  \ref{rem:densityCZ} in \S\ref{sec:dens-judd-solut}). In fact, since divisibility conditions are also known to be related to famous
  mathematical problems like the Vojta conjectures, we might expect such a study to give deep insights into the relation of the
  \ibQRM{\ell} spectrum and certain problems in arithmetic geometry.

  As a final remark for this section, we note that the point of view of considering the bias parameter $\epsilon$ as a parameter is also
  considered in \cite{BLZ2015} to study the nature of the degeneracies (i.e. conical intersections) in the so-called energy landscape
  of the AQRM.

  
\subsection{Supplementary discussion on the sign of the eigenvalues of $J_\ell$} 
\label{sec:sign-eigenv-j_yell}

Let $\mu$ be an eigenvalue of $J_\ell$. 
As we have noted before, the equation \eqref{eq:Jquad} determines the absolute value of $\mu$ as
\[
  |\mu| = \sqrt{p_{\ell}(\lambda; g ,\Delta)}
\]
for some eigenvalue $\lambda \in \Spec(\HRabi{\ell})$. In general, the computation of the sign of the eigenvalue $\mu$ must be done directly
from the action of $J_{\ell}$ on the corresponding eigenfunction. Since there is no general formula for $J_{\ell}$ (even a conjectural one),
the determination of the sign is in general a difficult problem (see Remark \ref{rem:eta-invarinat}). However, exploiting the continuity of the $J_{\ell}$ with respect to $g$,
the sign may be computed at $g=0$ and then extended along each of the spectral curves. Since the computation of the sign was used
for the 3D graphs on this section, we give a short description of the procedure. 

Let us consider the case $g=0$, then the Hamiltonian of the \ibQRM{\ell} is given by
\[
  H = a^\dag a+\Delta \sigma_z  + \frac{\ell}{2} \sigma_x,
\]
that is, it reduces to the Hamiltonian of a displaced quantum Harmonic oscillator. The Hamiltonian $H$ is considerably simpler than the \ibQRM{\ell} (and the AQRM) Hamiltonian. In particular, it commutes with $\mathcal{P}$ and with the matrices
\[
  \begin{bmatrix}
    \ell & \Delta \\
    \Delta & 0
  \end{bmatrix}, \qquad
  \begin{bmatrix}
    0 & \Delta \\
    \Delta & -\ell
  \end{bmatrix}.
\]
Consequently, the matrices in $\mathcal{P} \rm{Mat}_2(\C)$ commuting with $H$ form a continuous subgroup of the commutant. In order to
compute the sign of the eigenvalue of $J_{\ell}$ we need to compute $J_{\ell}$ when $g = 0$, which generates a discrete subgroup
of the commutant of $H$.

We now consider the diagonalization of $H$. By setting
\[
  C_\ell =
  \frac{1}{(2\Delta)^{\frac{1}{2}} (\Delta^2+\tfrac{\ell^2}{4})^{\frac{1}{4}}}
  \begin{bmatrix}
    \tfrac{\ell}{2} + \sqrt{\Delta^2 + \tfrac{\ell^2}{4}} & \Delta \\
    \tfrac{\ell}{2} - \sqrt{\Delta^2 + \tfrac{\ell^2}{4}} & \Delta 
  \end{bmatrix},
\]
we see that
\[
  \tilde{H} := C H C^{-1} = 
  \begin{bmatrix}
    a^\dag a + \sqrt{\Delta^2 + \tfrac{\ell^2}{4}} & 0 \\
    0 & a^\dag a -\sqrt{\Delta^2 + \tfrac{\ell^2}{4}}
  \end{bmatrix}.
\]
The spectrum of $\tilde{H}$  is
\[
  \Spec(\tilde{H}) =  \left\{ n \pm \sqrt{\Delta^2 + \tfrac{\ell^2}{4}} \, : \, n \ge 0 \right\},
\]
with  corresponding eigenfunctions in the $L^2(\R)$ realization, given by
\[
  \psi_{n,+}(x) =
  \begin{bmatrix}
    H_n(x) \\
    0
  \end{bmatrix},
  \qquad
  \psi_{n,-}(x) =
  \begin{bmatrix}
    0 \\
    H_n(x)
  \end{bmatrix},
\]
where $H_n(x)$ is the $n$-th Hermite function.

For the cases with explicitly computed $J_{\ell}$ (see Appendix \ref{sec:expl-expr-j_ell} and \cite{MBB2020, RBW2021}) we verify that
for $g=0$, we have
\begin{equation}
  \label{eq:Jg0}
  C_\ell J_{\ell} C_\ell^{-1} =
  \begin{cases}
   \mathcal{P}
  \begin{bmatrix}
    q_\ell(\Delta) & 0 \\
    0 & - q_\ell(\Delta)
  \end{bmatrix} & \text{ if } \ell \equiv 0 \pmod{2} \\
   & \\
   \mathcal{P}
  \begin{bmatrix}
    q_\ell(\Delta) & 0 \\
    0 &  \phantom{-}q_\ell(\Delta)
  \end{bmatrix} & \text{ if } \ell \equiv 1 \pmod{2}
  \end{cases}
\end{equation}
where $q_{\ell}(\Delta)$ is given by the non-negative root $\sqrt{p_{\ell}(x; 0,\Delta)}$ in both cases.

By the parity of the Hermite functions and the form of the matrices for each $\ell$, the eigenvalues of $J_{\ell}$ corresponding to
$\psi_{n,\pm}$ are then immediately seen to be of the form

\[
  \mu_{n,\pm} =
  \begin{cases}
    \pm (-1)^n q_{\ell}(\Delta) & \text{ if } \ell \equiv 0 \pmod{2} \\
     (-1)^n q_{\ell}(\Delta) & \text{ if }  \ell \equiv 1 \pmod{2}
  \end{cases}.
\]

Finally, the sign of the eigenvalue $\mu_\lambda$ is then extended to general $g>0$ along each spectral curve by continuity.
 
The procedure outlined here can be used for any $\ell\in \Z_{\ge0}$, and the shape of the matrices given in \eqref{eq:Jg0} is
expected to hold for all $\ell \geq0$, but we leave the proof for another occasion.
  

\section{Remarks on the dense distribution of Juddian eigenvalues}
\label{sec:dens-judd-solut}

In \S\ref{sec:preliminary-results} we discussed the action of the operator $J_{\ell}$ on the eigenspaces $V_{\lambda}$ for a
degenerate eigenvalue $\lambda = N \pm \frac{\ell}{2} -g^2$ for fixed parameters $g,\Delta>0$. As we discussed in
\S \ref{sec:approximation}, the set $\mathcal{D}_{\ell}(g,\Delta)$ of tuples of joint eigenvalues $(\lambda,\mu_\lambda)$ of $H_{\ell}$ and $J_{\ell}$
lie in the curves
\[
  y^2 = p_{\ell}(x; g, \Delta).
\]

This picture then allows a purely geometric description of the degenerate eigenvalues of $H_{\ell}$. Namely, the spectrum of
$H_{\ell}$ has degenerate eigenvalues for the parameters $g,\Delta>0$ if and only if there are $(\lambda,\mu_\lambda) \neq (\lambda',\mu_{\lambda'}) \in \mathcal{D}_\ell(g,\Delta)$
such that $\mu_\lambda = - \mu_{\lambda'}$. In the case of $\ell=3$, this condition is equivalent to the existence of points $P \in \mathcal{D}_{\ell}(g,\Delta)$
with $-P \in \mathcal{D}_{\ell}(g,\Delta)$ and $P \neq -P$ in the usual group operation of the elliptic curve.
We illustrate the situation in Figure \ref{fig:inverse}, where the two points of $\mathcal{D}_\ell(g,\Delta)$ corresponding to degenerate points
lie in the same vertical line (shown in red). We note that in general it is difficult to expect non-trivial relations with respect to the
group operator involving arbitrary points of $\mathcal{D}_\ell(g,\Delta)$.
  
\begin{figure}[h]
  \centering
  \includegraphics[height=5cm]{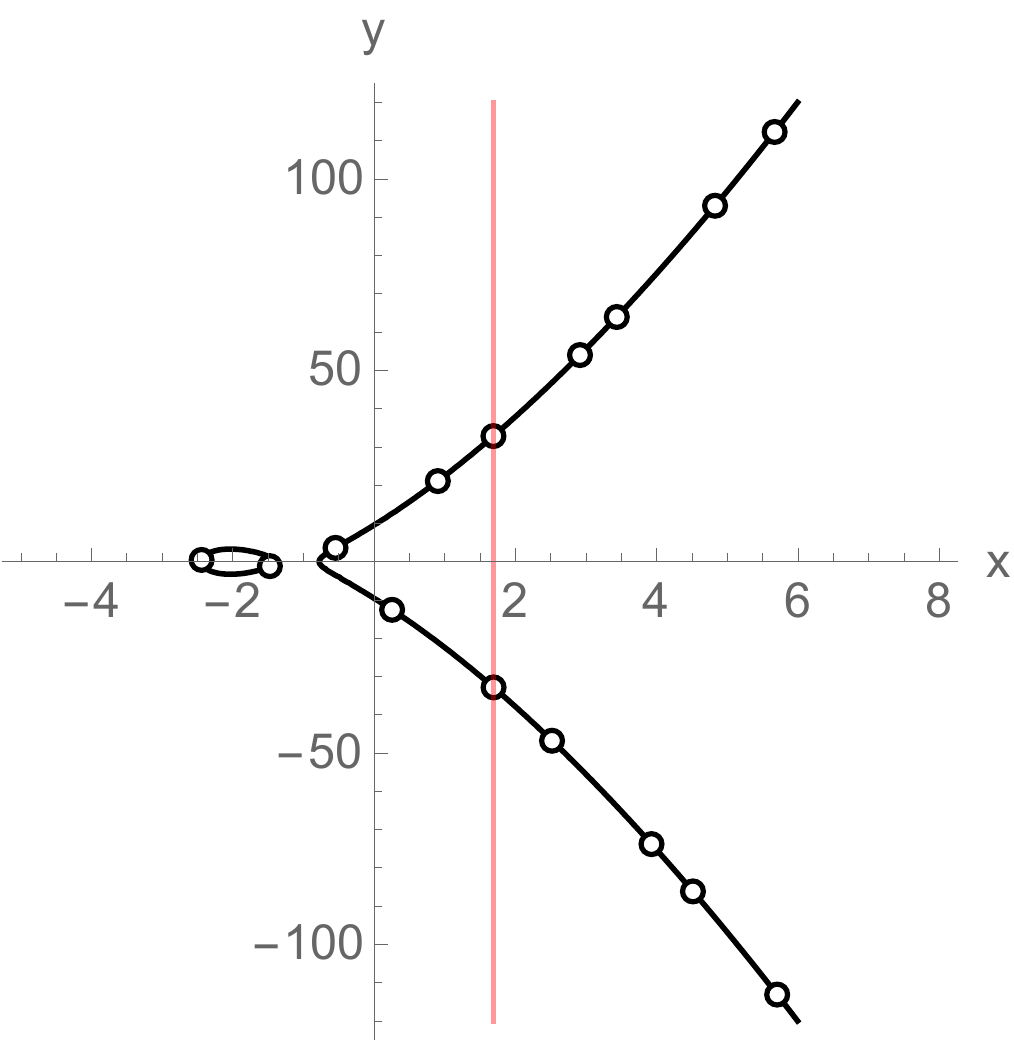}
  \caption{Points of $\mathcal{D}_3(g,\Delta)$ in the elliptic curve given by $y^2 = p_{3}(x; g ,\Delta)$ for $\Delta=\tfrac{3}{7}$ and $g \approx 0.899$.}
  \label{fig:inverse}
\end{figure}

This discussion raises the question of whether for all parameters $g,\Delta>0$ such degenerate solutions exist (for some $N\ge 0$), or if
there parameters such that the spectrum of $\HRabi{\ell}$ is multiplicity free.

There are in fact parameters that make the \ibQRM{\ell} spectrum multiplicity free. Concrete examples may be given, for instance,
if $\Delta \in \Q$, then by setting $(2g)^2 = \pi$ (or any other transcendental number), we see that $(2g)^2$ cannot be the root of $P_N^{(N, \frac{\ell}{2})}((2g)^2,\Delta^2)$ for any $N \geq 0$. Therefore, for these parameters $g,\Delta>0$, the spectrum of $\HRabi{\varepsilon}$ is non-degenerate.

It should be noted that, however, it is difficult to compute numerically examples of parameters that give multiplicity free spectrum.
Let us define the set
\[
  \Omega_{N}^{(\ell)} = \{ (g,\Delta)\in \R^2 \,|\, P_N^{(N, \frac{\ell}{2})}((2g)^2,\Delta^2) = 0  \}.
\]
The set $ \Omega_{N}^{(\ell)}$ is an algebraic curve defined by $P_N^{(N, \frac{\ell}{2})}((2g)^2,\Delta^2) = 0$ that is the union of $N$ oval-shaped
curves, as shown in Figure \ref{fig:ellipsesP}.

\begin{figure}[h]
  \centering
  \subfloat[$N=1$]{
    \includegraphics[height=4cm]{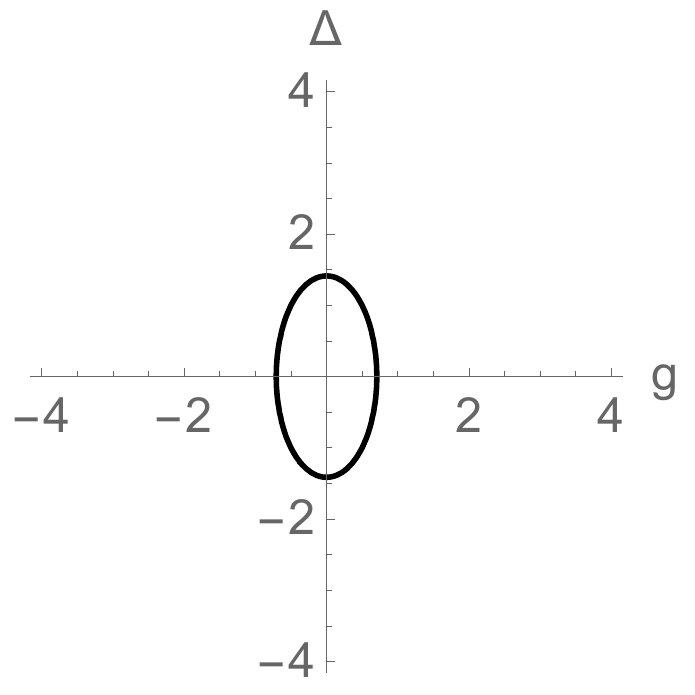}}
  ~
  \subfloat[$N=2$]{
    \includegraphics[height=4cm]{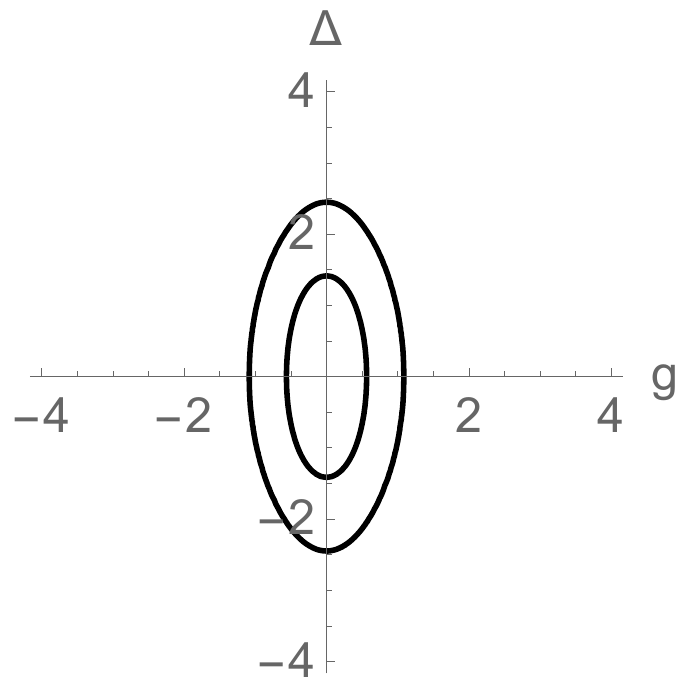}}
  ~
  \subfloat[$N=3$]{
    \includegraphics[height=4cm]{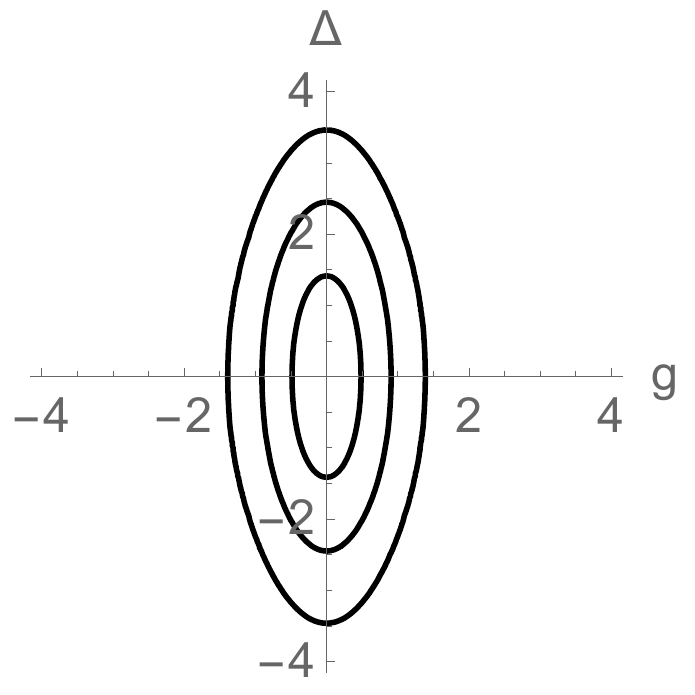}}
  \caption{Curves $\Omega_{N}^{(\ell)}$ in the $(g,\Delta)$-plane for $\ell=1$. Note that as $N$ grows larger the individual components
    become thiner with respect to the variable $g$.}
  \label{fig:ellipsesP}
\end{figure}

By the discussion above,  the union of sets $\Omega_{N}^{(\ell)}$ satisfies
\begin{equation}
  \label{eq:inclusion}
  \bigcup_{N \geq 0} \Omega_{N}^{(\ell)} \subsetneq \R^2.
\end{equation}
The fact that the set union  \eqref{eq:inclusion} is not equal to $\R^2$ can also be verified since the union is not an open set.
Thus, a natural question is to determine the image of the inclusion \eqref{eq:inclusion} in the usual $\R^2$ topology. 

\begin{conject} \label{conj:Dense}
The inclusion \eqref{eq:inclusion} is dense for $\ell \in \Z$. 
\end{conject}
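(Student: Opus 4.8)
The plan is to prove density in the open quadrant $\{(g,\Delta):g,\Delta>0\}$; since every polynomial involved depends only on $(2g)^2$ and $\Delta^2$, the union $\bigcup_{N}\Omega_{N}^{(\ell)}$ is invariant under the sign flips $g\mapsto -g$ and $\Delta\mapsto-\Delta$, so density in the first quadrant immediately gives density in all of $\R^2$. The statement is, essentially by the definition of $\Omega_{N}^{(\ell)}$, a question about the real zeros of the constraint polynomials $\cp{N,\frac{\ell}{2}}{N}((2g)^2,\Delta^2)$: the set $\bigcup_{N}\Omega_{N}^{(\ell)}$ is exactly the union of their real zero loci, which by \cite{KRW2017} mark the Juddian (degenerate) parameters, i.e. the couplings at which the two spectral curves symmetric about the baseline $E=N+\frac{\ell}{2}-g^2$ cross (the splitting being, by the generalized adiabatic approximation of Remark \ref{rem:gaa}, proportional to $\exp(-2g^2)\,\cp{N,\frac{\ell}{2}}{N}((2g)^2,\Delta^2)$). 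Density of the union is therefore equivalent to the assertion that the real zero sets of $\cp{N,\frac{\ell}{2}}{N}((2g)^2,\Delta^2)$ equidistribute throughout the quadrant as $N\to\infty$, and the whole difficulty is concentrated in that equidistribution statement.

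First I would reduce to a one-dimensional problem. Fixing $\Delta=\Delta_0>0$, the polynomial $u\mapsto\cp{N,\frac{\ell}{2}}{N}(u,\Delta_0^2)$ has degree $N$ and, by the reality statement recalled just before Lemma \ref{lem:positive}, all of its roots in $u$ are real; the positive ones furnish the Juddian $g$-values on the slice. The key structural fact is that these roots coincide with the (generalized) eigenvalues of the tridiagonal matrices $M_{N}^{(N,\frac{\ell}{2})}$ from the $\mathfrak{sl}_2$-picture, since $\det M_{N}^{(N,\frac{\ell}{2})}$ equals $\cp{N,\frac{\ell}{2}}{N}((2g)^2,\Delta_0^2)$ up to a constant. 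I would then study the normalized zero-counting measures $\mu_N=\tfrac1N\sum_{\cp{N,\frac{\ell}{2}}{N}(u_j,\Delta_0^2)=0}\delta_{u_j}$ and show, using the governing three-term recurrence (whose coefficients grow polynomially in the degree index and carry the extra factor $N-k+1$), that a suitably rescaled limit exists and has support filling $(0,\infty)$. This is an asymptotic-spectral-distribution problem for Jacobi matrices with varying, polynomially growing entries, for which potential-theoretic and moment methods are available; the desired output is that the union over $N$ of the root sets is dense in $g\in(0,\infty)$.

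To pass from slices to the full plane I would invoke continuity: the real curves $\Omega_{N}^{(\ell)}$ depend algebraically, hence continuously, on $\Delta$, so a root located near a target $(g_*,\Delta_*)$ on the slice $\Delta=\Delta_*$ — or, if that slice is exceptional, on a nearby slice — persists into any prescribed $\varepsilon$-ball about $(g_*,\Delta_*)$. Equivalently, and more uniformly, one can treat $\cp{N,\frac{\ell}{2}}{N}((2g)^2,\Delta^2)$ directly as a plane curve of growing degree and prove that its real zero current converges to a measure with full two-dimensional support, which packages the slice analysis and its $\Delta$-propagation at once. Either route reduces Conjecture \ref{conj:Dense} to a single equidistribution-with-full-support statement for the family of constraint polynomials.

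The hard part will be establishing that the limiting zero distribution genuinely has full support rather than concentrating on a proper subregion: one must show both that the smallest positive root tends to $0$ and that the largest grows without bound, with no asymptotic gaps opening in between, and this must hold uniformly enough in $\Delta$ to reach every target point. The non-standard feature that the recurrence coefficients depend on the terminal degree $N$ (through the factor $N-k+1$) means the $M_{N}^{(N,\frac{\ell}{2})}$ are not an orthogonal-polynomial family in the usual sense, so classical Nevai-type results do not apply directly and one likely needs a tailored asymptotic analysis — Plancherel--Rotach/WKB asymptotics for these specific polynomials, or a Riemann--Hilbert treatment. Controlling the two-variable asymptotics jointly in $u$ and $v$, which is precisely what makes the propagation to the whole quadrant work, is where the principal analytic obstacle lies; the arithmetic viewpoint of \S\ref{sec:addit-remarks-degen} may provide a complementary handle on the location of the crossings, but does not by itself deliver the required uniform density.
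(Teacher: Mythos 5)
You should first note that the statement you are trying to prove is precisely Conjecture \ref{conj:Dense} of the paper: the authors do \emph{not} prove it. The only support offered in \S\ref{sec:dens-judd-solut} is numerical evidence (the curves $\Omega_{N}^{(\ell)}$ for $N=1,\dots,12$ in Figure \ref{fig:ellipse} beginning to cover regions of the plane) together with Remark \ref{rem:densityCZ} pointing to an analogy with the density questions for integral points on surfaces in \cite{CZ2010}. So there is no proof in the paper to compare against, and your submission can only be judged as a stand-alone argument --- which it is not, by your own admission.

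The genuine gap is that every step you actually carry out is a routine reduction, while the entire content of the conjecture is deferred to a statement you leave unproven. The sign-flip symmetry (density in the open quadrant suffices), the slicing at fixed $\Delta=\Delta_0$ using the reality of the roots of $u\mapsto\cp{N,\frac{\ell}{2}}{N}(u,\Delta_0^2)$ recalled before Lemma \ref{lem:positive}, and the continuity/propagation in $\Delta$ are all fine but elementary; the hard assertion --- that the positive zeros of these polynomials become dense in $(0,\infty)$ as $N\to\infty$, uniformly enough in $\Delta_0$ to reach every target point --- is just a restatement of the conjecture itself, and for it you only name candidate techniques (Plancherel--Rotach asymptotics, Riemann--Hilbert analysis) without executing any of them. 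Two further points need repair even at the level of the reduction. First, your claim that density of the union is ``equivalent'' to equidistribution of the normalized zero-counting measures with full support is wrong in one direction: such equidistribution is sufficient for density but far from necessary, so you have silently replaced the conjecture by a strictly stronger one, which is a legitimate strategy only if flagged as such. Second, since each $\Omega_{N}^{(\ell)}$ is a union of $N$ \emph{bounded} ovals (see Figure \ref{fig:ellipsesP}), for a fixed large $\Delta_0$ all small $N$ contribute nothing to the slice; your argument therefore also requires showing that the ovals' extent in the $\Delta$-direction grows without bound with $N$ --- another unproven piece of the same support statement. As it stands, the proposal is a reasonable research program, consistent with (and somewhat orthogonal to) the arithmetic-geometric route the authors themselves suggest via \cite{CZ2010}, but it does not constitute a proof.
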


If the conjecture is true, even in the case that $\HRabi{\ell}$ is non-degenerate, the parameters $g,\Delta>0$ would be arbitrarily close to parameters $g',\Delta'>0$ such that the spectrum contains degeneracies. In Figure \ref{fig:ellipse} we illustrate the situation for the case $\varepsilon=0$ by showing the curves described by $\Omega_{N}^{(\ell)}$, for $N=1,2,\cdots,12$ in the $(g,\Delta)$-plane. We see that even with a limited number of curves regions of the $\R^2$-plane begin to appear covered, providing evidence for the conjecture.

\begin{figure}[h!]
  \centering
    \includegraphics[height=4cm]{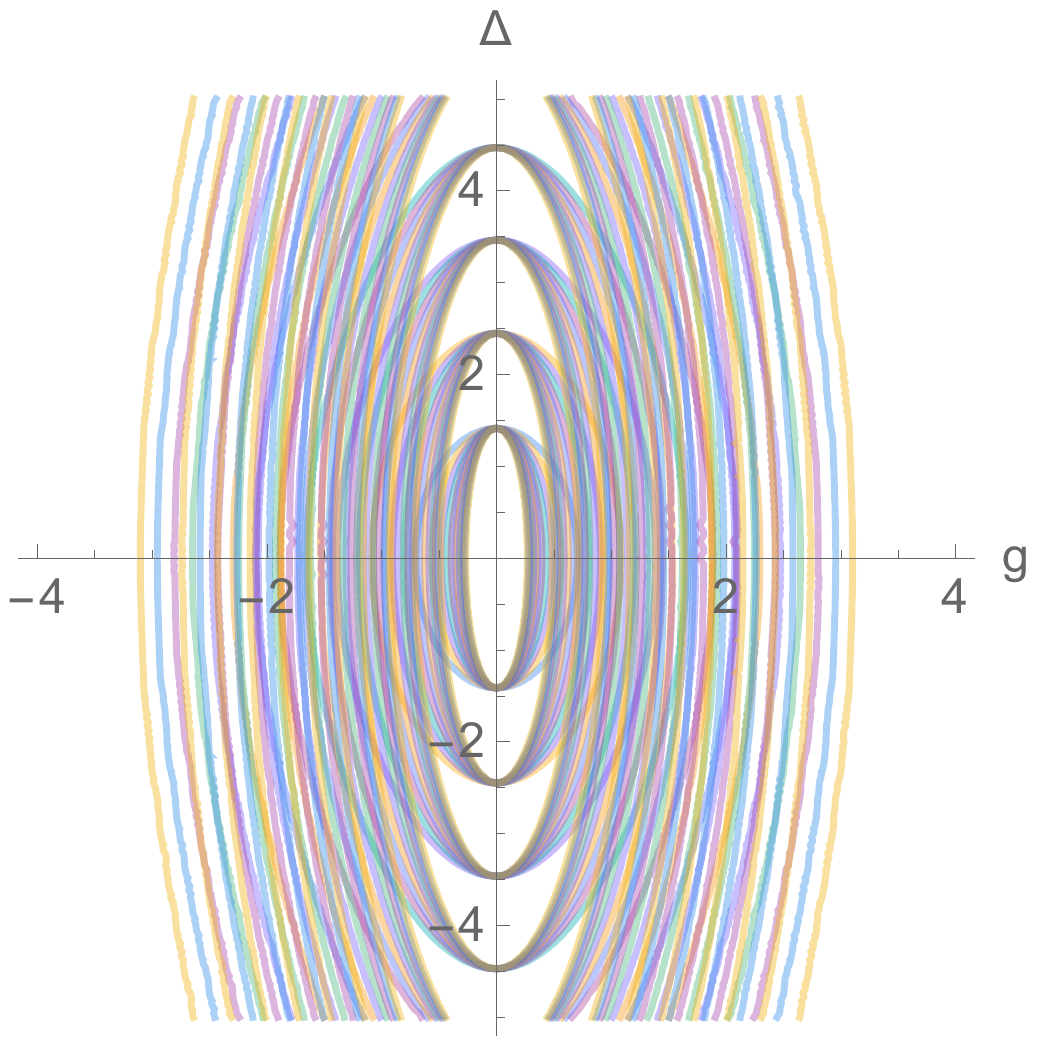}
  \caption{Algebraic curves $\Omega_{N}^{(\ell)}$ for $\ell=1$ and $N=1,2,\cdots,12$. Different colors correspond to different $N$.}
    \label{fig:ellipse}
\end{figure}

\begin{rem} \label{rem:densityCZ}
It is interesting to see the similarity between Conjecture \ref{conj:Dense} and the density discussion about integral points on surfaces studied in \S4 in \cite{CZ2010} in the setting described in \S\ref{sec:addit-remarks-degen}.
\end{rem}

\section*{Acknowledgements}
The authors would like to thank Daniel Braak for comments and suggestions in a preliminary version of this manuscript.
This work was partially supported by Grant-in-Aid for Scientific Research (C) No.20K03560, JSPS,
JST CREST Grant Number JPMJCR14D6, Japan.

\appendix
\section{Explicit expressions for $J_\ell$ for small values of $\ell$}
\label{sec:expl-expr-j_ell}

To complement the discussion of this paper and for reference to the reader, in this appendix we give the explicit expressions
for the operator $J_{\ell}$ for $0 \leq \ell \leq 6$. We note that the cases $0\leq \ell \leq 3$ were given first in \cite{LLMB2021a,RBW2021}.

To simplify the discussion, we consider the Hamiltonian \ibQRM{\ell} to be given in the equivalent form
\[
  \tilde{H}_\ell = a^{\dag}a + \Delta \sigma_x + g (a + a^{\dag}) \sigma_z + \frac{\ell}{2}\sigma_z,
\]
obtained from \eqref{eq:aH} by a unitary Cayley transform. Then, we write
\[
  J_{\ell} = \mathcal{P}
  \begin{bmatrix}
    \alpha^{(\ell)}(a,a^\dag) & \beta^{(\ell)}(a,a^\dag) \\
    \gamma^{(\ell)}(a,a^\dag) & \delta^{(\ell)}(a,a^\dag)
  \end{bmatrix},
\]
so that
\[
  [\tilde{H}_\ell,J_{\ell}]=0.
\]
We note that since $J_{\ell}$ is self-adjoint, we have $\gamma^{(\ell)}(a,a^\dag)=\beta^{(\ell)}(-a,-a^\dag)$ (see Proposition 4.3 of \cite{RBW2021}).

Next, we give the explicit values of $\alpha^{(\ell)}(a,a^\dag)$, $\beta^{(\ell)}(a,a^\dag)$ and $\beta^{(\ell)}(a,a^\dag)$ for $0\leq \ell \leq 6$. 
In addition, we give the expression for the polynomial $p_{\ell}(x; g, \Delta)$ of \eqref{eq:Jquad} and the value of $J_{\ell}$ for $g=0$.

\subsubsection*{Case $\ell=0$}

The coefficients of $J_{0}$ are given by
\begin{align*}
  \alpha^{(0)}(a,a^\dag) &= 0\\
  \beta^{(0)}(a,a^\dag) &= 1\\
  \delta^{(0)}(a,a^\dag) &= 0
\end{align*}
The polynomial $p_{0}(x; g, \Delta)$ is given by
\[
  p_{0}(x;g,\Delta) = 1.
\]
The expression for $g=0$ is given by
\[
  J_{0} = \mathcal{P}
  \begin{bmatrix}
    0 & 1 \\
    1 & 0
  \end{bmatrix}.
\]

\subsubsection*{Case $\ell=1$}

The coefficients of $J_{1}$ are given by
\begin{align*}
  \alpha^{(1)}(a,a^\dag) =&  \Delta, \\
  \beta^{(1)}(a,a^\dag) =&  2 g (g-a),\\
  \delta^{(1)}(a,a^\dag) =&  \Delta.
\end{align*}
The polynomial $p_{1}(x; g, \Delta)$ is given by
\[
  p_{1}(x;g,\Delta) =  4g^2x + (4g^4+2g^2+\Delta^2).
\]
The expression for $g=0$ is given by
\[
  J_{1} = \mathcal{P}
  \begin{bmatrix}
    \Delta & 0 \\
    0 & \Delta
  \end{bmatrix}.
\]

\subsubsection*{Case $\ell=2$}

The coefficients of $J_{2}$ are given by
\begin{align*}
  \alpha^{(2)}(a,a^\dag) =& -2 a g \Delta + 2 a^{\dag} g \Delta + 4 g^{2} \Delta + \Delta, \\
  \beta^{(2)}(a,a^\dag) =& 4 a^{2} g^{2} - 8 a g^{3} + 4 g^{4} + \Delta^{2}, \\
  \delta^{(2)}(a,a^\dag) =& -2 a g \Delta + 2 a^{\dag} g \Delta + 4 g^{2} \Delta - \Delta.
\end{align*}
The polynomial $p_{2}(x; g, \Delta)$ is given by
\[
  p_{2}(x;g,\Delta) = 16 g^4 x^2 + 8 g^2 \left(\Delta ^2+4 g^4+2 g^2\right) x + (\Delta ^4+\Delta ^2+8 \Delta ^2 g^4+4 \Delta ^2 g^2+16 g^8+16 g^6).
\]
The expression for $g=0$ is given by
\[
  J_{2} = \Delta \mathcal{P}
  \begin{bmatrix}
    1 & \Delta \\
    \Delta & -1
  \end{bmatrix}.
\]

\subsubsection*{Case $\ell=3$}

The coefficients of $J_{3}$ are given by
\begin{align*}
  \alpha^{(3)}(a,a^\dag) =& -2\Delta - 8g^2\Delta - 12g^4\Delta - \Delta^3 - 4g\Delta a^{\dag} - 12 g^3\Delta a^\dag - 4g^2\Delta (a^\dag)^2 + 4 g \Delta a+ 12 g^3 \Delta a  \\
                  &+ 4 g^2 \Delta a a^\dag - 4 g^2 \Delta a^2, \\
  \beta^{(3)}(a,a^\dag) =&  -8 g^6 - 6 g^2 \Delta^2 - 2 g \Delta^2 a^\dag + 24 g^5 a + 4 g \Delta^2 a - 24 g^4 a^2 + 8 g^3 a^3,\\
  \delta^{(3)}(a,a^\dag) =& -2 \Delta + 4 g^2 \Delta  - 12 g^4 \Delta - \Delta^3 + 4 g \Delta a^{\dag} - 12 g^3 \Delta a^\dag - 4 g^2 \Delta (a^\dag)^2 - 4 g \Delta a + 12 g^3\Delta a \\
                  &+ 4 g^2 \Delta a a^\dag - 4 g^2\Delta a^2.
\end{align*}
The polynomial $p_{3}(x; g, \Delta)$ is given by
\begin{align*}
  p_{3}(x;g,\Delta) =&  64 g^6 x^3 + 48 g^4 \left(\Delta ^2+4 g^4+2 g^2\right) x^2 + 4 g^2 (3 \Delta ^4+4 \Delta ^2+24 \Delta ^2 g^4+12 \Delta ^2 g^2+48 g^8  \\
               & +48 g^6-4 g^4) x + \Delta ^6+4 \Delta ^4+4 \Delta ^2+48 \Delta ^2 g^8+48 \Delta ^2 g^6+12 \Delta ^4 g^4+12 \Delta ^2 g^4+6 \Delta ^4 g^2 \\
               & +8 \Delta ^2 g^2 +64 g^{12}+96 g^{10}-16 g^8-24 g^6.
\end{align*}
The expression for $g=0$ is given by
\[
  J_{3} = \mathcal{P}
  \begin{bmatrix}
    -\Delta^3 - 2\Delta & 0 \\
    0 & -\Delta^3 - 2\Delta
  \end{bmatrix}.
\]

\subsubsection*{Case $\ell=4$}

The coefficients of $J_{4}$ are given by
\begin{align*}
  \alpha^{(4)}(a,a^\dag) =&  6 \Delta + 2 \Delta^3 + 24 \Delta g^2 + 8 \Delta^3 g^2 + 40 \Delta g^4 + 32 \Delta g^6 + 12 \Delta g (a^\dag) + 4 \Delta^3 g a^\dag  \\
         &+ 40 \Delta g^3 a^\dag + 48 \Delta g^5 a^\dag + 12 \Delta g^2 (a^\dag)^2 + 32 \Delta g^4 (a^\dag)^2 + 8 \Delta g^3 (a^\dag)^3 - 12 \Delta g a  \\
         &- 4 \Delta^3 g a - 40 \Delta g^3 a - 48 \Delta g^5 a - 16 \Delta g^2 a a^\dag - 32 \Delta g^4 a a^\dag  \\
         &- 8 \Delta g^3 a (a^\dag)^2 + 12 \Delta g^2 a^2 + 32 \Delta g^4 a^2 + 8 \Delta g^3 a^2 a^\dag - 8 \Delta g^3 a^3, \\
  \beta^{(4)}(a,a^\dag) =& 3\Delta^2 + \Delta^4 + 4\Delta^2g^2 + 24\Delta^2g^4 + 16g^8 + 16\Delta^2g^3a^\dag + 4\Delta^2g^2(a^\dag)^2 - 32\Delta^2g^3a - 64g^7a \\
                 &- 8\Delta^2g^2a a^\dag+ 12\Delta^2g^2a^2 + 96g^6a^2 - 64g^5a^3  + 16g^4a^4, \\
  \delta^{(4)}(a,a^\dag) =& -6 \Delta - 2 \Delta^3 + 8 \Delta g^2 + 8 \Delta^3 g^2 - 8 \Delta g^4 + 32 \Delta g^6 + 12 \Delta g a^\dag + 4 \Delta^3 g a^\dag  \\
                 &- 24 \Delta g^3 a^\dag + 48 \Delta g^5 a^\dag - 12 \Delta g^2 (a^\dag)^2 + 32 \Delta g^4 (a^\dag)^2 + 8 \Delta g^3 (a^\dag)^3 - 12 \Delta g a \\
                 &- 4 \Delta^3 g a + 24 \Delta g^3 a -  48 \Delta g^5 a + 16 \Delta g^2 a a^\dag - 32 \Delta g^4 a a^\dag - 8 \Delta g^3 a (a^\dag)^2 \\
                 &- 12 \Delta g^2 a^2 + 32 \Delta g^4 a^2 + 8 \Delta g^3 a^2 a^\dag  - 8 \Delta g^3 a^3.
\end{align*}
The polynomial $p_{4}(x; g, \Delta)$ is given by
\begin{align*}
  p_{4}(x;g,\Delta) =&  256 g^8 x^4 + 256 g^6 \left(\Delta ^2+4 g^4+2 g^2\right) x^3 + 32 g^4 (3 \Delta ^4+5 \Delta ^2+24 \Delta ^2 g^4+12 \Delta ^2 g^2+48 g^8\\
                &+48 g^6-8 g^4) x^2 + 16 g^2 (\Delta ^6+5 \Delta ^4+6 \Delta ^2+48 \Delta ^2 g^8+48 \Delta ^2 g^6+12 \Delta ^4 g^4+12 \Delta ^2 g^4\\
                & +6 \Delta ^4 g^2 +10 \Delta ^2 g^2 +64 g^{12}+96 g^{10}-32 g^8-32 g^6 ) x + \Delta ^8+10 \Delta ^6+33 \Delta ^4+36 \Delta ^2 \\
                &+256 \Delta ^2 g^{12} +384 \Delta ^2 g^{10}+96 \Delta ^4 g^8 +32 \Delta ^2 g^8+96 \Delta ^4 g^6+32 \Delta ^2 g^6+16 \Delta ^6 g^4+64 \Delta ^4 g^4 \\
                &+64 \Delta ^2 g^4+8 \Delta ^6 g^2 +40 \Delta ^4 g^2+48 \Delta ^2 g^2 +256 g^{16}+512 g^{14}-256 g^{12}-512 g^{10}.
\end{align*}
The expression for $g=0$ is given by
\[
  J_{4} = (\Delta^3 + 3 \Delta) \mathcal{P}
  \begin{bmatrix}
    2 & \Delta \\
    \Delta & -2
  \end{bmatrix}.
\]

\subsubsection*{Case $\ell=5$}

The coefficients of $J_{5}$ are given by
\begin{align*}
  \alpha^{(5)}(a,a^\dag) =& -24 y \Delta - 10 \Delta^3 - \Delta^5 - 96 \Delta g^2 - 28 \Delta^3 g^2 - 168 \Delta g^4 - 40 \Delta^3 g^4- 160 \Delta g^6 \\
         &- 80 \Delta g^8- 48 \Delta g a^\dag - 12 \Delta^3 g a^\dag - 168 \Delta g^3 a^\dag - 40 \Delta^3 g^3 a^\dag - 240 \Delta g^5 a^\dag - 160 \Delta g^7 a^\dag\\
         &- 48 \Delta g^2 (a^\dag)^2 - 12 \Delta^3 g^2 (a^\dag)^2 - 144 \Delta g^4 (a^\dag)^2 - 160 \Delta g^6 (a^\dag)^2 - 32 \Delta g^3 (a^\dag)^3 \\
        & - 80 \Delta g^5 (a^\dag)^3 - 16 \Delta g^4 (a^\dag)^4 + 48 \Delta g a + 12 \Delta^3 g a + 168 \Delta g^3 a + 40 \Delta^3 g^3 a \\
        &+ 240 \Delta g^5 a + 160 \Delta g^7 a + 72 \Delta g^2 a a^\dag + 16 \Delta^3 g^2 a a^\dag + 192 \Delta g^4 a a^\dag + 160 \Delta g^6 a a^\dag \\
        & + 48 \Delta g^3 a (a^\dag)^2 + 80 \Delta g^5 a (a^\dag)^2 + 16 \Delta g^4 a (a^\dag)^3 - 48 \Delta g^2 a^2 - 12 \Delta^3 g^2 a^2 \\
       &  - 144 \Delta g^4 a^2 - 160 \Delta g^6 a^2 - 48 \Delta g^3 a^2 a^\dag - 80 \Delta g^5 a^2 a^\dag - 16 \Delta g^4 a^2 (a^\dag)^2 \\
                 &  + 32 \Delta g^3 a^3 + 80 \Delta g^5 a^3 + 16 \Delta g^4 a^3 a^\dag - 16 \Delta g^4 a^4, \\
\beta^{(5)}(a,a^\dag) =&  -32 \Delta^2 g^2 - 10 \Delta^4 g^2 - 40 \Delta^2 g^4 - 80 \Delta^2 g^6 - 32 g^10 - 16 \Delta^2 g a^\dag \\
          &- 4 \Delta^4 g a^\dag - 16 \Delta^2 g^3 a^\dag - 80 \Delta^2 g^5 a^\dag - 40 \Delta^2 g^4 (a^\dag)^2 - 8 \Delta^2 g^3 (a^\dag)^3 \\
          &+ 24 \Delta^2 g a + 6 \Delta^4 g a + 24 \Delta^2 g^3 a + 160 \Delta^2 g^5 a + 160 g^9 a + 80 \Delta^2 g^4 a a^\dag\\
          &+ 16 \Delta^2 g^3 a (a^\dag)^2 - 120 \Delta^2 g^4 a^2 - 320 g^8 a^2 - 24 \Delta^2 g^3 a^2 a^\dag + 32 \Delta^2 g^3 a^3 \\
                 &+ 320 g^7 a^3 - 160 g^6 a^4 + 32 g^5 a^5, \\
\delta^{(5)}(a,a^\dag) =& -24 \Delta - 10 \Delta^3 - \Delta^5 + 24 \Delta g^2 + 12 \Delta^3 g^2 - 8 \Delta g^4 - 40 \Delta^3 g^4 - 80 \Delta g^8 + 48 \Delta g a^\dag  \\
         &+ 12 \Delta^3 g a^\dag - 72 \Delta g^3 a^\dag - 40 \Delta^3 g^3 a^\dag + 80 \Delta g^5 a^\dag - 160 \Delta g^7 a^\dag - 48 \Delta g^2 (a^\dag)^2 - 12 \Delta^3 g^2 (a^\dag)^2 \\
         &+ 96 \Delta g^4 (a^\dag)^2 - 160 \Delta g^6 (a^\dag)^2 + 32 \Delta g^3 (a^\dag)^3 - 80 \Delta g^5 (a^\dag)^3 - 16 \Delta g^4 (a^\dag)^4 - 48 \Delta g a - 12 \Delta^3 g a \\
         &+ 72 \Delta g^3 a + 40 \Delta^3 g^3 a - 80 \Delta g^5 a + 160 \Delta g^7 a + 72 \Delta g^2 a a^\dag + 16 \Delta^3 g^2 a a^\dag - 128 \Delta g^4 a a^\dag \\
         &+ 160 \Delta g^6 a a^\dag - 48 \Delta g^3 a (a^\dag)^2 + 80 \Delta g^5 a (a^\dag)^2 + 16 \Delta g^4 a (a^\dag)^3 - 48 \Delta g^2 a^2 - 12 \Delta^3 g^2 a^2 \\
         &+ 96 \Delta g^4 a^2 - 160 \Delta g^6 a^2 + 48 \Delta g^3 a^2 a^\dag - 80 \Delta g^5 a^2 a^\dag - 16 \Delta g^4 a^2 (a^\dag)^2 - 32 \Delta g^3 a^3 \\
         &+ 80 \Delta g^5 a^3 + 16 \Delta g^4 a^3 a^\dag - 16 \Delta g^4 a^4.
\end{align*}
The polynomial $p_{5}(x; g, \Delta)$ is given by
\begin{align*}
  p_{5}(x;g,\Delta) =& 1024 g^{10} x^5 + 1280 g^8 \left(\Delta ^2+4 g^4+2 g^2\right) x^4 + 640 g^6 (\Delta ^4+2 \Delta ^2+8 \Delta ^2 g^4+4 \Delta ^2 g^2+16 g^8 \\
               & +16 g^6-4 g^4) x^3 + 32 g^4 (5 \Delta ^6+30 \Delta ^4+42 \Delta ^2+240 \Delta ^2 g^8+240 \Delta ^2 g^6+60 \Delta ^4 g^4+60 \Delta ^2 g^4\\
               & +30 \Delta ^4 g^2  +60 \Delta ^2 g^2 +320 g^{12}+480 g^{10}-240 g^8-200 g^6) x^2 + 4 g^2 (5 \Delta ^8+60 \Delta ^6+232 \Delta ^4 \\
               & +288 \Delta ^2+1280 \Delta ^2 g^{12} +1920 \Delta ^2 g^{10}+480 \Delta ^4 g^8+480 \Delta ^4 g^6+160 \Delta ^2 g^6+80 \Delta ^6 g^4+360 \Delta ^4 g^4 \\
               &+400 \Delta ^2 g^4+40 \Delta ^6 g^2+240 \Delta ^4 g^2 +336 \Delta ^2 g^2+1280 g^{16}+2560 g^{14}-1920 g^{12}-3200 g^{10}  \\
               &+144 g^8) x + \Delta ^{10}+20 \Delta ^8+148 \Delta ^6+480 \Delta ^4+576 \Delta ^2 +1280 \Delta ^2 g^{16}+2560 \Delta ^2 g^{14}+640 \Delta ^4 g^{12} \\
               & -640 \Delta ^2 g^{12}+960 \Delta ^4 g^{10}-1280 \Delta ^2 g^{10}+160 \Delta ^6 g^8+480 \Delta ^4 g^8 +400 \Delta ^2 g^8+160 \Delta ^6 g^6 \\
               &+560 \Delta ^4 g^6+480 \Delta ^2 g^6+20 \Delta ^8 g^4+200 \Delta ^6 g^4+656 \Delta ^4 g^4+720 \Delta ^2 g^4+10 \Delta ^8 g^2 +120 \Delta ^6 g^2 \\
               &+464 \Delta ^4 g^2+576 \Delta ^2 g^2+1024 g^{20}+2560 g^{18}-2560 g^{16}-6400 g^{14}+576 g^{12}+1440 g^{10}.
\end{align*}
The expression for $g=0$ is given by
\[
  J_{5} = \mathcal{P}
  \begin{bmatrix}
    -\Delta^5 - 10 \Delta^3 - 24 \Delta & 0 \\
    0 & -\Delta^5 - 10 \Delta^3 - 24 \Delta
  \end{bmatrix}.
\]

\subsubsection*{Case $\ell=6$}

The coefficients of $J_{6}$ are given by
\begin{align*}
  \alpha^{(6)}(a,a^\dag) =& 120 \Delta + 39 \Delta^3 + 3 \Delta^5 + 480 \Delta g^2 + 160 \Delta^3 g^2 + 12 \Delta^5 g^2 + 864 \Delta g^4 + 216 \Delta^3 g^4 \\
         &+ 896 \Delta g^6 + 160 \Delta^3 g^6 + 560 \Delta g^8 + 192 \Delta g^10 + 240 \Delta g a^\dag + 78 \Delta^3 g a^\dag + 6 \Delta^5 g a^\dag \\
         &+ 864 \Delta g^3 a^\dag + 192 \Delta^3 g^3 a^\dag + 1344 \Delta g^5 a^\dag + 240 \Delta^3 g^5 a^\dag + 1120 \Delta g^7 a^\dag + 480 \Delta g^9 a^\dag \\
         &+ 240 \Delta g^2 (a^\dag)^2 + 48 \Delta^3 g^2 (a^\dag)^2 + 768 \Delta g^4 (a^\dag)^2 + 144 \Delta^3 g^4 (a^\dag)^2 + 1008 \Delta g^6 (a^\dag)^2 \\
         &+ 640 \Delta g^8 (a^\dag)^2 + 160 \Delta g^3 (a^\dag)^3 + 32 \Delta^3 g^3 (a^\dag)^3 + 448 \Delta g^5 (a^\dag)^3 + 480 \Delta g^7 (a^\dag)^3 + 80 \Delta g^4 (a^\dag)^4 \\
         &+ 192 \Delta g^6 (a^\dag)^4 + 32 \Delta g^5 (a^\dag)^5- 240 \Delta g a- 78 \Delta^3 g a - 6 \Delta^5 g a - 864 \Delta g^3 a - 192 \Delta^3 g^3 a \\
         &- 1344 \Delta g^5 a - 240 \Delta^3 g^5 a - 1120 \Delta g^7 a- 480 \Delta g^9 a - 384 \Delta g^2 a a^\dag - 72 \Delta^3 g^2 a a^\dag \\
         &- 1152 \Delta g^4 a a^\dag - 192 \Delta^3 g^4 a a^\dag - 1344 \Delta g^6 a a^\dag - 640 \Delta g^8 a a^\dag - 288 \Delta g^3 a (a^\dag)^2 - 48 \Delta^3 g^3 a (a^\dag)^2 \\
         &- 672 \Delta g^5 a (a^\dag)^2 - 480 \Delta g^7 a (a^\dag)^2 - 128 \Delta g^4 a (a^\dag)^3 - 192 \Delta g^6 a (a^\dag)^3 - 32 \Delta g^5 a (a^\dag)^4 + 240 \Delta g^2 a^2 \\
         &+ 48 \Delta^3 g^2 a^2 + 768 \Delta g^4 a^2 + 144 \Delta^3 g^4 a^2 + 1008 \Delta g^6 a^2+ 640 \Delta g^8 a^2 + 288 \Delta g^3 a^2 a^\dag \\
  &+ 48 \Delta^3 g^3 a^2 a^\dag + 672 \Delta g^5 a^2 a^\dag + 480 \Delta g^7 a^2 a^\dag + 144 \Delta g^4 a^2 (a^\dag)^2 + 192 \Delta g^6 a^2 (a^\dag)^2 + 32 \Delta g^5 a^2 (a^\dag)^3 \\
         &- 160 \Delta g^3 a^3 - 32 \Delta^3 g^3 a^3 - 448 \Delta g^5 a^3 - 480 \Delta g^7 a^3- 128 \Delta g^4 a^3 a^\dag - 192 \Delta g^6 a^3 a^\dag \\
         &- 32 \Delta g^5 a^3 (a^\dag)^2 + 80 \Delta g^4 a^4 + 192 \Delta g^6 a^4 + 32 \Delta g^5 a^4 a^\dag - 32 \Delta g^5 a^5, \\
\beta^{(6)}(a,a^\dag) =& 40 \Delta^2 + 13 \Delta^4 + \Delta^6 + 64 \Delta^2 g^2+ 12 \Delta^4 g^2 + 228 \Delta^2 g^4 + 60 \Delta^4 g^4 + 240 \Delta^2 g^6 \\
          &+ 240 \Delta^2 g^8 + 64 g^12+ 200 \Delta^2 g^3 a^\dag + 48 \Delta^4 g^3 a^\dag + 192 \Delta^2 g^5 a^\dag + 320 \Delta^2 g^7 a^\dag + 60 \Delta^2 g^2 (a^\dag)^2 \\
          &+ 12 \Delta^4 g^2 (a^\dag)^2 + 48 \Delta^2 g^4 (a^\dag)^2 + 240 \Delta^2 g^6 (a^\dag)^2 + 96 \Delta^2 g^5 (a^\dag)^3 + 16 \Delta^2 g^4 (a^\dag)^4 - 304 \Delta^2 g^3 a \\
          &- 72 \Delta^4 g^3 a - 288 \Delta^2 g^5 a - 640 \Delta^2 g^7 a - 384 g^11 a - 128 \Delta^2 g^2 a a^\dag - 24 \Delta^4 g^2 a a^\dag - 96 \Delta^2 g^4 a a^\dag \\
          &- 480 \Delta^2 g^6 a a^\dag - 192 \Delta^2 g^5 a (a^\dag)^2 - 32 \Delta^2 g^4 a (a^\dag)^3 + 120 \Delta^2 g^2 a^2 + 24 \Delta^4 g^2 a^2+ 96 \Delta^2 g^4 a^2 \\
          &+ 720 \Delta^2 g^6 a^2 + 960 g^10 a^2 + 288 \Delta^2 g^5 a^2 a^\dag + 48 \Delta^2 g^4 a^2 (a^\dag)^2 - 384 \Delta^2 g^5 a^3- 1280 g^9 a^3, \\
\delta^{(6)}(a,a^\dag) =&  -120 \Delta - 39 \Delta^3 - 3 \Delta^5 + 96 \Delta g^2+ 88 \Delta^3 g^2 + 12 \Delta^5 g^2 - 24 \Delta^3 g^4 - 64 \Delta g^6 \\
         &+ 160 \Delta^3 g^6 + 80 \Delta g^8 + 192 \Delta g^10 + 240 \Delta g a^\dag + 78 \Delta^3 g a^\dag + 6 \Delta^5 g a^\dag - 288 \Delta g^3 a^\dag \\
         &- 96 \Delta^3 g^3 a^\dag + 192 \Delta g^5 a^\dag + 240 \Delta^3 g^5 a^\dag - 160 \Delta g^7 a^\dag + 480 \Delta g^9 a^\dag - 240 \Delta g^2 (a^\dag)^2 \\
         &- 48 \Delta^3 g^2 (a^\dag)^2 + 384 \Delta g^4 (a^\dag)^2 + 144 \Delta^3 g^4 (a^\dag)^2 - 432 \Delta g^6 (a^\dag)^2 + 640 \Delta g^8 (a^\dag)^2 \\
         &+ 160 \Delta g^3 (a^\dag)^3 + 32 \Delta^3 g^3 (a^\dag)^3 - 320 \Delta g^5 (a^\dag)^3 + 480 \Delta g^7 (a^\dag)^3 - 80 \Delta g^4 (a^\dag)^4 \\
         &+ 192 \Delta g^6 (a^\dag)^4 + 32 \Delta g^5 (a^\dag)^5 - 240 \Delta g a - 78 \Delta^3 g a - 6 \Delta^5 g a + 288 \Delta g^3 a + 96 \Delta^3 g^3 a\\
         &- 192 \Delta g^5 a - 240 \Delta^3 g^5 a + 160 \Delta g^7 a - 480 \Delta g^9 a + 384 \Delta g^2 a a^\dag + 72 \Delta^3 g^2 a a^\dag \\
         &- 576 \Delta g^4 a a^\dag - 192 \Delta^3 g^4 a a^\dag + 576 \Delta g^6 a a^\dag - 640 \Delta g^8 a a^\dag - 288 \Delta g^3 a (a^\dag)^2 - 48 \Delta^3 g^3 a (a^\dag)^2 \\
         &+ 480 \Delta g^5 a (a^\dag)^2 - 480 \Delta g^7 a (a^\dag)^2 + 128 \Delta g^4 a (a^\dag)^3 - 192 \Delta g^6 a (a^\dag)^3 - 32 \Delta g^5 a (a^\dag)^4 \\
         &- 240 \Delta g^2 a^2 - 48 \Delta^3 g^2 a^2 + 384 \Delta g^4 a^2 + 144 \Delta^3 g^4 a^2 - 432 \Delta g^6 a^2 \\
         &+ 640 \Delta g^8 a^2 + 288 \Delta g^3 a^2 a^\dag + 48 \Delta^3 g^3 a^2 a^\dag - 480 \Delta g^5 a^2 a^\dag + 480 \Delta g^7 a^2 a^\dag \\
         &- 144 \Delta g^4 a^2 (a^\dag)^2 + 192 \Delta g^6 a^2 (a^\dag)^2 + 32 \Delta g^5 a^2 (a^\dag)^3 - 160 \Delta g^3 a^3 - 32 \Delta^3 g^3 a^3 \\
         &+ 320 \Delta g^5 a^3 - 480 \Delta g^7 a^3 + 128 \Delta g^4 a^3 a^\dag - 192 \Delta g^6 a^3 a^\dag - 32 \Delta g^5 a^3 (a^\dag)^2 - 80 \Delta g^4 a^4 \\
         &+ 192 \Delta g^6 a^4 + 32 \Delta g^5 a^4 a^\dag - 32 \Delta g^5 a^5.
\end{align*}
The polynomial $p_{6}(x; g, \Delta)$ is given by
\begin{align*}
  p_{6}(x;g,\Delta) =& 4096 g^{12} x^6 + 6144 g^{10} (\Delta ^2+4 g^4+2 g^2) x^5 + 1280 g^8 (3 \Delta ^4+7 \Delta ^2+24 \Delta ^2 g^4 \\
               &+12 \Delta ^2 g^2+48 g^8+48 g^6-16 g^4 ) x^4 + 256 g^6 (5 \Delta ^6+35 \Delta ^4+56 \Delta ^2+240 \Delta ^2 g^8  \\
               &+240 \Delta ^2 g^6+60 \Delta ^4 g^4+60 \Delta ^2 g^4+30 \Delta ^4 g^2+70 \Delta ^2 g^2+320 g^{12}+480 g^{10}-320 g^8 \\
               &-240 g^6) x^3 + 16 g^4 (15 \Delta ^8+210 \Delta ^6+931 \Delta ^4+1296 \Delta ^2+3840 \Delta ^2 g^{12}+5760 \Delta ^2 g^{10}\\
               &+1440 \Delta ^4 g^8-480 \Delta ^2 g^8+1440 \Delta ^4 g^6+480 \Delta ^2 g^6 +240 \Delta ^6 g^4+1200 \Delta ^4 g^4+1456 \Delta ^2 g^4 \\
               &+120 \Delta ^6 g^2+840 \Delta ^4 g^2+1344 \Delta ^2 g^2+3840 g^{16}+7680 g^{14}-7680 g^{12}-11520 g^{10} \\
               & +1024 g^8) x^2 + 8 g^2 (3 \Delta ^{10}+70 \Delta ^8+595 \Delta ^6+2176 \Delta ^4+2880 \Delta ^2+3840 \Delta ^2 g^{16} \\
               &+7680 \Delta ^2 g^{14}+1920 \Delta ^4 g^{12}-3200 \Delta ^2 g^{12}+2880 \Delta ^4 g^{10}-4800 \Delta ^2 g^{10}+480 \Delta ^6 g^8 \\
               &+1440 \Delta ^4 g^8+1472 \Delta ^2 g^8+480 \Delta ^6 g^6+1920 \Delta ^4 g^6+1792 \Delta ^2 g^6+60 \Delta ^8 g^4+680 \Delta ^6 g^4 \\
               &+2492 \Delta ^4 g^4+3008 \Delta ^2 g^4+30 \Delta ^8 g^2+420 \Delta ^6 g^2+1862 \Delta ^4 g^2+2592 \Delta ^2 g^2+3072 g^{20} \\
               &+7680 g^{18}-10240 g^{16}-23040 g^{14}+4096 g^{12}+6144 g^{10} ) x + \Delta ^{12}+35 \Delta ^{10}+483 \Delta ^8 \\
               &+3281 \Delta ^6+10960 \Delta ^4+14400 \Delta ^2+6144 \Delta ^2 g^{20}+15360 \Delta ^2 g^{18} +3840 \Delta ^4 g^{16}-11520 \Delta ^2 g^{16}\\
               &+7680 \Delta ^4 g^{14}-28160 \Delta ^2 g^{14}+1280 \Delta ^6 g^{12} +1280 \Delta ^4 g^{12} +2816 \Delta ^2 g^{12}+1920 \Delta ^6 g^{10} \\
               &+1920 \Delta ^4 g^{10}+5120 \Delta ^2 g^{10}+240 \Delta ^8 g^8+2080 \Delta ^6 g^8+6064 \Delta ^4 g^8+6400 \Delta ^2 g^8+240 \Delta ^8 g^6 \\
               &+2400 \Delta ^6 g^6+7728 \Delta ^4 g^6+8448 \Delta ^2 g^6+24 \Delta ^{10} g^4+480 \Delta ^8 g^4+3528 \Delta ^6 g^4+11392 \Delta ^4 g^4\\
               &+13824 \Delta ^2 g^4+12 \Delta ^{10} g^2+280 \Delta ^8 g^2+2380 \Delta ^6 g^2+8704 \Delta ^4 g^2 +11520 \Delta ^2 g^2 +4096 g^{24} \\
               &+12288 g^{22}-20480 g^{20}-61440 g^{18}+16384 g^{16}+49152 g^{14}.
\end{align*}
The expression for $g=0$ is given by
\[
  J_{6} = (\Delta^{5} + 13 \Delta^3 + 40 \Delta) \mathcal{P}
  \begin{bmatrix}
    3  & \Delta \\
     \Delta & -3
  \end{bmatrix}.
\]


\begin{flushleft}

\bigskip

 Cid Reyes-Bustos \par
 Department of Mathematical and Computing Science, School of Computing, \par
 Tokyo Institute of Technology \par
 2 Chome-12-1 Ookayama, Meguro, Tokyo 152-8552 JAPAN \par\par
 \texttt{reyes@c.titech.ac.jp}

 \bigskip

 Masato Wakayama \par
 Department of Mathematics, School of Science, \par
 Tokyo University of Science \par
 1-3 Kagurazaka, Shinjyuku-ku, Tokyo 162-8601 JAPAN \par\par
 \texttt{wakayama@rs.tus.ac.jp}

\end{flushleft}

\end{document}